%% file: main.tex
\newif\ifuselm
\uselmtrue
\RemoveFromHook{package/amsthm/after}[firstaid/aliascounter]
\documentclass[a4paper,UKenglish,cleveref, autoref, thm-restate,dvipsnames]{lipics-v2021}

\usepackage{arxiv-lipics-v2021}

\usepackage{todonotes}
\usepackage{xspace}
\usepackage{mathtools}
\usepackage{multicol}
\usepackage{stmaryrd}
\usepackage{amsthm}
\usepackage{amssymb}
\usepackage{nicefrac}
\usepackage{xparse}
\usepackage{array}
\usepackage{makecell}
\usepackage{graphicx}
\usepackage{extarrows}
\usetikzlibrary {automata,positioning} 

\usepackage[bb=boondox,bfbb]{mathalpha}

\usepackage[appendix=strip]{apxproof} % remove proofs
\newtheoremrep{apxtheorem}[theorem]{Theorem}
\newtheoremrep{apxlemma}[theorem]{Lemma}

\nolinenumbers

\hideLIPIcs 
\title{Symmetric two-counter machines%
} 

\author{Krzysztof Jaros}{University of Warsaw}{}
{}
{
Supported by the NCN grant 2024/55/B/ST6/01674.
}
\author{{\L}ukasz~Kami{\'n}ski}{University of Warsaw}{}
{https://orcid.org/0009-0004-1641-9049}
{Supported by the NCN grant 2024/55/B/ST6/01674.}
\author{Maja K\k{a}dzio{\l}ka}{University of Warsaw}{}
{https://orcid.org/0009-0001-1308-9492}
{Supported by the NCN grant 2024/55/B/ST6/01674.}
\author{S{\l}awomir Lasota}{University of Warsaw}{}
{https://orcid.org/0000-0001-8674-4470}
{Partially supported by ERC grant INFSYS, agreement no. 950398, and 
 by the NCN grant 2024/55/B/ST6/01674.}

\authorrunning{K.~Jaros, {\L}.~Kami{\'n}ski, M.~K\k{a}dzio{\l}ka, S.~Lasota} 
\Copyright{Krzysztof~Jaros, Maja~K\k{a}dzio{\l}ka, {\L}ukasz~Kami{\'n}ski, S{\l}awomir~Lasota} 
\ccsdesc[500]{Theory of computation~Computability}

\keywords{Counter machines, vector addition systems, halting problem, reachability problem} %TODO mandatory; please add comma-separated list of keywords

\category{} %optional, e.g. invited paper

\acknowledgements{
We thank Roland Guttenberg
and Vaios Michalakis for 
insightful discussions
on the topic of this research,
as well as Samira Bertsch
and Violet Promeuschel
for their helpful remarks which
helped shape the exposition
in this paper.
}

\input{macros}

\begin{document}

\maketitle

%TODO mandatory: add short abstract of the document
\begin{abstract}
Counter machines are a particularly simple, Turing-complete
model of computation.
We study the expressive power of nondeterministic
2-counter machines which are symmetric, in that their set of
update transitions is invariant under swapping of counters.
Our model is an extension of symmetric 2-\vass, in that
we allow modifying both counters within a single transition.
% On the other hand,
% we impose no restriction on zero-testing transitions.
%mei: skipping this sentence, as it disturbs the flow a bit too much,
%and we end up showing that this choice isn't important
As the main result we demonstrate a dramatic impact of symmetry:
we show that symmetric 2-counter machines
are not Turing-complete, by proving decidability of their 
reachability problem.
We conjecture that the decidability extends to 
machines with 3 or more counters that are symmetric, 
meaning the set of update transitions is
invariant under any permutation of counters. 
We leave this problem open for machines with 3 and more counters,
while identifying a frontier of the impact of symmetry: we prove
undecidability for 3-counter machines invariant under the
cyclic permutation group of order~3.
\end{abstract}

{
\small
\para{Disclosure}
Large language models have been used to assist with proofreading and checking the correctness of the text.
The authors remain fully responsible for the content of the paper, and are confident that its quality
was not negatively impacted by the inherent shortcomings of the technology.
}

\newpage

\input{prefix}
\input{suffix}

\input{variants}

\input{complexity}

%\newpage

\bibliography{bib}

\end{document}

%% file: macros.tex
\newcommand{\saturated}{saturated\xspace}

\newcommand{\Saturated}{Saturated\xspace}
\newcommand{\pre}{\mathrm{pre}}
\newcommand{\post}{\mathrm{post}}
\newcommand{\reachrel}{\Longrightarrow}
\newcommand{\stepp}[1]{\reachrel_{#1}}
\newcommand{\subseteqfin}{\subseteq_\text{\tiny fin}}
\newcommand{\minf}[1]{\textsc{min}(#1)}
\newcommand{\setto}[1]{\{1\ldots #1\}}
\newcommand{\setfromto}[2]{\{#1\ldots #2\}}
\newcommand{\xupdinstr}{\textbf{update}}
\newcommand{\xztinstr}{\textbf{zero-test}}
\newcommand{\updinstr}[1]{\xupdinstr(#1)}
\newcommand{\ztinstr}[1]{\xztinstr(#1)}
\newcommand{\mach}{\mathcal{M}}
\newcommand{\Instr}{\mathcal{I}}

\newcommand{\ie}{i.e.\xspace}
\newcommand{\eg}{e.g.\xspace}

\newcommand{\vass}{\text{\sc vass}\xspace}
\newcommand{\cm}{\text{\sc cm}\xspace}
\newcommand{\figleft}{\textsf{\textbf{Left:}}\xspace}
\newcommand{\figcenter}{\textsf{\textbf{Center:}}\xspace}
\newcommand{\figright}{\textsf{\textbf{Right:}}\xspace}

\newcommand{\cycg}[1]{\mathbb{Z}_{#1}}
\newcommand{\symg}[1]{\textsc{S}_{#1}}

\newcommand{\vr}[1]{\mathbf{#1}}%{\vect{#1}}
\newcommand{\setof}[2]{\{\,#1 \ \mid \ #2\,\}}
\newcommand{\Setof}[2]{\bigl\{\,#1 \ \mid \ #2\,\bigr\}}

\newcommand{\step}{\xlongrightarrow[]{\quad}}
\newcommand{\auxstep}{\dashrightarrow}
\newcommand\xstep[1]{\xrightarrow{#1}}
\newcommand{\Mywlog}{W.l.o.g.\xspace}
\newcommand{\mywlog}{w.l.o.g.\xspace}
\newcommand{\norm}[1]%{\textsc{norm}(#1)}
{\lVert#1\rVert}

\newcommand{\expf}[1]{\exp_{#1}}

\newcommand{\np}{{\sc NP}\xspace}
\newcommand{\nl}{{\sc NL}\xspace}
\newcommand{\pspace}{{\sc PSpace}\xspace}

\newcommand{\para}[1]{\vspace{-2    mm}
\subparagraph*{#1.}}

\newcommand{\Z}{\mathbb{Z}}
\newcommand{\N}{\mathbb{N}}

\newcommand{\V}{\mathcal{V}}

\newcommand{\mysize}[1]{|#1|}

\usetikzlibrary{automata,positioning,shadows}
\usetikzlibrary{calc, shapes, backgrounds,arrows}
\usetikzlibrary{chains, decorations.pathmorphing}
\usetikzlibrary{positioning,decorations.pathreplacing, automata}
\usetikzlibrary{shapes.misc}

\tikzset{every loop/.style={looseness=7}, >=latex}
\tikzset{every picture/.style={>=latex}}
\tikzstyle{PlayerMin}=[draw,circle,minimum size=4mm,inner sep=1.5pt]
\tikzstyle{PlayerMax}=[draw,rectangle,minimum size=7mm,inner sep=1.5pt]
\tikzstyle{Player}=[draw,diamond,minimum size=7mm,inner sep=1.5pt]
\tikzstyle{target}=[circle, minimum size=1mm,inner sep=-2pt]
\tikzstyle{PlayerMinmin}=[draw,circle, minimum size=1.5mm,inner sep=0pt]
\tikzstyle{PlayerMaxmin}=[draw,rectangle,minimum size=1.5mm,inner sep=0pt]
\tikzstyle{PlayerMinsmall}=[draw,circle, minimum size=3mm,inner sep=0pt]
\tikzstyle{PlayerMaxsmall}=[draw,rectangle,minimum size=3mm,inner sep=0pt]
\tikzstyle{leaf}=[draw,diamond,minimum size=7mm,inner sep=1.5pt]

\tikzstyle{strat} =[minimum width=0.1cm,line width=0.01mm,draw=none]
\tikzstyle{vecArrow} = [decoration={markings,mark=at position
1 with {\arrow[scale=.5,>=latex]{>}}}%,postaction={decorate}
]

\theoremstyle{definition}

%% file: prefix.tex
% %\begin{abstract}
% Counter machines are a particularly simple, Turing-complete
% model of computation.
% We study the expressive power of nondeterministic
% counter machines which are symmetric, namely their
% transitions are invariant under swapping of counters.
% As the main result we demonstrate that symmetric 2-counter machines
% are not Turing-complete, by proving decidability of their 
% reachability problem.
% %Our approach builds on specific properties of 2-counter
% %machines, and does not extend to 3 or more counters.
% We conjecture the decidability 
% for nondeterministic machines with arbitrarily many counters,
% as long as (their transitions) are invariant under all permutations
% of counters. 
% We leave the status for symmetric 3-counter machines open, but
% identify a bound on the impact of symmetry by proving
% undecidability for 3-counter machines invariant under the
% cyclic permutation group.
% %\end{abstract}

\bigskip

\section{Introduction}

Counter machines (\cm{s}) are a particularly simple
model of computation \cite{Minsky67}.
A counter machine has a number of registers, called \emph{counters},
each register storing a natural number.
Although the instruction set may vary, 
a common one consists of: counter increment, 
counter decrement, and counter zero-test. 
Many variants of the model have been introduced and studied,
and both deterministic and nondeterministic 
machines are widely studied in the literature 
(often under different names, for instance 
\emph{counter automata}).
Numerous syntactic restrictions of the model are also widely
studied \cite{HKOW09,Ibarra78,HopcroftPansiot,FS00}
since,
as shown already by Minsky in the early 1960s \cite{Minsky61},
in its full generality the model is Turing-complete:
deterministic 2-counter machines can faithfully simulate Turing machines
and have an undecidable halting problem. 

The purpose of this paper is to understand the impact of
\emph{symmetry} on the expressive power of \cm{s}. %counter machines.
We investigate the \emph{nondeterministic} model, where
a machine may choose between different instructions executable
in the same state.

\para{Symmetric counter machines}
Imposing a symmetry condition on the transitions of a counter
machine is inherently sensitive to what can be expressed
within a single instruction.
As such, we specifically choose the set of machine instructions:
counter \emph{updates} that simultaneously modify all counters  
by corresponding deltas, and counter \emph{zero-tests}, which are 
executable only if a given counter has value zero.
Every \emph{transition} $(q,i,q')$ of a machine changes its state
from $q$ to $q'$,
and executes an attached instruction $i$.
Executing a transition in a given configuration constitutes a 
\emph{step} of a machine.
The model we investigate is thus essentially \vass
(vector addition systems with states) \cite{HopcroftPansiot}
extended with zero-tests, and
is clearly equivalent to standard \cm{s}. %counter machines.
In this paper we mostly focus on \cm{s} %counter machines 
that are 
\emph{symmetric}, by which we mean that the set of update transitions
is invariant under all permutations $\sigma$ of counters:
for every update transition $(q,i,q')$, the machine has also a transition
$(q,\sigma(i), q')$, where $\sigma$ acts on an update instruction $i$
by permuting the roles of the counters
(see \Cref{fig:example} for an example).
To stress this point, we choose to impose the invariance only on update transitions,
% (when the instruction $i$ is a counter update), 
but not on zero-test transitions ---
imposing the invariance (symmetry) 
also on zero-tests is not needed for our main 
decidability proof.

\begin{figure}[h]
    \centering
\scalebox{0.6}
{\begin{tikzpicture} [draw=black!70,
  fill=blue!20, node distance = 5cm, on grid, auto,
  every initial by arrow/.style = {thin}]
% State q0 
\node (q0) [state, scale=1.2, fill=blue!20, initial text = {}] {$p$};
% State q1  
\node (q1) [state, scale=1.2, fill=blue!20, right = of q0] {$q$};
% Arrows
\path [-stealth, thick]
    (q0) edge[bend left] node {\small{$\ztinstr 1$}}  (q1)
    (q1) edge[bend left] node {\small$\updinstr{1,1}$}  (q0)
    (q0) edge [loop above] node[above left] {\small$\updinstr{-1, 3}$}()
    (q0) edge [loop below] node[below left] {\small$\updinstr{3, -1}$}()
    (q1) edge [loop below] node[below right] {\small$\updinstr{1, -1}$}()
    (q1) edge [loop above] node[above right] {\small$\updinstr{-1, 1}$}();
\end{tikzpicture}}
\caption{
    An example symmetric 2-counter machine with two control states and
    six transitions, including five symmetric update transitions
    (invariant under the swap of counters) and one zero-test transition.
    Due to the symmetry condition, \eg the $(p, \updinstr{-1, 3}, p)$ transition
    also forces $(p, \updinstr{3, -1}, p)$ to exist. In contrast, we permit
    $\ztinstr 1$ without a corresponding $\ztinstr 2$ transition.
    Also notice the $\updinstr{1,1}$ transition, which is self-symmetric.
}
\label{fig:example}
\end{figure}

\vspace{-3mm}

\para{Contribution}
The initial motivation of this study is the immense impact of symmetry
on the complexity of the \vass reachability problem:
while the problem is {\sc Ackermann}-complete in general \cite{LS19,CO21,L21},
the complexity drops to
only \pspace-complete for symmetric \vass \cite{KL25}.
As our main result, we confirm the immense drop in case of 
2-\cm{s}
(in this case, symmetry amounts to invariance under all swaps of 
counters).
We prove decidability of the configuration-to-configuration reachability
problem, where one asks if a given machine has a sequence of steps
starting in a given source configuration and ending in
a given target one:

\begin{theorem} \label{thm:main}
The reachability problem is decidable for symmetric 2-counter machines.
\end{theorem}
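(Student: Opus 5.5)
We plan to reduce reachability in a symmetric 2-counter machine $\mach$ to finitely many reachability questions in symmetric 2-\vass, which are decidable by \cite{KL25}, together with an analysis of the zero-tests. First we normalise the run. Every run from the source to the target splits into segments separated by zero-test transitions. Inside a segment, only update transitions are used, and these form a symmetric 2-\vass $\V$ (the update part of $\mach$). The difficulty is that a run may pass through zero-tests unboundedly often. So the heart of the proof is to show that, w.l.o.g., either only boundedly many zero-tests are needed, or the configurations at which zero-tests are performed can be summarised by a computable finite object.

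\textbf{Zero-test configurations.} Configurations where a zero-test is executed have the form $(q,0,n)$ or $(q,n,0)$, so they are essentially one-dimensional. For each pair of control states $p,q$ we will consider the binary relation
\[
R_{p,q}=\setof{(m,n)}{(p,0,m)\reachrel_{\V}(q,0,n)},
\]
and its analogues with the zero on the other coordinate. The same applies to source- and target-adjacent segments. Here we exploit symmetry. Whenever $\V$ can pump some update cycle with effect $(a,b)$, it can also pump the swapped cycle with effect $(b,a)$. Adding the two gives the diagonal effect $(a+b,a+b)$, and their difference lets both counters be transferred back and forth. We aim to prove that each $R_{p,q}$ is an effectively computable semilinear set, and in fact ultimately periodic in a uniform way. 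The intuition: once a counter is large, symmetric 2-\vass behave like $\Z$-\vass on that counter modulo a computable period. This is the 2-dimensional specialisation of the \pspace-type structural results for symmetric \vass in \cite{KL25}. We will likely reprove it directly by case analysis on the sign patterns of the cycles available in each strongly connected component.

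\textbf{Composing the segments.} With the $R_{p,q}$ semilinear, the zero-tests act as transitions of a one-counter-like system whose steps are the relations $R_{p,q}$ composed with the test edges. The transitive closure of a finite family of semilinear relations on $\N$ is not computable in general. However, ultimately periodic relations on one natural-number coordinate compose into ultimately periodic relations with bounded threshold and period. So the reachable set from the source at zero-test configurations will be the least fixpoint of a monotone operator on a finite lattice of "threshold-plus-residue" descriptions, and we will compute it by saturation. Reachability then reduces to checking whether some reachable zero-test configuration reaches the target within $\V$, which is one more call to symmetric 2-\vass reachability \cite{KL25}.

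\textbf{Main obstacle.} We expect the hardest step to be establishing the uniform periodicity of the relations $R_{p,q}$, specifically a uniform bound on the threshold beyond which they are periodic. This matters most when one counter must start and end at zero while the other is large. Symmetry alone does not obviously let us transfer value without going negative. The approach we will try first is a pumping argument. Take a long run between two zero-test configurations and consider its projection to the sum $x+y$ and to the difference $x-y$. By symmetry, any cycle that is pumpable in the middle of the run can be complemented by its swapped copy, which keeps the run above zero. This should give a bound exponential in $\mysize{\mach}$.
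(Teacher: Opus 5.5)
Your plan has a genuine gap. It sits exactly where symmetry has to do its work: the claim that the segment relations are ``uniformly ultimately periodic'', and that their compositions stay inside a finite lattice of threshold-plus-residue descriptions, is false. So the saturation you describe has no termination argument.

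First, semilinearity of each $R_{p,q}$ needs no symmetry: it holds for every 2-\vass \cite{Blondin15}. General 2-\cm{s} therefore also have semilinear segment relations and are still Turing-complete, so the entire difficulty is the iteration. Two tiny symmetric machines refute your structural claim.
\begin{itemize}
\item Take transitions $(p,\updinstr{0,1},q)$, $(p,\updinstr{1,0},q)$, $(q,\ztinstr 1,p)$. Every segment is exactly $p(m)\reachrel p(m+1)$. The forward iterates $\{0\}\subsetneq\{0,1\}\subsetneq\cdots$ never stabilise, and the powers $\{(m,m+k)\}$ need bases of norm at least $k$.
\item Take the loops $\updinstr{-1,2},\updinstr{2,-1}$. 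A segment from $(m,0)$ to $(0,n)$ realises $\setof{(m,n)}{m\geq 1,\ n\geq 2m,\ n\equiv 2m \bmod 3}$, whose boundary has slope $2$. This relation is not translation-invariant along any diagonal, so symmetric 2-\vass do not ``behave like $\Z$-\vass'' on the large counter. It is not periodic in each coordinate with a uniform threshold either, and it doubles the thresholds of the sets it is applied to.
\end{itemize}

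What is missing is the one-sided pumping property the paper proves, and the decomposition built around it.
\begin{itemize}
\item Target pumping: there are computable $C,D$ such that $q(a)\reachrel q'(a')$ with $a'>a+C$ implies $q(a)\reachrel q'(a'+D)$. Dually, source pumping holds for segments decreasing by more than $C$.
\item Hence images under $\stepp +$ and inverse images under $\stepp -$ are $D$-saturated, i.e.\ upward closed modulo $D$. Non-decreasing chains of $D$-saturated sets stabilise because their $\minf{\cdot}$ representations are well-founded. This is an ascending-chain argument, not a finite lattice.
\item Forward images under $\stepp -$ and under $\stepp 0$ have no such property. This is why the paper splits runs as $\reachrel^*=\Phi_1\circ\Phi_2\circ\Phi_3$.
\item $\Phi_3\circ\{t\}$ is saturated backward from the target.
\item $\{s\}\circ\Phi_1$ is saturated forward. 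Its inner $(\stepp 0\cup\stepp -)^*$ part is evaluated through the backward computation, plus an exhaustive search for the least base in each residue class.
\item The bounded-difference part $\Phi_2=\stepp 0^*$ admits neither pumping. It is shown semilinear by simulating it with a 1-\cm, which works because $\stepp 0$ is essentially one-dimensional, and then applying Parikh's theorem.
\end{itemize}

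Finally, your pumping heuristic of complementing a cycle by its transpose covers only one case of the target-pumping proof. When every sum-increasing cycle has effect $(+,+)$ or $(+,0)$, the transpose does not in general yield a fireable vertical pump. One must instead pair a sum-increasing cycle with a later, less steep, horizontally negative cycle, and showing that such a cycle exists requires a separate invariant-line argument.
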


Symmetric two-counter machines are thus not Turing-complete.
We find this result quite surprising, as it applies to a model
with restricted update instructions but unrestricted zero-test instructions.
% 
% \begin{remark}
% In the proof we do not need symmetry of zero-test transitions,
% only symmetry of updating transitions.
Interestingly, the converse restriction, namely imposing symmetry on
zero-test transitions but not on update transitions, remains 
undecidable.
% \end{remark}

\smallskip

It is natural to consider invariance under subgroups
of permutations: for every subgroup $G\leq \symg d$
one can consider the class of \emph{$G$-counter machines}
($G$-\cm{s}), namely 
$d$-counter machines invariant under all permutations from $G$.
On one end, 
when $G=\symg d$, the general definition instantiates to 
symmetric $d$-\cm{s}.
On the opposite end, when $G$ is the trivial group one gets classical counter
machines.
Between the two extremes there is a landscape of models, 
whose expressive power is related according to the permutation subgroup relation:
whenever $G\leq H$, every $H$-\cm is automatically a
$G$-\cm.
For instance, every $\symg 3$-\cm is automatically
a $\cycg 3$-\cm, for the cyclic 
permutation group $\cycg 3 \leq \symg 3$.
As an additional result, we identify a bound on the impact of symmetry,
already for 3-counter machines, namely we prove undecidability 
for $\cycg 3$-counter machines
(even under an additional assumption that zero-tests are \emph{symmetric}):

\begin{theorem} \label{thm:undecid}
The reachability problem  is undecidable for $\cycg d$-counter machines,
for all $d \geq 3$.
\end{theorem}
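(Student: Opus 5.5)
We will reduce from reachability for ordinary (non-symmetric) 2-counter machines, which is undecidable \cite{Minsky61}. Given a 2-\cm $\mach$ with counters $x_1,x_2$, we build a $\cycg d$-\cm $\mach'$ on $d\geq 3$ counters. The key idea is to let the first three counters play the roles of $x_1$, $x_2$ and a guard counter $g$ that must be zero. Cyclic invariance then forces every update to appear together with its two rotations. The third counter (and any counters $4,\ldots,d$) will be zero-tested often enough that the unwanted rotations are harmless.

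\textbf{Construction.} Each update transition $(q,\updinstr{a,b},q')$ of $\mach$ is replaced by a gadget. The gadget first fires $\updinstr{a,b,0,\ldots,0}$, and $\cycg d$-invariance adds all its cyclic shifts. It then passes through zero-tests of counters $3,\ldots,d$ before entering $q'$.

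If a rotated copy is taken with $(a,b)\neq(0,0)$, some nonzero entry lands on a counter in $\{3,\ldots,d\}$, or onto counters $1,2$ with the wrong pattern. We must make sure such a run is blocked or fails to reach the target. A nonnegative entry on a guard counter is killed by the subsequent zero-test of that counter. A negative entry on a guard counter is impossible, because guard counters are zero at that moment (they were just tested).

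For $d=3$ there is a catch: the rotation $(a,b,0)\mapsto(0,a,b)$ places $a$ on counter 2 and $b$ on counter 3. To handle this, we first normalise $\mach$ so that every update changes exactly one counter by $\pm1$. Then $(\pm1,0,0)$ has rotations $(0,\pm1,0)$ and $(0,0,\pm1)$.

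\textbf{Handling unwanted rotations with counters 1 and 2.} The rotation of $(\pm1,0,0)$ onto counter 2 does not touch a guard, so it is not filtered. This is the main obstacle.

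The fix is to store $x_1,x_2$ in a redundant encoding that the rotations cannot corrupt undetectably. Use counters $1,2,3$ in different states according to a phase. Before each $\mach$-step, the gadget records which counter is "active". It does so by zero-testing the other counters that should be zero, keeping only one of $x_1,x_2$ live at a time.

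Concretely, we simulate instead a machine with one counter $c$ encoding $2^{x_1}3^{x_2}$ and auxiliary counters, following Minsky's reduction to a 2-counter machine where one counter is always zero between macro-steps. Every elementary update then modifies a counter whose two companions are both zero-tested immediately before and after. Any rotated copy adds to a counter that is tested zero right after, and hence is blocked unless it is exactly the intended update.

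Zero-tests of $\mach$ map to zero-tests in $\mach'$. We can also add the rotated zero-tests if we want symmetric zero-tests, since the guarded design makes them vacuous or consistent. Counters $4,\ldots,d$ are simply tested zero at every step.

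\textbf{Correctness.} Runs of $\mach$ from $(q,\vr{v})$ to $(q',\vr{v'})$ correspond exactly to runs of $\mach'$ between the encoded configurations, with guards zero. This is an induction over gadget executions, where every deviating rotated transition gets stuck at the next zero-test.

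The delicate step is the invariant that "at every update, exactly one counter may be nonzero-affected, and all others affected by rotations are immediately tested". It requires the single-active-counter discipline. I expect most of the work to be verifying that this discipline can be imposed uniformly, e.g. by transferring values between counters $1\to2\to3$ with decrement/increment pairs whose rotations are all checked.
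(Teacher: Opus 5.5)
Your reduction has a genuine gap: it fails at exactly the obstacle you flag, and the proposed fix cannot work.

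The cyclic shifts of a single-counter update $\updinstr{\pm1,0,0}$ are $\updinstr{0,\pm1,0}$ and $\updinstr{0,0,\pm1}$. The first of these moves the update onto the other main counter. The resulting configuration cannot be told apart from the intended one by any zero-test unless counter 2 is zero at that moment. So your invariant, that both companions of the updated counter are zero-tested immediately before and after, forces every simulating run to have at most one nonzero counter at any time. Such a machine is a one-counter machine with zero-tests in disguise (the identity of the live counter can be kept in the control state), and reachability for those is decidable. Hence no such discipline can simulate a Minsky machine.

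Concretely, Minsky's $2^{x_1}3^{x_2}$ encoding keeps the auxiliary counter zero only \emph{between} macro-steps. Inside the multiply/divide loops both counters are positive. There a rotated copy (decrementing the auxiliary counter instead of $c$, or incrementing $c$ instead of the auxiliary one) passes every zero-test you can insert, and it corrupts the arithmetic. A useful sanity check: the $\cycg{d}$-orbit of a single-counter update equals its $\symg{d}$-orbit, so your $\mach'$ would be a fully symmetric $d$-\cm. If your construction worked, it would prove undecidability for symmetric 3-\cm{s}, which the paper leaves open and conjectures to be decidable.

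Your normalisation goes in the wrong direction. What is needed is an anchoring property of the update vectors: each must be nonzero on \emph{all} main-counter coordinates. Then every nontrivial rotation puts a nonzero entry on the auxiliary counter, which is zero at that moment. A negative entry there blocks the step, and a positive one is caught by the next zero-test.

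Since zero-tests in $\cycg{d}$-\cm{s} are unrestricted, your guard idea already suffices with this change. Simulate $\updinstr{x,y}$ by $\updinstr{x+N,y+N,0,\ldots,0}$, then zero-tests of counters $3,\ldots,d$, then $\updinstr{-N,-N,0,\ldots,0}$, with $N>\max(|x|,|y|)$. Zero-tests of $\mach$ can be kept as zero-tests of counter 1 or 2.

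The paper proves the stronger version, where zero-tests are symmetric ($\ztinstr{\ast}$). It encodes $q(a,b)$ as $q(3a+2,3b+2,0)$ and simulates $\updinstr{x,y}$ by $\updinstr{3x-1,3y-1,0}$ and then $\updinstr{1,1,0}$, each followed by $\ztinstr{\ast}$. Residues modulo $3$ guarantee that after any rotated update no main counter is zero while the auxiliary one is nonzero. Zero-tests of $\mach$ become $\updinstr{-2,-1,2}$, $\ztinstr{\ast}$, $\updinstr{2,1,-2}$, $\ztinstr{\ast}$. Your side remark that rotated zero-tests could be added harmlessly is unsupported without this kind of residue bookkeeping.
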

Both the above results hold in the hardest case: decidability with
unrestricted zero-tests, and undecidability with symmetric ones. 
This is unsurprising, as in \Cref{sec:variants} we show that assuming
symmetry of zero-tests does not weaken the model.

\para{Open questions}

A fundamental question is whether \Cref{thm:main} extends to larger numbers 
of counters. 
In particular, is the reachability problem still decidable
for symmetric 3-counter machines, whose update transitions are invariant
under all 6 permutations of the 3 counters?
Our decidability proof does not extend to 3 counters, 
as the argument relies on a
number of low-dimensional properties of reachable sets and reachability
relations.
We thus leave this intriguing question for further research, conjecturing
decidability for any number of counters:

\begin{conjecture}
The reachability problem  is decidable for symmetric counter machines.
\end{conjecture}

We focus on decidability, thus
leaving open the exact complexity of the problem for
symmetric 2-\cm{s}.
While we obtain an elementary upper bound for our decision 
procedure,
the only lower bound is
\pspace-hardness inherited from symmetric 2-\vass
% , if updates are encoded in binary 
\cite{KL25}.
It is possible that the reachability problem is \pspace-complete
in symmetric $d$-\cm{s} for every $d\geq 2$.
% \mei{I'm 90\% sure that this should be $d \geq 2$.}

Symmetric \cm{s} where the only update instructions are increments 
or decrements of a single counter, constitute a syntactic fragment of our 
setting. 
It remains to be established the 
expressive power of the fragment, and the complexity of 
its reachability problem.

\para{Outline} 
Once we define the model in \Cref{sec:model}, the subsequent
\Cref{sec:outline,sec:sl,sec:sl2,sec:targetpump} are devoted to the proof of
\Cref{thm:main}.
We proceed in a top-down manner, by consecutive reductions to more and
more basic properties. 
In \Cref{sec:outline} we provide the proof of \Cref{thm:main},
building on effective
semilinearity of certain reachability sets and relations in a
symmetric 2-\cm.
Then, in \Cref{sec:sl,sec:sl2}, we prove the effective semilinearity,
building on the central property of target (resp.~source) pumping
in symmetric 2-\vass, that is subsequently shown in \Cref{sec:targetpump}.
Technically, \Cref{sec:targetpump} is probably the most intricate part 
of the paper, and the only one directly exploiting the symmetry of 2-\cm{s}.
In \Cref{sec:undecid}, we turn to the proof of \Cref{thm:undecid},
followed by \Cref{sec:variants} where we discuss how syntactic restrictions of our model
affect its expressive power.
Finally, in \Cref{sec:complexity}, we provide an elementary upper bound
on the complexity of our decision procedure for symmetric 2-\cm{s}.

\para{Related research}

Identifying decidable restrictions of \cm{s}
is a very active line of research since the 1960s, for instance 
1-counter, reversal-bounded, or restricted 2-counter machines 
\cite{HKOW09,Ibarra78,FS00}.
A prominent example is \cm{s} without zero-tests, that is
vector addition systems with states
(\vass), shown to have decidable reachability in the early 1980s
\cite{Mayr81,Kosaraju82,Lambert92}.
The decidability extends to 
\cm{s}
with only one zero-tested counter \cite{Bonnet11},
or with nested zero-tests \cite{Reinhardt08,GCL25}.
On the other hand, most natural restrictions
preserve Turing-completeness, as long as no constraint is imposed
on zero-tests.
For instance, considering deterministic \cm{s}, 
the reachability problem is still undecidable
in the subclass of \emph{reversible} \cm{s},
where the step relation is reverse deterministic, and hence every step
may be uniquely undone \cite{Morita96}.

In the special case of 1 counter the symmetry requirement is void,
and the reachability problem is \np-complete 
if transition updates are encoded in binary, and \nl-complete 
in unary encoding \cite{HKOW09}.
The known complexity bounds separate symmetric 2-\cm{s} from 1-\cm{s}, 
the former being \pspace-hard while the latter \np-complete,
thus proving that symmetric 2-\cm{s} are not (easily) simulated
by 1-\cm{s}.

Restricting the use of zero-tests in 2-\cm{s} often
leads not only to decidability, but even 
to \pspace-completeness.
Assuming binary encoding,
the \pspace-completeness has been obtained for 2-\vass 
\cite{Blondin15,BEFGHLMT21}
(assuming unary encoding, we have \nl-completeness \cite{ELT16}),
but also for the further restriction to symmetric 2-\vass \cite{KL25},
and for a relaxation of 2-\vass to 
2-\cm{s} with just one zero-tested counter \cite{FLS18}.
In all these cases 
the reachability relation is effectively semilinear, which
is the key for obtaining a \pspace upper bound, but which 
fails already for 3-\vass 
\cite{HopcroftPansiot}.
We do not know the exact complexity of symmetric 2-\cm{s}.
% but it is possible that the complexity remains \pspace.
Neither do we know if the reachability relation of a
symmetric 2-\cm is semilinear, but we believe it is so.

Contrary to the above-mentioned restrictions of zero-tests, 
our model leaves the zero-tests unrestricted,
which rarely results in a decidable model
--- the primary example being a result of Dudenhefner
\cite{Dudenhefner22}, showing that
% which makes \Cref{thm:main} particularly interesting.
% We conclude by recalling another recently obtained decidability result that
% also does not assume any explicit restriction of zero-tests 
% similarly to \cite{Dudenhefner22} showing that
% , and hence seems as surprising as our \Cref{thm:main}.
% It amounts to observing that
% the above-mentioned Turing-completeness of
% reversible counter machines seems to be very fragile with respect
% to the choice of instruction set of a \cm.
% As shown in \cite{Dudenhefner22},
for certain 
choice of instruction set, the reversible 2-\cm{s}
become decidable, in contrast with \cite{Morita96}.

\section{Symmetric counter machines} \label{sec:model}

Let $d$ be a positive integer.
A \emph{$d$-counter machine} ($d$-\cm in short) 
$\mach$
is a finite-state automaton whose transitions
may perform two kinds of \emph{instructions},
namely \emph{update} and \emph{zero-test} instructions,
on $d$ \emph{counters}, namely nonnegative integer values
stored as part of the configuration.
Formally, we define such a machine as 
$\mach = (Q,T)$, where $Q$ is a finite set of control states,
$T\subseteqfin Q\times\Instr\times Q$ is a finite set of 
\emph{transitions} of $\mach$,
and the set $\Instr$ of \emph{instructions} is given by:
\[
\Instr \ = \ \setof{\updinstr{\vr w}}{\vr w\in\Z^d} \ \cup \ 
\setof{\ztinstr{i}}{i\in \setto d}.
\]
% \mei{it is ambiguous here whether $i \in d$ is supposed to mean
% $\{0, ..., d-1\}$ or $\{1, ..., d\}$. We use the more explicit notation
% below, maybe we should do so here as well?}
Execution of $\updinstr{\vr w}$ amounts to adding
the vector
$\vr w$ to the counters, and execution of 
$\ztinstr{i}$ amounts to checking that counter $i$
has value 0.
Formally, we define the semantics of $\mach$ by considering
its \emph{configurations} $(q,\vr v) \in Q\times \N^d$, written $q(\vr v)$, and
the step relation between configurations determined by 
transitions of $\mach$:
$q(\vr v) \step q'(\vr v')$ if 
\begin{itemize}
\item 
$\vr v' = \vr v + \vr w$,  for some  $(q,\updinstr{\vr w},q')\in T$, or
\item  
$\vr v' = \vr v$ and $\vr v[i]=0$, for some 
$(q,\ztinstr i, q')\in T$,
\end{itemize}
where by $\vr v[i]$ we denote the $i$-th coordinate of the vector $\vr v$.
% \mei{I chose to change the notation to $\vr v[i]$ to make the below
% definition of $\sigma(\vr w)$ read better; I'm open to discussion on this.
% One potential sticking point is that the $q[r]$ notation we later use in the
% 1-cm construction looks the same, but I'd say the contexts are sufficiently different.}

We use the term $d$-\vass (vector addition system with states)
for a $d$-\cm without zero-test instructions.
The \emph{underlying} \vass of $\mach$ is
obtained by removing from $\mach$ all $\xztinstr$ transitions.

The symmetric group $\symg d$ acts naturally on vectors in $\Z^d$
and hence also on update transitions:
for $\sigma\in \symg d$ and $t=(q,\updinstr{\vr w},q')\in T$, we define
$\sigma(t)=(q,\updinstr{\sigma(\vr w)},q')$, where $\sigma(\vr w)[\sigma(i)]=\vr w[i]$
for every $i\in\setto d$.
A $d$-\cm $\mach$ is \emph{symmetric} if its set of update transitions
is invariant under the natural action: for every $\sigma\in \symg d$ 
and $(q,\updinstr{\vr w},q')\in T$, we have $(q,\updinstr {\sigma (\vr w)},q')\in T$.
Importantly, 
symmetry is imposed on $\xupdinstr$ transitions, but not on 
$\xztinstr$ transitions of $\mach$.
% We assume, \mywlog, that every zero-test applies to a single counter.
% 
In the sequel
we mostly focus on symmetric 2-\cm{s},
namely 2-\cm{s} whose set of $\xupdinstr$ transitions is invariant under the
swap of counters.

\begin{remark}
    Likewise one defines $G$-\cm{s}, counter machines whose 
    set of $\xupdinstr$ 
    transitions is invariant under the action of a permutation subgroup
    $G\leq\symg d$.
    Note that this notion is contravariant: when $G$ is a permutation subgroup of $H$, every
    $H$-\cm is automatically a $G$-\cm, and not vice versa.
    We investigate this more general model only in 
    \Cref{sec:undecid,sec:variants}. 
\end{remark}
We define the \emph{reachability relation} (between configurations)
as the reflexive-transitive closure $\step^*$ of $\step$.
The \emph{reachability problem} asks, given a \cm $\mach$ and two
of its configurations, a source $s\in Q\times\N^d$ and a target $t\in Q\times\N^d$, 
whether 
$s \step^* t$
(in which case we say that $t$ is \emph{reachable} from $s$).
A reachability certificate, namely any sequence $\rho$ of configurations
\[
q_0(\vr v_0) \step q_1(\vr v_1) \step \cdots \step q_n(\vr v_n)
\]
with $s = q_0(\vr v_0)$ and $t = q_n(\vr v_n)$,
% \mei{The kerning here suggests the bracketing
% $c = (c_0 \step \cdots) = c'$, which becomes obviously wrong only once
% you try to actually parse the equation.
% Since the additional spacing was put in explicitly, I'd like to know
% why before fixing this.}
related by the step relation,
is said to be a \emph{run} from $s$ to $t$.

\para{Variants}
Two natural restrictions of the model are obtained by requiring that:
% \begin{itemize}
(1) the set of $\xztinstr$ transitions is also \emph{symmetric}, 
that is invariant under permutations of counters; 
(2) all $\xupdinstr$ transitions $(q,\updinstr{\vr w},q')$ are 
\emph{single-counter}, that is $\vr w$ is nonzero on at most one coordinate.
% \end{itemize}
% 
As we discuss later in \Cref{sec:variants}, the first 
restriction is equally powerful as the unrestricted model,
and imposing both restrictions makes the model essentially degenerate 
to 1-\vass, once the source and target vectors are zero.
We do not know consequences of 
imposing solely the second restriction.

% A 2-\vass $\V$ is symmetric if whenever it has a transition
% $q \xstep{(x_1,x_2)} q'$
% then it also has the transposed transition 
% $q \xstep{(x_2,x_1)} q'$.

% %
% A \emph{symmetric counter machine} with two counters (\sym 2-\cm) is a 
% counter machine whose \vass component (\ie, the \vass obtained by removing zero tests) is a symmetric \vass.
% Thus, transitions are symmetric,
% but zero tests need not be, and can distinguish the two counters.

% We prove that reachability in \sym 2-\cm is decidable.
% The proof consists of two independent sections:
% \cref{sec:targetpump}, which proves a general lemma about \sym 2-\cm{s},
% and \cref{sec:strategy,sec:proof}, which use that lemma (as a black-box)
% to prove the decidability result.

\section{Outline of decidability proof}\label{sec:outline}

% % As above, we use $\step$ to denote step relation of a 2-\vass, namely without zero tests.
% First some notational preprocessing.

We now outline the proof of \Cref{thm:main}.
To this end we
fix a given symmetric 2-\cm $\mach = (Q,T)$ and 
two configurations: a source $s$, and a target $t$ ($s,t \in Q\times\N^2$).

\para{Notations}
For relations $R \subseteq A \times B$,
$R' \subseteq B \times C$, we denote by
$R \circ R' \subseteq A \times C$
the diagrammatic-order composition of the relations.
In particular, whenever convenient we consider sets
$S \subseteq A$, $T \subseteq B$ as
relations from/to the singleton set, denoting
by $S \circ R \subseteq B$ and $R \circ T \subseteq C$ the (inverse) image of a set
under a relation $R$.
The reflexive transitive closure of $R\subseteq A\times A$ 
is denoted by $R^*$, as usual.

\para{Decomposing a run into segments}
For convenience, we perform on $\mach$ the following notational
preprocessing:
we replace every zero-test transition
$t=(q,\ztinstr i, q')\in T$ by a sequence of two transitions,
\[
(q, \ztinstr i, q_t) \qquad
(q_t, \updinstr {0,0}, q'),
\]
% q \xstep{\textsf{c}_i=0?} S_i^t \xstep{(0,0)} q',\]
where $q_t$ is a new state, unique to the transition $t$.
Notice that reachability in the new symmetric 2-\cm thus obtained 
is equivalent to the original one, 
and that it has special states $q_t$
which can only be entered if the counter zero-tested by 
transition $t$ is zero;
we call them \emph{zero-testing states}.
The zero-testing states have the property
that they ``remember'' which counter is zero 
when the control of the machine is at them,
and that zero-test transitions are exactly those entering
a zero-testing state.
A fragment (infix) of a run between two consecutive zero-testing
states consists of only update transitions, except for the very last
transition which is a zero-test.
Such a fragment, called \emph{segment} in the sequel,
may be considered as a 2-\vass run
that ends with one counter 0, and which counter is 0 is
determined by the ending, zero-testing state.
Formally, we will consider a segment to be a run in the
underlying symmetric 2-\vass $\V$,
leaving the final zero-test implicit in the fact
that the final configuration of a segment has the relevant
counter be zero.
% extended with a single zero-testing
% transition to a chosen ending zero-testing state 
% with no outgoing transitions.
% obtained by removing all zero-testing states except the starting and ending one.
%  transitions $(q, \ztinstr i, q^z)$ by
% the corresponding $(0,0)$ update one
% $(q, \updinstr{0,0}, q^z)$.

\begin{remark} \label{rem:wlog}
Furthermore, \mywlog we may assume that one counter is 0
in both source and target configurations.
    For instance, if the source configuration $q(v_1, v_2)$ satisfies $0<v_1\leq v_2$,
    we replace it by $q_s(0,v_2-v_1)$ for a freshly added source state $q_s$,
    and add an additional transition $(q_s, \updinstr{v_1, v_1}, q)$.
We thus assume, \mywlog, 
that the source and target states are zero-testing.
\end{remark}

%into those corresponding to counter $\textsf{c}_1$ and to counter $\textsf{c}_2$,
%so we tell which counter is zero by looking at the state.

% The strategy of the proof is to 
In the decidability proof we exploit the unique
decomposition of a run $\rho$ from source $s$ to
target $t$ 
into segments:
%\[ \rho=\rho_0 z_1 \rho_1 z_2 \dots \rho_{n-1}z_n\rho_n,\]
\[ \rho=\rho_0 \rho_1 \dots \rho_n,\]
where 
each $\rho_i$ starts and ends at consecutive zero-testing states of 
$\rho$.
% ---%
% we shall call the $\rho_i$ the \emph{\vass segments} of $\rho$.
%
Every $\rho_i$ is thus
a run in the underlying 2-\vass $\V$,
having one of the following forms:
\begin{alignat}{2} 
  \begin{aligned} \label{eq:4cases}
q(a,0) &\step^\ast q'(0,a') \qquad\qquad  & q(0,a) &\step^\ast q'(a',0)\\
q(0,a) &\step^\ast q'(0,a') \qquad\qquad  & q(a,0) &\step^\ast q'(a',0),
% \\
% q(0,a) &\step^\ast q'(a',0)\\
% q(a,0) &\step^\ast q'(a',0),
  \end{aligned}
\end{alignat}
where the initial and final states $q$ and $q'$ are zero-testing states, and no zero-testing
state is visited in between.
Notice that we may write just
% \begin{equation*} %\label{eq:basictran}
$q(a) \step^* q'(a')$
% \end{equation*}
without specifying which of the four cases occurs,
since the (zero-testing) states $q$, $q'$ uniquely determine that.
We follow the notational shorthand %\eqref{eq:basictran} 
throughout and, instead of 
$\step^*$, which is a binary relation on $Q\times\N^2$,
%  \  \subseteq  (Q\times\N^2)^2$
we prefer to use 
a binary relation on $Q\times\N$, 
\[
\reachrel \ \ \subseteq \ (Q\times\N)^2,
\]
where $q(a)\reachrel q'(a')$ if and only if 
any of the four cases of \eqref{eq:4cases} holds, namely
the underlying 2-\vass $\V$ has a corresponding run that does not
visit zero-testing states except for the initial and final ones.
% Without loss of generality, we also assume 
% that both the source and target vector have at least one 0.
In view of Remark \ref{rem:wlog}, our task is to decide if 
$s\reachrel^* t$
for $s,t\in Q\times\N$,
that is whether a given pair $(s,t)$ belongs to the reflexive-transitive
closure of $\reachrel$.

\medskip

\begin{figure}[h]
\centering
\includegraphics[scale=0.7]{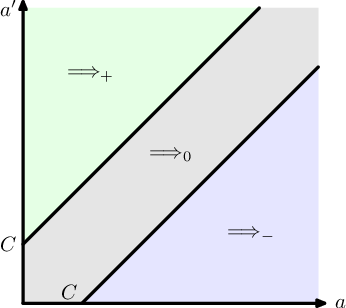}
\caption{The three relations $\stepp +$, $\stepp -$, and $\stepp 0$.
The axes do \emph{not} correspond to the counters, 
but rather the initial and final values $a$ and $a'$ of a segment.
}
\label{fig:areas}
\end{figure}

\vspace{-2mm}

\para{Three types of segments}

As we shall soon see, the relation $\reachrel$
can be split into three disjoint areas with different properties,
delineated by a machine-dependent constant~$C$
(see \Cref{fig:areas}).
Thus, every segment $\rho_i \colon q(a) \reachrel q'(a')$, 
depending on the value of $a-a'$,
falls into one of three categories,
%split the relation $\step^\ast$ into three subrelations,
%which represent different kinds of \vass runs in $(Q\times\N^2)$:
denoted by $\stepp +$, $\stepp -$, and $\stepp 0$.
For a segment $q(a) \reachrel q'(a')$, we write:
\begin{equation}
% \left.
\begin{aligned}\label{eq:tran0-+}
& q(a) \stepp + q'(a'),\quad && 
\text{if\ \ $a' > a\,+C${}};\\
& q(a) \stepp - q'(a'),\quad && 
\text{if\ \ $a\, > a'+C${}};\\
& q(a) \stepp 0\, q'(a'),\quad && 
\text{if\ \ $|a-a'|\leq C $},
\end{aligned}
% \right\}
\end{equation}
where the precise definition of the constant $C$ will be given in \Cref{sec:targetpump}.
The reachability relation can now be expressed as
\[
\reachrel^* \ = \ 
\big({\stepp 0} \ \cup \ {\stepp -} \,\cup \ {\stepp +} \big)^*,
\] 
an arbitrary composition
of the three sub-relations arbitrarily many times
(note that the composition assures that \emph{control states match}).

% \bigskip\noindent
Using this categorization 
we split the run $\rho$ into three consecutive \emph{phases} (each possibly empty): 

\begin{itemize}
\item $\Phi_1$, the \emph{prefix} of $\rho$ ending with the \emph{last} $\rho_i$ of the form $\stepp +$,
\item $\Phi_2$, an \emph{infix} of $\rho$ consisting only of $\rho_i$'s of the form $\stepp {0}$, 
\item $\Phi_3$, the remaining \emph{suffix} of $\rho$ starting 
with \emph{the first remaining} $\rho_i$ of the form $\stepp -$.
% (followed by other \vass segments of the forms $\stepp -$ and $\stepp 0$).
\end{itemize}

\noindent
Clearly, the composition of the three phases yields
the reachability relation, that is
\begin{align} \label{eq:reachrel}
{\reachrel^*} \ = \ \Phi_1 \circ \Phi_2 \circ \Phi_3,
\end{align}
for  the three relations defined by the following rational expressions:
\begin{align}
  \begin{aligned} \label{eq:regexp}
      &\Phi_1 \ = \ (({\stepp 0} \ \cup \ {\stepp -})^\ast \circ\, {\stepp +})^\ast,\\
      &\Phi_2 \ = \ {\stepp 0^\ast},\\
      &\Phi_3 \ = \ ({\stepp -} \circ\, {\stepp{0}^\ast})^\ast.
  \end{aligned}
\end{align}

% \noindent
% Above, $\circ$ stands for composition of relations, and $^*$ for 
% composition iteration, namely $R^* = $.

\para{Proof of \Cref{thm:main}}

The proof uses \emph{semilinear sets}.
Any subset of $\N^d$ of the form
\begin{align} \label{eq:linset}
% \lin b {\{p_1,\dots,p_n\}} = 
\setof{b + k_1 p_1 + \cdots + k_n p_n}{k_1, \dots, k_n \in \N},
\end{align}
for $b, p_1, \dots, p_n \in \N^d$,
is called a
\emph{linear} set with base $b$ and periods $p_1, \dots, p_n$.
A \emph{semilinear} set is a finite union of linear sets in the same
dimension $d$.
We say $X \subseteq Q \times \N^d$ is semilinear 
if for every $q \in Q$, 
the set 
\(\{n \mid (q,n)\in X\} \subseteq \N^d\)
is semilinear.
Similarly, 
$R \subseteq (Q\times \N^d)^2$ is semilinear if for every $q,q' \in Q$,
the set
\( \{(n,n') \mid ((q,n),(q',n')) \in R \} \subseteq \N^{2d}\)
is semilinear.

\Cref{thm:main} is proved by showing that the relation $\Phi_2$, 
as well as
the image of relation $\Phi_1$ and inverse image of $\Phi_3$,
are all
\emph{effectively semilinear}, \ie they are 
semilinear sets whose descriptions are computable
given the underlying 2-\cm $\mach$ and source and target
$s, t\in Q\times\N$.
Importantly, we only need to consider semilinear sets and relations
in dimension $d=1$.

\begin{lemma} \label{lem:0star}
Given a 2-\cm $\mach$, the relation
$\Phi_2 \subseteq (Q \times \N)^2$ 
is effectively semilinear.
% The relation $\stepp {0}^*$ is effectively semilinear.
\end{lemma}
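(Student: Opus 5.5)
The plan is to reduce $\Phi_2 = {\stepp 0}^*$ to the reachability relation of a \emph{one-counter} machine with zero-tests. Such reachability relations are known to be effectively semilinear; for instance, a one-counter machine is a pushdown automaton over a one-letter stack alphabet, and its reachability relation is effectively semilinear by classical results.

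\textbf{Step 1: ${\stepp 0}$ is effectively semilinear.} For fixed zero-testing states $q,q'$, the relation $q(a)\reachrel q'(a')$ is defined by runs of the underlying 2-\vass $\V$ that avoid zero-testing states in between. So I would take the 2-\vass obtained from $\V$ by deleting the zero-testing states other than copies of $q$ (with only outgoing transitions) and $q'$ (with only incoming ones). Its reachability relation is effectively semilinear, since reachability relations of 2-\vass are effectively semilinear (the results cited in the introduction \cite{Blondin15,BEFGHLMT21}). I intersect it with the semilinear constraints that the appropriate coordinates of source and target vanish, as determined by $q,q'$. I then project to the pair $(a,a')$ and intersect with the Presburger-definable band $|a-a'|\le C$. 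All these operations preserve effective semilinearity. This step uses no symmetry, consistent with the lemma being stated for arbitrary 2-\cm{s}.

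\textbf{Step 2: from a band relation to a one-counter machine.} For each pair $q,q'$ and each displacement $\delta\in\{-C,\dots,C\}$, the set $A_{q,q',\delta}=\setof{a}{q(a)\stepp 0 q'(a+\delta)}$ is an effectively computable semilinear subset of $\N$, hence ultimately periodic. I would choose a common threshold $N\ge C$ and period $P$ such that, for $a\ge N$, membership in every $A_{q,q',\delta}$ depends only on $a \bmod P$.

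I then build a one-counter machine with the following control states and counter.
\begin{itemize}
\item ``Small'' control states $(q,a)$ for $a<N$, storing the value in the finite control with the counter equal to $0$.
\item ``Large'' control states $(q,r)$ with $r\in\cycg P$, where the counter holds $a-N$ and $r=a\bmod P$.
\end{itemize}
Each step $q(a)\stepp 0 q'(a+\delta)$ becomes a transition of one of three kinds.
\begin{itemize}
\item Between small states, it is a finite table lookup.
\item Between large states, it is a transition from $(q,r)$ to $(q',r+\delta)$ updating the counter by $\delta$. It is enabled exactly when $r$ is a good residue for $(q,q',\delta)$, and the large-state invariant $a\ge N$ must be preserved.
\item Crossing between the small and large regions uses a gadget. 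Going up, a transition enters large mode and increments the counter to $a+\delta-N$. Going down, the gadget checks that the counter equals the exact small value $b-N$ by decrementing it that many times and zero-testing.
\end{itemize}
The requirement that a large step stays large, i.e.\ $a+\delta\ge N$, or otherwise is routed through the crossing gadget, is enforced the same way: decrement by $|\delta|$ and then restore, or guess and zero-test. Since $N\ge C$, only boundedly many cases arise. Intermediate auxiliary states are fresh, so runs of the new machine between original-looking states correspond exactly to ${\stepp 0}^*$-paths.

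\textbf{Step 3: conclude.} The reachability relation of this one-counter machine is effectively semilinear. Restricting it to the original states and translating the encoding back through $a = N + \text{counter}$ in large mode, or the stored $a$ in small mode, is a Presburger-definable transformation. This yields $\Phi_2$ as an effectively semilinear relation.

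I expect the main obstacle to be Step 1: it relies on importing the effective semilinearity of 2-\vass reachability relations and handling the ``no intermediate zero-testing state'' restriction carefully. The rest is bookkeeping, though the boundary cases of the encoding in Step 2 need care to keep the simulation exact.
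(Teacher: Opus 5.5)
Your proposal is correct and follows essentially the same route as the paper. The paper likewise obtains semilinearity of ${\stepp 0}$ from 2-\vass reachability, describes it by ultimately periodic one-dimensional displacement sets (its $X^c_{q,q'}$), simulates ${\stepp 0}^*$ by a one-counter machine that keeps residues modulo a common period in its control states, and concludes by semilinearity of one-counter reachability, which it derives explicitly via Parikh's theorem. The only difference is the encoding: the paper keeps the true value in the counter and checks the threshold condition with decrement/zero-test/increment gadgets, so it needs neither your small/large split nor the crossing gadgets.
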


% \begin{lemma} \label{lem:compbackfor}
% One can compute 
% the set 
% $\big(\stepp - \ \circ \ \stepp {0}^*\big)^* \ \circ \ \{t\}$.
% \end{lemma}

\begin{lemma} \label{lem:compbackfor}
Given a symmetric 2-\cm $\mach$ and source $s$ and target $t$, the sets
\begin{alignat*}{5}
    \Phi_3 \circ \{t\} \ &:= \  \setof{y}{(y,t) &&\in \Phi_3} \ &&\subseteq \ Q \times \N, \\
    \{s\} \circ \Phi_1 \ &:= \  \setof{x}{(s,x) &&\in \Phi_1} \ &&\subseteq \ Q \times \N
\end{alignat*}
are effectively semilinear.
% One can compute 
% the set 
% $\{s\} \circ \big(\stepp {0}^* \ \circ \ E \ \circ \ \stepp +)^*$.
\end{lemma}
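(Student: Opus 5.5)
The plan is to treat the two sets separately but in the same way. Each is the least solution of a recursive equation over $Q\times\N$:
\[
\Phi_3\circ\{t\} \;=\; \{t\}\,\cup\,\bigl({\stepp -}\circ{\stepp 0^\ast}\bigr)\circ\bigl(\Phi_3\circ\{t\}\bigr),
\]
and symmetrically for $\{s\}\circ\Phi_1$. It is convenient to first reverse the machine, reversing all transitions and negating all update vectors. This preserves symmetry, swaps the roles of $\stepp +$ and $\stepp -$ (up to the constant $C$), and turns images into inverse images. So both sets become inverse images under relations built from a ``descending'' step $\stepp -$ and ``almost flat'' steps. For $\Phi_1$ the reversed relation is $({\stepp -}\circ({\stepp 0}\cup{\stepp +})^\ast)^\ast$, which is slightly more complex, and I would handle it by the same saturation argument with an extra closure under flat and ascending steps.

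The first ingredient is Lemma~\ref{lem:0star}: $\Phi_2={\stepp 0^\ast}$ is effectively semilinear, so inverse images of semilinear sets under it are effectively semilinear. I also use that a single segment relation ${\reachrel}$ is effectively semilinear, being a reachability relation of a 2-\vass restricted by control states, for which semilinearity is known~\cite{Blondin15,BEFGHLMT21}. Hence inverse images of semilinear sets under $\stepp -$ are computable too. The difficulty is the unbounded iteration $({\cdot})^\ast$: semilinear sets are not closed under such closure in general.

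The key property I would establish is a \emph{source pumping} lemma for descending segments in a symmetric 2-\vass. There is a computable period $P\ge 1$ such that whenever $q(a)\stepp - q'(a')$, also $q(a+kP)\stepp - q'(a')$ for all $k\in\N$. Intuitively, when the start value is large the run can afford an extra loop that consumes $P$ more units. Symmetry is used to turn a cycle that decreases one counter into a cycle that, composed with its mirror image, moves value between counters in a controlled way, so that the run still ends with the required counter zero. This is where the constant $C$ gets fixed. Given the lemma, every set obtained as an inverse image under ${\stepp -}\circ R$ is, for each state $q$ and residue $r\bmod P$, upward closed within $r+P\N$.

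Consequently the fixed point $Y=\Phi_3\circ\{t\}$ is $\{t\}$ plus a set that, for each pair $(q,r)\in Q\times\{0,\dots,P-1\}$, is either empty or of the form $m_{q,r}+P\N$. Such a set is determined by finitely many minima, so $Y$ is semilinear. For effectiveness, compute $Y_0=\{t\}$ and $Y_{k+1}=Y_k\cup({\stepp -}\circ{\stepp 0^\ast})\circ Y_k$. This is a chain of semilinear sets, and along it the minima $m_{q,r}$ can only appear or decrease. To detect stabilization, I would prove a shortening argument: if some element of $Y$ with value below a bound $B$ is generated only at a late stage, then, because every $\stepp -$ step lowers the value by more than $C$ while $\stepp 0^\ast$ is semilinear with computable periods, a witness chain can be cut down to depth at most $|Q|\cdot P$ times a computable factor.

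I expect the main obstacle to be the pumping lemma itself, since it is the only place symmetry is really used and it must hold uniformly for runs avoiding zero-testing states. The termination argument for the saturation, namely bounding the depth of the witness chains, is the second delicate point.
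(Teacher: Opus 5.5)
\textbf{The $\Phi_3$ half.} Your treatment of $\Phi_3\circ\{t\}$ is essentially the paper's. Source pumpability of $\stepp -$ makes every new contribution $({\stepp -}\circ{\stepp 0^\ast})\circ X$ $D$-saturated. The paper obtains this pumpability from \Cref{lem:targetpump} by reversing transitions, so you need not reprove it. The chain then stabilizes because the per-(state, residue) minima can only appear or decrease (\Cref{lem:stab}).

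The ``shortening argument'' you plan for detecting stabilization is unnecessary. As sketched it is also unjustified: a ${\stepp 0^\ast}$ block can change the value by an arbitrary amount in either direction, so the drop along $\stepp -$ steps does not bound the length of witness chains. Detection is in fact trivial. The map $X\mapsto X\cup({\stepp -}\circ{\stepp 0^\ast})\circ X$ is a fixed map, so the first time two consecutive semilinear iterates coincide (which is decidable) you have the fixpoint. Note also that you need this half for an arbitrary semilinear target set $T$ in place of $\{t\}$, because the second half uses it as a subroutine.

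\textbf{The gap: the $\Phi_1$ half.} After reversal, each round of your outer iteration needs the inverse image $({\stepp 0}\cup{\stepp +})^\ast\circ Z$ in the reversed machine. ``The same saturation argument with an extra closure'' does not deliver it. There $\stepp +$ is pumpable only in its target, so inverse images under ascending segments are not saturated, and a fixpoint iteration for this inner closure has no reason to terminate. Translated back to the original machine, this inner set is the forward image $Z\circ({\stepp 0}\cup{\stepp -})^\ast$ through descending segments. That is exactly what saturation cannot compute, and what the paper has to circumvent.

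The missing idea is to never compute the inner closure at all. Compute only the composed set $G(Z)={\stepp -}\circ({\stepp 0}\cup{\stepp +})^\ast\circ Z$ in the reversed machine, which is $D$-saturated and hence determined by $\minf{G(Z)}$. You can decide membership of a single candidate $q(n)$, and nonemptiness of a residue class $X_{q,r}$, by going the other way. Forward closure under flat and ascending segments in the reversed machine is the backward closure ${\stepp 0^\ast}\circ\Phi_3$ in the original one, and the first half computes this for any semilinear target set. So you first determine which classes $(q,r)$ are nonempty. Then you find each minimum by exhaustive search over $n\equiv r \bmod D$; this terminates because the class is known to be nonempty.

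In the original orientation this is precisely the paper's argument. There $f(S)=S\circ({\stepp 0^\ast}\circ\Phi_3\circ{\stepp +})$ is $D$-saturated by target pumping. Membership $q(n)\in f(S)$ is tested by checking whether $\Phi_3\circ({\stepp +}\circ\{q(n)\})$ meets $S\circ{\stepp 0^\ast}$.
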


%  $\Phi_2 \subseteq (Q \times \N)^2$ 
% is an \emph{effectively semilinear} relation
% %\footnote{This means $\{n \in \N \colon (q,n) \in \Phi_3 \}$ is effectively semilinear, for every $q\in Q$.}%
% % 
% and that for given source $s$ and target $t$, the sets
% \begin{alignat*}{6}
% &\Phi_3 \circ \{t\} \ := \  \{y &&\mid (y,t) \in \Phi_3 \hskip 0.5pt\}  \ \subseteq \ Q \times \N, \\
% &\{s\} \circ \Phi_1 \ := \  \{x &&\mid (s,x) \in \Phi_1 \} \ \subseteq \ Q \times \N
% \end{alignat*}
% are also \emph{effectively semilinear}.

\begin{remark}
The proof of \Cref{lem:0star} does not use symmetry of a 2-\cm
and the lemma holds for \emph{all} 2-\cm{s}, as opposed to
the proof of \Cref{lem:compbackfor} that crucially uses the symmetry assumption.
\end{remark}

Relying on the effective 
semilinearity given in \Cref{lem:0star,lem:compbackfor},
% and on the effective equivalence between semilinear sets and Presburger 
% Arithmetic \cite{GS66},
the proof of \Cref{thm:main} is immediate:
the reachability question $s\reachrel^* t$
is, according to \eqref{eq:reachrel}, equivalent to
%\( (s,t) \in \Phi_1\circ\Phi_2\circ\Phi_3,\) \ie 
% \begin{equation}
%     \exists\, x,y \in Q\times\N \ : \  x \in (\{s\} \circ \Phi_1) \ \wedge \ (x,y) \in \Phi_2 
%     \ \wedge \ y\in (\Phi_3 \circ \{t\});
% \end{equation}
the nonemptiness of the intersection
\[ \bigl(\{s\} \circ \Phi_1\bigr) \times 
\bigl(\Phi_3\circ \{t\} \bigr)
\quad \cap \quad
\Phi_2
\]
of effectively semilinear sets,
which is again effectively semilinear \cite[Thm.~6]{taming}.
% effectively 
% expressible in Presburger Arithmetic, and hence decidable 
% \cite{Presburger30,Oppen78}.
% \lukasz{Strictly speaking, using Presburger arithmetic increases the complexity to sometheing 
% like $7ExpSpace$. 
% Maybe it is better to write about intersection of semilinear sets which is cheap.}
It thus remains to demonstrate the two key lemmas.
Continuing in a top-down manner, we first focus on
\Cref{lem:compbackfor}, whose proof makes use of
\Cref{lem:0star}.

%Notationally, we allows ourselves to compose a relation $R\subseteq (Q\times\N)^2$ 
%with a set $X\subseteq Q\times\N$ to denote the set inverse image $R\circ X\subseteq Q\times\N$, 
%or the image $X\circ R \subseteq Q\times\N$.
%% Let $I_s = \{(s,s)\}$ be the relation containing exactly one pair, where $s$ is the source configuration.
%% Likewise, let $I_t = \{(t,t)\}$.
%We will be able to decide if $s\step^* t$, given source and target configurations $s,t\in Q\times\N$,
%as long as we prove the two following fact: one can (backward) compute the inverse image
%$E\circ \{t\}$, and (forward) compute the image 
%$\{s\} \circ \big(\stepp {0}^* \ \circ \ E \ \circ \ \stepp +)^*$,
%according to the partition into phases.

\section{Effective semilinearity for $\Phi_1$ and $\Phi_3$: 
Proof of \Cref{lem:compbackfor}} \label{sec:sl}

% \para{Target and source pumping}

The \emph{target pumping} 
property of segments of the form $\stepp +$ is 
a central technical tool in the proof of
\Cref{lem:compbackfor}:

\newcommand{\punkt}[2]{({#1}{\rightarrow}{#2})}
\begin{lemma} [Target pumping] \label{lem:targetpump}
Given a symmetric 2-\vass $\V$,
one can compute numbers $C, D >0$ 
such that for every pair of control states $q, q'$ of $\V$ and  
$a, a'\in\N$ such that $a' > a + C$,%
% \[
% \punkt 1 2 \qquad
% \text{if} \quad q(a,0)\step^* q'(0,a') \quad \text{then} \quad q(a,0) \step^* q'(0, a'+D),
% \]
% and similarly for the other three symmetric cases:
\begin{alignat*}{4}
&\punkt 1 2 \qquad
\text{if} \quad q(a,0)\step^* q'(0,a') \quad &&\text{then} \quad q(a,0) \step^* q'(0, a'+D),\\
&\punkt 2 2 \qquad
\text{if} \quad q(0,a)\step^* q'(0,a') \quad &&\text{then}  \quad  q(0,a) \step^* q'(0, a'+D),\\
&\punkt 2 1 \qquad
\text{if} \quad q(0,a)\step^* q'(a',0) \quad &&\text{then}  \quad  q(0,a) \step^* q'(a'+D, 0),\\
&\punkt 1 1 \qquad
\text{if} \quad q(a,0)\step^* q'(a',0) \quad &&\text{then}  \quad  q(a,0) \step^* q'(a'+D,0).
\end{alignat*}
\end{lemma}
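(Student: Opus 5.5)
We reduce all four cases to pumping a single cycle found in the run, and use symmetry to turn that cycle into one whose effect is nonnegative on both counters and strictly positive on the target coordinate. Fix a run $\rho$ witnessing, say, $q(a,0)\step^* q'(0,a')$ with $a'>a+C$; the other three cases are handled identically up to renaming coordinates. Since the total effect of $\rho$ is $(-a, a')$ and $a' - a > C$, the run must gain a lot in the sum of the counters. Let $C$ be large compared to $|Q|\cdot M$, where $M$ is the maximal absolute value of an update.

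\textbf{Step 1: find a cycle with positive effect in the sum.} By pigeonhole on states, we decompose $\rho$ into a simple path of length below $|Q|$ plus simple cycles of length at most $|Q|$, each inserted at some state occurrence. Their effects sum to $(-a,a')$. The short part contributes at most $|Q|M$ to the sum of coordinates, so once $C > |Q|M$ some simple cycle $\theta$ on a state $p$ has effect $\vr w=(w_1,w_2)$ with $w_1+w_2>0$. Its length and norm are bounded by $|Q|$ and $|Q|M$.

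\textbf{Step 2: use symmetry to get a nonnegative cycle.} Since $\V$ is symmetric, the swapped cycle $\sigma(\theta)$ is also a cycle on $p$, with effect $(w_2,w_1)$. Then $\theta\,\sigma(\theta)$ has effect $(s,s)$ with $s=w_1+w_2\ge 1$, which is nonnegative in both coordinates.

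If this combined cycle can be fired at $p$ inside $\rho$, we insert it $k$ times, where $k$ depends only on the machine. This raises both final counters by $ks$. To land exactly at $q'(0,a'+D)$, we must then remove the extra $ks$ from counter~1 at the end without disturbing counter~2 beyond a fixed amount. We plan to do this by choosing $D=\mathrm{lcm}$ of all possible $s$ values times a fixed amount, and by cancelling the first-coordinate surplus using a cycle with negative first-coordinate effect.

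Such a cycle should exist near the end of $\rho$: the run drives counter~1 from a positive value down to $0$, so it must traverse cycles with negative effect on coordinate~1. We would first try to argue this via a second pigeonhole on the final stretch of $\rho$ where counter~1 decreases. Symmetry again lets us swap such a cycle's coordinates when needed.

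\textbf{Step 3 (main obstacle): enabledness.} Inserting the cycle $\theta\sigma(\theta)$ requires counters at $p$ large enough to absorb its intermediate negative effects, at most $2|Q|M$ in each coordinate. Because the run starts with counter~2 at $0$ and counter~1 possibly small, one coordinate may be small at every occurrence of $p$.

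The plan is to pick the cycle at a position where both counters are large. We would split into cases. If at some point in $\rho$ both counters exceed a threshold $B$, with $B$ polynomial in $|Q|M$, we take the pumpable cycle from around that point. Otherwise the run stays within a strip along an axis, hugging counter~1 or counter~2 below $B$. There it behaves like a 1-counter system with bounded extra state, and we pump directly using a cycle inside the strip with positive effect on the large coordinate.

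Handling the transition from the strip along axis~1 to the strip along axis~2, and making the final correction exact, is where the argument is delicate and where $C$ and $D$ get fixed. We expect to take $D$ as the $\mathrm{lcm}$ of all effects of short cycles, so that every case pumps by a multiple landing exactly at $+D$.
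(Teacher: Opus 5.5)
Your proposal has a genuine gap. The two steps you leave open, enabledness of $\sigma(\theta)$ and the later cancellation of the surplus on counter~1, are the whole content of the lemma, and the diagonal strategy fails as stated.

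\textbf{Enabledness.} Take the symmetric 2-\vass with states $p,r,p'$ and transitions $p\to r$ with effects $(-5,0),(0,-5)$, $r\to p$ with $(6,0),(0,6)$, $p\to p'$ with $(0,0)$, and loops $(-1,1),(1,-1)$ at $p'$. Consider the run $p(a,0)\step^* p(a+n,0)\step p'(a+n,0)\step^* p'(0,a+n)$ with $a\ge 5$ and $n>C$. It iterates $\theta=(-5,0)(6,0)$ and then the loop $(-1,1)$, and $\theta$ is its only sum-increasing simple cycle. Counter~2 is $0$ at every visit of $p$, so $\sigma(\theta)$ is fireable nowhere along $\rho$. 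The configurations where both counters are large occur in the $(-1,1)$-sweep at $p'$, where there is no sum-increasing cycle to ``take''. Pumping $\theta$ in the strip along axis~1 just creates a surplus on counter~1 that must be cancelled later, which is the open problem again.

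\textbf{Cancellation.} Your assertion that a cycle with negative first coordinate must occur later in $\rho$ is unjustified. In $\punkt 2 2$, a run using only a loop $(0,1)$ never decreases counter~1, and in general the decrease may happen in one-off paths or before your insertion point. Even when a later cycle with effect $(x_j,y_j)$, $x_j<0$, exists, cancelling $k(s,s)$ with it yields vertical gain $m(y_j-x_j)$, which is positive only when $y_j>x_j$. Take the one-state system with loops $(1,2),(2,1),(-1,-1)$ and the run $(a,0)\step^*(a+n,2n)\step^*(0,n-a)$ with $n>2a+C$. The diagonal cycle with effect $(3,3)$ is fireable, but the only simple cycle of $\rho$ with negative first coordinate is the self-transposed $(-1,-1)$, which cancels the surplus entirely. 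The pump that works, $(1,2)+(-1,-1)=(0,1)$, does not symmetrize at all.

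\textbf{What is missing.} You need a case analysis in which symmetry is used only where the transpose is provably fireable, plus a symmetry-free argument for the remaining case. The paper first puts $\rho$ in linear-path-scheme form $t_0c_1^{n_1}t_1\cdots c_\ell^{n_\ell}t_\ell$: boundedly many short one-off paths and cycles, so one-off contributions are bounded by a constant. Your simple-cycle decomposition instead scatters unboundedly many non-contiguous cycles and gives no block-by-block control of the trajectory. The paper then inlines cycles iterated at most $2C_0$ times and splits on the sum-increasing cycles:
\begin{itemize}
\item one with effect $(0,+)$ is pumped directly;
\item for one with effect $(-x,y)$, both counters are at least $C_0$ in the middle of $c_j^{n_j}$, so $\overline{c_j}$ fires there and $(c_j;\overline{c_j})^x;c_j^{y-x}$ adds $(0,(x+y)(y-x))$;
\item when all sum-increasing cycles have effect $(+,0)$ or $(+,+)$, it proves that some sum-increasing $c_i$ is followed by a horizontally negative $c_j$ with $y_i/x_i>y_j/x_j$.
\end{itemize}
The third claim is shown by contradiction: otherwise, from the first sum-increasing cycle on, the run stays below a line whose slope is the largest steepness seen so far, shifted only by the bounded one-off contributions, and $(0,a')$ lies above that line once $a'-a>C$. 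It then adds $c_i^{-x_j}$ to the block of $c_i$ and $c_j^{x_i}$ to the block of $c_j$. The total effect is $(0,x_iy_j-x_jy_i)$, with positive second coordinate, and no intermediate counter value is lowered. Your diagonal cycle corresponds to steepness exactly $1$, which this claim does not provide.
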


\begin{remark}
% It is worth noting that 
The proof of \Cref{lem:targetpump},
to be given in \Cref{sec:targetpump}, is the (only) part of
our development that unavoidably relies on symmetry.
\Cref{lem:targetpump} fails for non-symmetric 2-\vass,
for instance point $\punkt 1 2$ fails 
for a 2-\vass with one state $q$ and one transition $(q,(-1,2),q)$.
% \Cref{lem:targetpump} is a cornerstone of the proof
% of \Cref{lem:compbackfor}.
% \mei{We already said this just above the statement
% of the lemma, let's not repeat ourselves}
\end{remark}

We apply \Cref{lem:targetpump} to the underlying 
symmetric 2-\vass $\V$ of a given symmetric 2-\cm $\mach$.
The four claims of \Cref{lem:targetpump} jointly imply that 
$\stepp +$ is \emph{target-pumpable}, 
that is for every pair of zero-testing control states $q, q'$,
% and  $a, a'\in\N$ such that $a, a'-a > C$, 
\begin{align} \label{eq:tpump}
q(a)\stepp + q'(a') \quad \implies  \quad  q(a) \stepp + q'(a'+D).
\end{align}
The reverse version of the four claims of \Cref{lem:targetpump}, 
where we swap source and target and reverse all transitions
(the reverse of $(q,\vr w,q')$ is $(q',-\vr w,q)$, as expected), 
implies that $\stepp -$ is \emph{source-pumpable}, that is 
\begin{align} \label{eq:spump}
q(a)\stepp - q'(a') \quad \implies  \quad  q(a+D) \stepp - q'(a').
\end{align}
Note that we can assume the same $D$ in \eqref{eq:tpump} and 
\eqref{eq:spump}, by taking the least common multiple of the
two constants given by the two applications of \Cref{lem:targetpump},
and also the same $C$, by taking the maximum of the two constants.
In the sequel, fix the constant 
$D$ from \eqref{eq:tpump} and \eqref{eq:spump}.

\para{\Saturated sets}

It is folklore that semilinear sets $X\subseteq Q\times\N$
are exactly the \emph{ultimately periodic} sets, \ie those satisfying
$q(a)\in X\iff q(a+N) \in X$ for some $N>0,M\geq 0$, all $q\in Q$ 
and all sufficiently large $a\geq M$.
We call $N$ and $M$ a \emph{period} and \emph{threshold} of $X$.

Given $N>0$, we define $N$-\emph{\saturated} subsets $X\subseteq Q\times\N$ as those
where for all $(q,a) \in X$, also $(q,a+N) \in X$.
% if $(q,a) \in X \implies (q,a+N) \in X$ for all $q\in Q$ and $a\in\N$.
$N$-\saturated sets
are ultimately periodic, and hence semilinear.
An $N$-\saturated set $X$ is representable by a function 
$\minf X : Q\times \setfromto 0 {N-1} \to \N \cup \{\infty\}$ 
that maps a state $q\in Q$ and remainder $r<N$ to the least number $n$ 
congruent to $r$ such that $(q,n)\in X$,
\[
    \minf X (q,r) \ = \ \min\ \setof{n}{(q,n) \in X, \ \ n \equiv r \bmod N},
\]
and to $\infty$ if there is no such number $n$.
We call $N$ a \emph{period} of $X$, and the numbers $n < \infty$
appearing in the image of $\minf X$ its \emph{bases}.
For instance, in consequence of \eqref{eq:tpump} and \eqref{eq:spump}, 
the image $S\circ {\stepp +}$ as well as the inverse image
${\stepp -} \circ\, S$ is always $D$-\saturated, regardless of
$S\subseteq Q\times\N$.

For termination of our decision procedure it will be crucial that
the semilinear sets it maintains form
non-decreasing chains 
of $D$-\saturated sets $X\subseteq Q\times\N$, since we have:
%  the following lemma:
\begin{lemma} \label{lem:stab}
    For every $N>0$, 
    every non-decreasing chain of $X_1 \subseteq X_2 \subseteq \cdots$ of $N$-\saturated subsets of $Q\times\N$ stabilizes.
\end{lemma}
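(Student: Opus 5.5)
The plan is to use the finite representation of $N$-\saturated sets by the function $\minf{\cdot}$ and argue that along a non-decreasing chain these functions can only decrease, pointwise, in a well-founded way.

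First, we check that an $N$-\saturated set $X \subseteq Q\times\N$ is fully determined by $\minf X$. Indeed, $(q,a)\in X$ holds if and only if $\minf X(q, a \bmod N) \le a$. The forward direction is the definition of the minimum, since $a$ is one of the candidates congruent to $a \bmod N$. For the backward direction, $\minf X(q,r) = n \le a$ with $n \equiv a \bmod N$ gives $a = n + kN$ for some $k\in\N$, and repeated application of saturation yields $(q,a)\in X$. Hence $X \mapsto \minf X$ is injective on $N$-\saturated sets.

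Second, we observe monotonicity: if $X \subseteq Y$ then $\minf Y(q,r) \le \minf X(q,r)$ pointwise, because the minimum is taken over a larger set. Here we use the convention that $\infty$ is the top element of $\N\cup\{\infty\}$. Therefore a chain $X_1 \subseteq X_2 \subseteq \cdots$ yields, for each of the finitely many arguments $(q,r) \in Q\times\setfromto 0 {N-1}$, a non-increasing sequence $\minf{X_1}(q,r) \ge \minf{X_2}(q,r) \ge \cdots$ in the well-order $\N\cup\{\infty\}$.

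Finally, each such sequence is eventually constant, since a well-order admits no infinite strictly decreasing sequence. There are only finitely many arguments $(q,r)$, namely $|Q|\cdot N$ of them, so there is a common index $k$ after which all the functions $\minf{X_i}$ coincide. By injectivity, the sets $X_i$ coincide for $i\ge k$, and the chain stabilizes. No step here is a real obstacle. The only point needing care is the characterization of membership via $\minf{\cdot}$, which relies on saturation being applied upward from the least representative. Note that $Q$ is finite because it is the state set of a machine; this finiteness is what bounds the number of arguments.
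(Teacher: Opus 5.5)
Your proof is correct and follows the paper's argument: the paper likewise observes that $\minf{X_1}, \minf{X_2}, \ldots$ is a pointwise non-increasing chain in the well-founded space of functions $Q\times\{0,\ldots,N-1\}\to\N\cup\{\infty\}$, and hence stabilizes. You also make explicit that $X\mapsto\minf{X}$ is injective on $N$-saturated sets, which the paper leaves implicit but which is needed to carry stabilization from the functions back to the sets.
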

\begin{proof}
The sequence $\minf {X_1}, \minf{X_2}, \ldots$ is a non-increasing
chain of functions, ordered pointwise 
(assuming $n<\infty$ for $n\in\N$), 
and hence necessarily stabilizes due to well-foundedness
of the function space $Q\times \setfromto 0 {N-1} \to \N \cup \{\infty\}$.
\end{proof}

Before the proofs of \Cref{lem:0star,lem:compbackfor},
a basic observation concerning individual segments:

\begin{lemma} \label{lem:0-+}
The relations $\stepp 0, \ \stepp -, \ \stepp +$ are effectively semilinear.
\end{lemma}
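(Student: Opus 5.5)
The plan is to reduce the claim to the classical fact that the reachability relation of a 2-\vass is effectively semilinear (Hopcroft--Pansiot \cite{HopcroftPansiot}; see also \cite{Blondin15,BEFGHLMT21}). Everything else is closure of semilinear sets under Boolean operations, projections and intersections with Presburger-definable sets.

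\textbf{Step 1.} Fix a pair of zero-testing states $q,q'$. Let $\V_{q,q'}$ be the 2-\vass obtained from $\V$ by deleting all zero-testing states other than $q$ and $q'$. Also delete all transitions entering $q$ and all transitions leaving $q'$, except those needed so that $q$ is used only as the initial state and $q'$ only as the final state. A clean way to do this is to take fresh copies $\hat q$ and $\hat q'$: $\hat q$ keeps only the outgoing transitions of $q$, and $\hat q'$ keeps only the incoming transitions of $q'$. Runs of $\V_{q,q'}$ from $\hat q$ to $\hat q'$ are then exactly the runs of $\V$ from $q$ to $q'$ that do not visit a zero-testing state in between.

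\textbf{Step 2.} By Hopcroft--Pansiot, the reachability relation $R_{q,q'}\subseteq \N^2\times\N^2$ of $\V_{q,q'}$ from $\hat q$ to $\hat q'$ is effectively semilinear. The four cases in \eqref{eq:4cases} are then handled as follows. Because $q$ and $q'$ are zero-testing, they determine which coordinate is zero at each end. Intersect $R_{q,q'}$ with the linear constraint forcing those coordinates to be $0$, then project onto the two remaining coordinates $(a,a')$. Both intersection and projection preserve effective semilinearity (equivalently, work in Presburger arithmetic). This shows that $\reachrel$, restricted to $(q,q')$, is an effectively semilinear subset of $\N^2$.

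\textbf{Step 3.} Intersect with the Presburger-definable regions $a'>a+C$, $a>a'+C$ and $|a-a'|\le C$. This yields $\stepp +$, $\stepp -$ and $\stepp 0$ as effectively semilinear relations, provided $C$ is computable. Lemma~\ref{lem:targetpump} guarantees that $C$ is computable; for the present lemma any fixed integer $C$ would in fact do.

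\textbf{Main obstacle.} The only nontrivial input is the effective semilinearity of 2-\vass reachability relations, which I would cite rather than reprove. The remaining points are routine bookkeeping: the state-splitting in Step~1 must correctly enforce that no zero-testing state is visited strictly inside the run, and the degenerate case $q=q'$ with the empty run must be handled. For that case, note that a segment always contains at least the final zero-test transition, so the run is nonempty.
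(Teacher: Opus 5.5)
Your proposal is correct and follows the paper's own argument. You take the effectively semilinear reachability relation of the underlying 2-\vass, restrict it to runs that avoid intermediate zero-testing states and have the appropriate counters zero at both ends, and then intersect with the semilinear regions cut out by $C$; you simply spell out the state-splitting and projection steps that the paper leaves implicit. (One citation nit: \cite{HopcroftPansiot} concerns reachability \emph{sets}, whereas you need semilinearity of the reachability \emph{relation}, which is better attributed to \cite{Blondin15}, as the paper does.)
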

\begin{proof}
The reachability relation of the underlying 2-\vass $\V$
is effectively semilinear \cite{Blondin15},
which implies that so is $\reachrel$, the restriction thereof 
enforcing a counter to be 0 initially or finally, 
and forbidding zero-testing states except the source and target ones.
In consequence, its further restrictions 
$\stepp 0, \ \stepp -, \ \stepp +$, each of them
obtained by intersecting with a semilinear
set, are also effectively semilinear.
\end{proof}

\begin{proof}[Proof of \Cref{lem:compbackfor}]
Recall that $\Phi_3 \ = \ ({\stepp -} \circ\, {\stepp{0}^\ast})^\ast$.
By \Cref{lem:0-+,lem:0star}, both the relations $\stepp -$ and
$\stepp{0}^*$ are effectively semilinear.
By a standard argument building upon \cite[Thm.~6]{taming},
% Using the effective equivalence between semilinear sets and Presburger Arithmetic,
the composed relation 
${\stepp -} \circ\, {\stepp{0}^\ast}$ is also effectively semilinear,
in particular meaning that
the set
$({\stepp -} \circ\, {\stepp{0}^\ast}) \,\circ\, S$ is semilinear for every given 
semilinear set $S$.
Furthermore, as $\stepp -$ is source pumpable \eqref{eq:spump},
and therefore the composed relation ${\stepp -} \circ\, {\stepp {0}^\ast}$
is so as well, the set
$({\stepp -} \circ\, {\stepp{0}^\ast}) \,\circ\, S$ is always $D$-\saturated.

In order to  argue that the set $\Phi_3 \circ \{t\}$
is effectively semilinear we make use of \Cref{lem:stab},
iterating the transformation
\[
    X_{i+1} \ := \ X_{i} \ \cup \ ({\stepp -} \,\circ\  {\stepp{0}^\ast}) \ \circ\ \! X_{i},
\]
starting from the initial set
\[
    X_0 \ := \ ({\stepp -} \,\circ\ {\stepp{0}^\ast}) \ \circ\ \{t\},
\]
until a fixpoint is reached. By the above observations, we deduce that
both the initial set $X_0$, as well as every successive iterate $X_{i+1}$,
are effectively semilinear and $D$-\saturated; moreover it is
self-evident that the sequence $X_0 \subseteq X_1 \subseteq \cdots$
is non-decreasing.

More explicitly, by \Cref{lem:stab} we have that at starting at some $j$,
\[
X_j \ = \ X_{j+1} \ = \ X_{j+2} \ = \ \cdots
\] 
Since both the iteration
$X_i \mapsto X_{i+1}$ and the equality test $X_i = X_{i+1}$ are effective,
the set 
\[
\Phi_3\circ \{t\} \ = \ \{t\} \ \cup \ \bigcup_{i\in\N} X_i \ = \ 
\{t\} \ \cup \ X_j
\]
is effectively semilinear.
We conclude the first part by 
noting that the proof is not specific to the starting set $\{t\}$,
and shows that $\Phi_3 \circ T$ is effectively 
semilinear given any semilinear $T\subseteq Q\times\N$,
which will be useful in the second part.

\medskip

Now we proceed to arguing that
the set $\{s\} \circ \Phi_1$ is effectively semilinear.
To this aim we prefer to use another rational expression defining 
$\Phi_1$, where $\Phi_3$ appears as a subexpression:
\[
    \Phi_1 \ = \ ({\stepp{0}^*} \ \circ \ \Phi_3 \ \circ \ {\stepp +})^\ast;
% \qquad\qquad\qquad
% \text{recalling } \Phi_3 \ = \ (\stepp - \circ \stepp {0}^\ast)^\ast;
\]
indeed, as in \eqref{eq:regexp} in \Cref{sec:outline}, the inner subexpression ${\stepp{0}^*} \,\circ\, \Phi_3 \,\circ\, {\stepp +}$ 
describes an arbitrary
composition of segments of type $\stepp 0$ and $\stepp -$ that ends 
with one segment of type $\stepp +$.

The structure of the proof is similar to the previous part, namely
we proceed by iterating the relation defined by the inner subexpression, 
that is we consider the initial set
and the consecutive iterates:
\[
X_0 \ := \ f\bigl(\{s\}\bigr) 
\qquad\qquad\qquad
% \ \circ \ \bigl((\stepp 0)^* \ \circ \ \Phi_3 \ \circ \ \stepp +\bigr)
% \]
% \[
X_{i+1} \ := \ X_i \ \ \cup \ \ 
f\bigl(X_i\bigr)
%  \  \circ \  \bigl((\stepp 0)^* \ \circ \ \Phi_3 \ \circ \ \stepp +\bigr).
\]
where
\[
f(S) \ = \ S \ 
\circ \ \bigl({\stepp{0}^*} \ \circ \ \Phi_3 \ \circ \ {\stepp +}\bigr).
\]
Similarly as before, knowing that $\stepp +$ is target pumpable, we know
that all $X_i$ are $D$-\saturated sets, and therefore the sequence 
$X_0, X_1, \ldots$ stabilizes, and hence
\[
\{s\} \,\circ\, \Phi_1 \ = \ 
 \{s\} \ \cup \  X_j
\]
for some $j$, similarly as before.
The only missing ingredient is computability of the iteration
$X_i \mapsto X_{i+1}$.
However, in order to apply the previously shown computability of 
the function
\begin{align} \label{eq:back}
T \ \mapsto \ \Phi_3 \ \circ \ T
\end{align}
which proceeds 'backward', for computability
of the function
$
S \, \mapsto  f(S)
%  \ := \ S \ 
% \circ \ \bigl((\stepp 0)^* \ \circ \ \Phi_3 \ \circ \ \stepp +\bigr)
$
which proceeds 'forward', we need a trick.
Note that the relation 
${\stepp{0}^*} \  \circ \ \Phi_3 \ \circ \  {\stepp +}$ is always
target-pumpable since $\stepp +$ is so by \eqref{eq:tpump},
and hence $f(S)$ is always $D$-\saturated
regardless of $S$.
To compute $f(S)$, one first establishes, for every state $q\in Q$, 
which remainders $r$ modulo $D$ appear in this set,
by applying the transformation \eqref{eq:back} to the set
\begin{align} \label{eq:set1}
\bigl(\stepp + \circ \ X_{q,r}\bigr)
,
\end{align} 
% where $M_r = r + D\cdot\N$ contains all numbers congruent to $r$ modulo $D$, 
where $X_{q,r} = \setof{q(n)}{n\equiv r \bmod D}$,
and checking nonemptiness of intersection with 
the set 
\begin{align} \label{eq:set2}
S\ \,\circ \ \stepp{0}^*.
\end{align} 
Both the sets \eqref{eq:set1}, \eqref{eq:set2} 
are effectively semilinear due to 
\Cref{lem:0-+,lem:0star}, and hence the set
$\Phi_3 \ \circ \ \bigl(\stepp + \circ \ X_{q,r}\bigr)$,
and its intersection $Y_{q,r}$ with \eqref{eq:set2}, are
computable.
Knowing which remainders to expect,
one then finds the precise representation
$\minf{f(S)}$ of $f(S)$ by an \emph{exhaustive search}.
For every positively verified state-remainder pair $(q,r)$,
namely one with $Y_{q,r}\neq\emptyset$, 
one finds the least number $n$ such that
$q(n)\in f(S)$ and $n\equiv r \bmod D$.
For consecutive candidates $k=0, 1, \ldots$, 
apply the transformation \eqref{eq:back} to the set
\[
\bigl(\stepp + \circ \ \{q(r+k\cdot D)\}\bigr)
\]
and check nonemptiness of intersection with the set \eqref{eq:set2},
until a nonempty intersection is found.
This shows that the $D$-\saturated representation of $f(S)$ may be
computed, concluding the proof of the second part of \Cref{lem:compbackfor}.
\end{proof}

\section{Effective semilinearity of $\stepp{0}^*$: Proof of \Cref{lem:0star}} \label{sec:sl2}

% \begin{proof}[Proof of \Cref{lem:0star}]
    In outline,
    we prove that $\Phi_2 = {\stepp{0}^*}$ is essentially the reachability
    relation of an effectively constructible 1-\cm $\mach'$, to which we apply
    Parikh's theorem.

    \medskip

    The key observation which makes the construction of $\mach'$ possible,
    is that the relation $\stepp 0$ is \emph{essentially} one-dimensional:
    for any pair of zero-testing control states $q,q'\in Q$ and
    $c\in\setfromto{-C} C$, consider
    the set of pairs of initial and final counter values $(n,n')$ such that
    $n'=n+c$ and $q(n)\stepp 0 q'(n')$.
    We may describe this set in one dimension, defining
    \begin{align}\label{eq:per2}
     X^c_{q,q'} \ = \ \setof{n}{q(n)\stepp 0 q'(n+c)}.
    \end{align}
    % \mei{I flipped the sign in the above definition,
    % please check that $q'(n+c)$ is indeed what it should be
    % (rather than the $q'(n-c)$ that was there before).}
    Since the values of $n$ and $c$ together determine $n'$,
    the finite collection of sets $X^c_{q,q'}$ fully describes
    the $\stepp 0$ relation.

    Moreover, each set $X^c_{q,q'}$ is effectively semilinear,
    being the intersection of the effectively semilinear relation $\stepp 0$ 
    with the semilinear set $\Setof{\bigl(q(n),q'(n')\bigr)}{n'=n+c}$,
    projected down to one dimension.
    Hence, each $X^c_{q,q'}$ is
    ultimately periodic, with an effectively computable 
    period $N^c_{q,q'}$ and threshold $M^c_{q,q'}$.
    Define $N$ as the least common multiple of all the periods $N^c_{q,q'}$,
    across all choices of $c$, $q$ and $q'$.

    \para{Control states of $\mach'$}
    % We start by constructing a 1-\cm $\mach'$ for the non-iterated  
    % relation $\stepp 0$.
    Except for auxiliary states, to be clear from the construction below,
    the control states $Q'$ of $\mach'$ are the control states of $\mach$ 
    enriched with a number representing the remainder modulo $N$ of the 
    current counter value:
    \[
    Q' \ = \ Q \times \setfromto 0 {N-1} \ \cup \ \text{(auxiliary states)}.
    \]
    We write $q[r]$, instead of $(q,r)$, 
    for the control state of $\mach'$ corresponding 
    to $q\in Q$ and remainder $r<N$.
    As an invariant of our construction, 
    counter values $n$ reachable in state $q[r]$ in $\mach'$ always satisfy 
    $n\equiv r \bmod N$.
    % The initial control state of $\mach'$ is $q_0[0]$, where $q_0$ is the 
    % initial control state of $\mach$.

    The construction, % of transitions of the 1-\cm $\mach'$, 
    to be detailed below, ensures
    the following correctness claim, 
    for every pair $q,q'\in Q$ of zero-testing control states and
    $c\in\setfromto {-C} C$:
    \begin{align} \label{eq:mach}
    % \setof{(n,n')}
    {q(n) \stepp 0 {q}'(n+c)} \text{ in } \mach \ \iff \ 
    % \setof{(n,n')}
    {q[r](n)\auxstep^* q'[r'](n+c) \text{ in } \mach'},
    \end{align}
    where $r$ and $r'$ are the remainders of $n$ and $n+c$ modulo $N$,
    respectively, and $\auxstep^*$ denotes the subset of the reachability
    relation $\step^*$ of $\mach'$ where all the intermediate states are
    auxiliary.

    In other words, when $\mach'$ is restricted to runs whose intermediate
    states are all auxiliary,
    it realizes the non-iterated relation $\stepp 0$ of $\mach$.
    Once this restriction is lifted, $\mach'$ computes
    the iterated $\stepp{0}^\ast$ relation.
    
    \para{Transitions of $\mach'$}

    In control state $q[r]$, the 1-\cm $\mach'$ non-deterministically 
    chooses a number $c\in\setfromto{-C} C$, and a next zero-testing state 
    $q'\in Q$, such that
    the current counter value $n$ belongs to the set 
    $X^c_{q,q'}$ as defined in \eqref{eq:per2}.
    Upon positive test, $\mach'$
    increments the counter by $c$ and goes to the next state
    $q'[r + c \bmod N]$.

    As the remainder of $n$ modulo $N^c_{q,q'}$ is fixed and determined by 
    $r$,
    checking membership $n\in X^c_{q,q'}$ reduces to checking
    if $n\geq M^c_{q,q'}$
    (doable by decrementing the counter by $M^c_{q,q'}$, and then 
    incrementing it by $M^c_{q,q'}$ back)
    or checking if $n=m$ for some $m<M^c_{q,q'}$
   (doable similarly, by decrementing the counter by $m$, a zero-test, 
   and incrementing the counter by $m$ back).
   Auxiliary states are used as intermediate states to perform the above 
   steps.

  This construction assures the equivalence \eqref{eq:mach}. Considering
  now the unrestricted reachability relation $\step^*$ of $\mach'$,
  for every zero-testing control states $q,q'\in Q$ 
    % $c\in\setfromto {-C} C$, 
    and $n,n'\in\N$,
    %  we have:    
    \begin{align} \label{eq:mach2}
    {q(n) \stepp {0}^* {q}'(n')} \text{ in } \mach \ \iff \ 
    {q[r](n)\step^* q'[r'](n') \text{ in } \mach'},
    \end{align}
    where $r$ and $r'$ are the remainders of $n$ and $n'$ modulo $N$,
    respectively.
    The construction of $\mach'$ is effective.
    %  and its size is doubly exponential in $\mysize \mach$.
    
    \para{Effective semilinearity of $\stepp 0^*$}
    As 1-\cm{s} are a special case of pushdown automata,
    we may now obtain \Cref{lem:0star} by Parikh's Theorem:
    for each pair of zero-testing states $q, q' \in Q$,
    we build a pushdown automaton recognizing the language
    \[
        L_{q, q'} = \setof{ a^n b^{n'} }{ q(n) \stepp{0}^* q'(n') },
    \]
    or, equivalently,
    \[
        \setof{ a^n b^{n'} }{ q[n \bmod N](n) \step_{\mach'}^* q'[n' \bmod N](n') }.
    \]
    Applying Parikh's Theorem to this language, we'll get that
    the relation $\stepp{0}^*$ is effectively semilinear
    \cite{Parikh66,EGKL11}.

\begin{figure}[h]
\centering
\scalebox{0.7}
{\begin{tikzpicture}[
    >=stealth,
    state/.style={fill=white,state without output,minimum size=3.5em},
    initial text=,
    accepting/.style={accepting by arrow},
]
    \node[state,initial] (p0) at ($ (-360/6*0+150:2cm) $) {$p[0]$};
    \node[state]         (p1) at ($ (-360/6*1+150:2cm) $) {$p[1]$};
    \node[state]         (p2) at ($ (-360/6*2+150:2cm) $) {$p[2]$};
    \node[state]         (p3) at ($ (-360/6*3+150:2cm) $) {$p[{\scriptstyle\cdots}]$};
    \node[state]         (p4) at ($ (-360/6*4+150:2cm) $) {$\scriptstyle p[N\!-2]$};
    \node[state]         (p5) at ($ (-360/6*5+150:2cm) $) {$\scriptstyle p[N\!-1]$};

    \draw[fill=black!10] (4,4.5) rectangle (9,-4.5);
    \node at (6.5,0) {$\mach'$};

    \node[state] (q0) at (5,3.5)    {$q[0]$};
    \node[state] (q1) at (5,2.1)  {$q[1]$};
    \node[state] (q2) at (5,0.7)  {$q[2]$};
    \node[state] (q3) at (5,-0.7) {$q[{\scriptstyle\cdots}]$};
    \node[state] (q4) at (5,-2.1) {$\scriptstyle q[N\!-2]$};
    \node[state] (q5) at (5,-3.5)   {$\scriptstyle q[N\!-1]$};

    \draw[->] (p0) to[out=90,in=170,looseness=0.7]  node[above] {$\varepsilon$} (q0);
    \draw[->] (p1) to[out=20,in=160,looseness=0.7]  node[above] {$\varepsilon$} (q1);
    \draw[->] (p2) to[out=10,in=160,looseness=0.7]  node[above] {$\varepsilon$} (q2);
    \draw[->] (p3) to[out=-10,in=200,looseness=0.7] node[below] {$\varepsilon$} (q3);
    \draw[->] (p4) to[out=-20,in=200,looseness=0.7] node[below] {$\varepsilon$} (q4);
    \draw[->] (p5) to[out=270,in=190,looseness=0.7] node[below] {$\varepsilon$} (q5);

    \node[state] (qq0) at (8,3.5)  {$q'[0]$};
    \node[state] (qq1) at (8,2.1)  {$q'[1]$};
    \node[state] (qq2) at (8,0.7)  {$q'[2]$};
    \node[state] (qq3) at (8,-0.7) {$q'[{\scriptstyle\cdots}]$};
    \node[state] (qq4) at (8,-2.1) {$\scriptstyle q'[N\!-2]$};
    \node[state] (qq5) at (8,-3.5) {$\scriptstyle q'[N\!-1]$};

    \node[state,accepting] (pp0) at ($ (13,0) + (360/6*0+30:2cm) $) {$p'[0]$};
    \node[state]           (pp1) at ($ (13,0) + (360/6*1+30:2cm) $) {$p'[1]$};
    \node[state]           (pp2) at ($ (13,0) + (360/6*2+30:2cm) $) {$p'[2]$};
    \node[state]           (pp3) at ($ (13,0) + (360/6*3+30:2cm) $) {$p'[{\scriptstyle\cdots}]$};
    \node[state]           (pp4) at ($ (13,0) + (360/6*4+30:2cm) $) {$\scriptstyle p'[N\!-2]$};
    \node[state]           (pp5) at ($ (13,0) + (360/6*5+30:2cm) $) {$\scriptstyle p'[N\!-1]$};

    \draw[->] (qq0) to[out=10,in=90,looseness=0.7]  node[above] {$\varepsilon$} (pp0);
    \draw[->] (qq1) to[out=20,in=160,looseness=0.7]  node[above] {$\varepsilon$} (pp1);
    \draw[->] (qq2) to[out=20,in=170,looseness=0.7]  node[above] {$\varepsilon$} (pp2);
    \draw[->] (qq3) to[out=-20,in=190,looseness=0.7] node[below] {$\varepsilon$} (pp3);
    \draw[->] (qq4) to[out=-20,in=200,looseness=0.7] node[below] {$\varepsilon$} (pp4);
    \draw[->] (qq5) to[out=-10,in=270,looseness=0.7] node[below] {$\varepsilon$} (pp5);

    \draw[->] (p0) edge node[above left]  {$a$} node[below]         {$+1$} (p1);
    \draw[->] (p1) edge node[above]       {$a$} node[below left]    {$+1$} (p2);
    \draw[->] (p2) edge node[right]       {$a$} node[left]          {$+1$} (p3);
    \draw[->] (p3) edge node[below right] {$a$} node[above]         {$+1$} (p4);
    \draw[->] (p4) edge node[below]       {$a$} node[above right]   {$+1$} (p5);
    \draw[->] (p5) edge node[left]        {$a$} node[right]         {$+1$} (p0);

    \draw[->] (pp1) edge node[above]        {$b$} node[below left]   {$-1$} (pp0);
    \draw[->] (pp2) edge node[above left]   {$b$} node[below]        {$-1$} (pp1);
    \draw[->] (pp3) edge node[left]         {$b$} node[right]        {$-1$} (pp2);
    \draw[->] (pp4) edge node[below]        {$b$} node[above right]  {$-1$} (pp3);
    \draw[->] (pp5) edge node[below right]  {$b$} node[above]        {$-1$} (pp4);
    \draw[->] (pp0) edge node[right]        {$b$} node[left]         {$-1$} (pp5);
\end{tikzpicture}}
\caption{
    The construction we use to extract the language $L_{q,q'}$ out of
    the 1-\cm $\mach'$.
}
\label{fig:parikh}
\end{figure}
    It remains to describe the construction of the aforementioned pushdown automata,
    which we depict on \Cref{fig:parikh}.
    We extend the 1-\cm $\mach'$ to a \emph{labelled} 1-\cm, where
    every transition is labelled with a symbol from the alphabet $\{a,b\}$,
    or with $\varepsilon$.
    Fix a pair of zero-testing states $q,q'\in Q$ for which
    $L_{q, q'}$ is to be recognized.
    Take all the transitions of 
    $\mach'$ as unlabelled (labelled with $\varepsilon$), 
    add $N$ extra states $p[0] \ldots p[N-1]$ forming 
    an $a$-labelled cycle of transitions incrementing the counter by
    $1$:
    \[
    (p[i], \updinstr{1}, p[i+1  \bmod N]) \qquad \text{for all } 
    i\in\setfromto 0 {N-1}
    \] 
    (which assures storing the remainder modulo $N$ in states), 
    add another $N$ extra states $p'[N-1] \ldots p'[0]$ forming 
    a $b$-labelled cycle of transitions decrementing the counter by 1:
    \[
    (p'[i], \updinstr{-1}, p'[i-1 \bmod N]) \qquad \text{for all } 
    i\in\setfromto 0 {N-1},
    \]
    and finally add 0-update transitions from
    $p[i]$ to $q[i]$, and from $q'[i]$ to $p'[i]$:
    \[
    (p[i], \updinstr{0}, q[i]), 
    \quad 
    (q'[i], \updinstr{0}, p'[i]) \qquad \text{for all } 
    i\in\setfromto 0 {N-1}.
    \]
    A labelled 1-\cm extended in this manner recognizes
    the language $L_{q,q'}$ when considered
    with initial configuration $p[0](0)$ and final configuration $p'[0](0)$.
    By the previous discussion, the lemma follows by Parikh's theorem.

%% file: suffix.tex
\section{Target pumping: Proof of \Cref{lem:targetpump}}\label{sec:targetpump}
%

% We now prove \Cref{lem:targetpump}.
We focus on proving point $\punkt 2 2$, noticing however that
literally the same argument 
works for point $\punkt 1 2$.
% \sla{swapped $\punkt 1 2$ with $\punkt 2 2$}
The other two points, namely $\punkt 2 1$ and $\punkt 1 1$,
follow immediately by swapping the coordinates.
% \mei{I don't think the symmetry assumption is necessary
% for this particular step.}
% \sla{right, i've removed the symmetry assumption.}

To ease the presentation, we introduce the following definitions:
\begin{itemize}
\item For a transition $t = (q, \updinstr{w}, q') \in T$, the states $q, q'$
    are said to be the \emph{source} and \emph{target} of $t$ respectively,
    and the vector $w$ is said to be the \emph{effect} of $t$.
\item A \emph{path} is a sequence of transitions such that the target
    of each transition matches the source of the next transition
    in the sequence. Unlike a run, a path does not fully specify
    the source and target configurations.

    The source of the first transition is said to be the \emph{source} of the path,
    and the target of the final transition is said to be the \emph{target} of the
    path. The \emph{effect} of a path is defined as the sum of the effects of its
    transitions.
\item A \emph{cycle} is a path whose source and target coincide.
\item A path $\rho$ is said to be \emph{fireable} at a configuration $q(\vr v)$
    if there exists a configuration $q'(\vr v')$ such that $q(\vr v) \xstep{\rho} q'(\vr v')$.
    This amounts to requiring that $\rho$ can be executed starting from $q(\vr v)$,
    without the counters having to go negative along the way.
    We denote the target configuration $q'(\vr v')$ thus obtained by $q(\vr v); \rho$.
\item For paths $\rho_1$ and $\rho_2$, denote their concatenation by $\rho_1; \rho_2$.
% \item A path $\rho_2$ is said to be \emph{fireable after} a path $\rho_1$, if for any
%     configuration $q(\vr v)$ at which $\rho_1$ is fireable, the concatenation
%     $\rho_1; \rho_2$ is fireable as well.
\item For a path $\rho$, the \emph{transposed path} $\overline{\rho}$ is obtained
    by applying the symmetry assumption at each transition.
    If $\rho$ has effect $(a, b)$, $\overline{\rho}$ has effect $(b, a)$.
\end{itemize}

We define the \emph{norm} of a vector $\vr w = (w_1,w_2) \in \Z^2$ 
to be the maximal absolute value of its components,
$\norm{\vr w}=\max\{|w_1|,|w_2|\}$, and 
the norm of a 2-\vass $\V=(Q,T)$ to be the maximal norm of the effects 
of its transitions, 
$\norm \V = \max\{\norm{\vr w} \mid (q,\updinstr{\vr w},q')\in T\}$.

\begin{proof}[Proof of \Cref{lem:targetpump} $\punkt 2 2$]
Consider a fixed symmetric 2-\vass $\V = (Q,T)$ and
a run $\sigma : q(0,a)\step^* q'(0,a')$ for some $a, a'\in \N$.
Assuming that the difference $a'-a$ is sufficiently large
(namely larger than some $C$, depending only on $\V$),
we modify the run to lift the target configuration to $q'(0,a'+\gamma D_\sigma)$ for any
$\gamma \in \N$ and some particular $D_\sigma$, which depends on both $\V$ and $\sigma$.
Holding $\V$ fixed, $D_\sigma$ will only attain finitely many different values,
hence we argue that taking the least common multiple of all $D_\sigma$ yields a
constant $D$ satisfying the claim of the lemma.

Denote by $\widetilde \sigma$ the sequence of transitions fired along
the run $\sigma$, obtaining its corresponding path.
According to the linear-path-scheme (LPS) decomposition of 
\cite[Thm.~1]{Blondin15}, 
we may assume \mywlog that 
$\widetilde\sigma$ is built of paths 
$t_0, \ldots, t_\ell$ and cycles $c_1, \ldots, c_\ell$:
\begin{align}\label{eq:lps}
\widetilde\sigma \ \in \ t_0 \ c_1^* \ t_1 \ c_2^* \ \ldots \ c_\ell^* \ t_\ell,
\end{align}
where the lengths of each path $t_i$ and cycle $c_j$ are at most $e$,
and both $\ell$ and $e$ are bounded by a polynomial of the the size
$
 \mysize \V = 
\mysize Q + 
2 \cdot \norm \V \cdot \mysize {T}
% \, + \, \norm w \, + \, \norm{w'},
$
of the unary representation of $\V$.
Note that we do not claim any bound the number $n_j$ of iterations
in $\widetilde\sigma$ of the cycles $c_j$.

Let $C_0 \leq e \norm \V$ be the minimal value such that every cycle of $\V$
of length at most $e$ is \emph{fireable} at $\vr v_0=(C_0, C_0)$.
Call a cycle $c_j$ \emph{major} if its number of iterations $n_j$ is larger
than $2 C_0$, and \emph{minor} otherwise.
\Mywlog we may assume that all cycles are major,
by inlining any minor cycles $c_i$ into their neighboring one-off paths $t_{i-1}$ and $t_i$.
After this operation, the lengths of the one-off paths are bounded by
$e' := e (2C_0 \ell + \ell + 1)$.
% \sla{changed}
% \mei{nice catch, the previous value indeed wasn't enough.}
% $e' := e(\ell+1) + 2C_0\ell$. \sla{changed to 'sum of lengths'}

Call a cycle $c_j$ \emph{sum-increasing} if its effect $(x_j, y_j)$ satisfies $x_j + y_j > 0$.
We claim that, by setting a large enough value of $C$, we can guarantee that the run
will contain at least one (major) sum-increasing cycle.
Indeed, as we are assuming that $a' - a > C$, the sum of counters
must increase by more than $C$ along $\sigma$.
% In particular, the total effect of the one-off paths
% $t_0, \ldots, t_\ell$ is bounded,
% as each of them may increase the sum of the counters by at most $2 e \norm \V$.
% Likewise, a single iteration of a cycle may only increase the sum of the counters by at most
% $2 e \norm \V$, and each minor cycle gets executed at most $2 C_0$ times.
Thus, it is enough when
\[
    C \ \geq \ C_1 \ := \ 
    2\norm\V e'_\mathrm{max} (\ell_\mathrm{max} + 1),
\]
where $e'_\mathrm{max}$ and $\ell_\mathrm{max}$ are the 
(polynomial in $\mysize\V$) bounds
on $e'$ and $\ell$, respectively.
% \lukasz{Does it make sense to introduce $e'_\mathrm{max}$ and $\ell_\mathrm{max}$?
% It is enough to say that $\ell$ and $e'$ are polynomially bounded 
% to avoid introducing new variables.
% }
% \sla{let's keep them. at least $l_{max}$ we need}
% \mei{yup, as currently written, $e'$ and $\ell$ are run-specific, and
% then we take the max over all runs}

We now focus our attention on the sum-increasing cycles, and perform
case analysis in terms of the sign of each coordinate of their effect.
Specifically, the effect of a \emph{sum-increasing} cycle can be:
vertical $(0,+)$, horizontal $(+,0)$, or one of $(+,-)$, $(+,+)$, or $(-,+)$.
We classify these into three cases (\Cref{fig:cases}, left).

\begin{figure}[h]
\centering
\begin{tikzpicture}[
    >=stealth,
    every node/.style={font=\small}
]
\begin{scope}
    \fill[blue!10] (0,0) rectangle (2,2);
    \fill[orange!10] (0,0) -- (0,2) -- (-2,2) -- cycle;
    \fill[orange!10] (0,0) -- (2,0) -- (2,-2) -- cycle;
    \fill[black!10] (-2,-2) -- (2,-2) -- (-2,2) -- cycle;
    % axes
    \draw[->, black!50] (-2,0) -- (2,0);
    \draw[->, black!50] (0,-2) -- (0,2);

    \draw[->, very thick, green!70!black] (0,0) -- (0,2)
        node[above, green!50!black] {\footnotesize Case 1};
    \draw[->, very thick, blue!70!black] (0,0) -- (2,0)
        node[right, blue!50!black] {\footnotesize Case 3};
    \node at (1,1) {Case 3};
    \node[rotate=-45] at (1.4,-0.6) {\footnotesize Case 2};
    \node[rotate=-45] at (-0.6,1.4) {\footnotesize Case 2};

    % boundary of sum-increasing effects
    \draw[very thick] (-2,2) -- (2,-2);

    \node[rotate=-45,align=center] at (-0.6,-0.6)
            {Not sum-increasing \\ (irrelevant)};
\end{scope}
\begin{scope}[xshift=6cm]
    \fill[cyan!10] (0,0) -- (-1,2) -- (2,2) -- (2,-1) -- cycle;

    % axes
    \draw[->] (-2,0) -- (2,0) node[right] {$x$};
    \draw[->] (0,-2) -- (0,2) node[above] {$y$};

    % target direction
    \draw[->, very thick, green!60!black] (0,0) -- (0, 1.5)
        node[above right, black, pos=0.8, scale=0.8] {$(0, D_\sigma)$};

    % cycle effects
    \draw[->, very thick] (0,0) -- ($ 0.5*(-0.8, 1.6) $)
        node[above left] {$c_j$};
    \draw[->, very thick] (0,0) -- ($ 0.5*(1.6, -0.8) $)
        node[below right] {$\overline{c_j}$};

    % boundary of sum-increasing effects
    \draw[dashed] (-2,2) -- (2,-2);
    % axis of symmetry
    \draw[dashed] (-2,-2) -- (2,2);
        % node[above, pos=.9, rotate=45] {\scriptsize axis of}
        % node[below, pos=.9, rotate=45] {\scriptsize symmetry};
\end{scope}
\end{tikzpicture}
\caption{
    A view of the $(\Delta x, \Delta y)$ space of effects of individual cycles.
    \figleft An illustration of how the runs are split into
    three cases, based on the effects of the sum-increasing cycles
    present in the run.
    Cases 1--2 have \emph{priority}, in that as long as at least one
    cycle in the relevant area is present, the run is handled
    by one of these cases. On the other hand,
    Case 3 is the \emph{fallback}: it handles runs where
    \emph{all} sum-increasing cycles fall in the area labeled ``Case 3'' (blue).
    \figright Proof strategy for Case 2.
    Given a cycle $c_j$, we make use of the symmetry assumption
    to obtain the transposed cycle $\overline{c_j}$.
    The shaded area demarcates the effects we can obtain from linear
    combinations of $c_j$ and $\overline{c_j}$ with positive coefficients.
    Observe that the shaded area contains the desired vertical effect
    (green, not to scale).
}
\label{fig:cases}
\end{figure}

\para{Case 1: Some sum-increasing cycle $c_j$ has vertical $(0,+)$ effect}
One extra iteration of this cycle yields the claim, with the vertical effect
of the cycle determining $D_\sigma$.

\para{Case 2: Some sum-increasing cycle $c_j$ has effect $(-,+)$ or $(+,-)$}
Assume that $c_j$ has effect $(-x,y)$ for some positive integers $x,y$.
Observe that $y > x$.
The case of $(+,-)$ effect is analogous.

As an outline, we will seek to increase the iteration count $n_j$ of
the cycle $c_j$, and introduce into the run the transposed cycle
$\overline{c_j}$, such that the net effect of our modifications
becomes vertical (see \Cref{fig:cases}, right).

For our argument it is enough if the $n_j$ is large enough to guarantee that 
somewhere in the middle of $c_j^{n_j}$, the 
transposed cycle $\overline{c_j}$ becomes fireable.
We observe the following key claim, relying on the fact that  
the effect of $c_j$ is nonzero on both coordinates
(in particular note that we do not use the sum-increasing
assumption for this claim):
\begin{claim}  \label{claim:fireable}
    If $c_j^{n_j}$ is fireable at $q(\vr v)$, then for any
    repetition count $i$ satisfying
    $C_0 \leq i \leq n_j - C_0$,
    the transposed cycle $\overline{c_j}$ is fireable at $q(\vr v); c_j^i$.
\end{claim}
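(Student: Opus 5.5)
The plan is a direct monotonicity argument using the defining property of $C_0$. Every cycle of $\V$ of length at most $e$ is fireable at $(C_0,C_0)$. Firing a path in a \vass is monotone: if a path is fireable at $q(\vr u)$ and $\vr u \le \vr u'$ componentwise, then it is fireable at $q(\vr u')$, since all intermediate counter values only grow.

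The transposed cycle $\overline{c_j}$ is again a cycle on the control state $q$ (the source of $c_j$), of the same length as $c_j$, hence of length at most $e$. Inlining minor cycles only lengthens the one-off paths $t_i$, never the cycles. So $\overline{c_j}$ is fireable at $q(C_0,C_0)$. It therefore suffices to show that the counter vector reached after $i$ iterations of $c_j$ dominates $(C_0,C_0)$ componentwise.

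Write the effect of $c_j$ as $(w_1,w_2)$, with both coordinates nonzero by assumption. After $i$ iterations from $q(\vr v)$ we are at $q(\vr v + i(w_1,w_2))$. I treat each coordinate $k\in\{1,2\}$ separately according to the sign of $w_k$.

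If $w_k>0$, then $w_k\ge 1$ and $\vr v[k]\ge 0$, so $\vr v[k] + i w_k \ge i \ge C_0$.

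If $w_k<0$, I use the fireability of the whole block $c_j^{n_j}$. In particular the configuration reached after all $n_j$ iterations has nonnegative counters, so $\vr v[k] + n_j w_k \ge 0$. Therefore
\[
\vr v[k] + i w_k \;=\; (\vr v[k] + n_j w_k) + (n_j - i)\,|w_k| \;\ge\; n_j - i \;\ge\; C_0 .
\]

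Both coordinates are thus at least $C_0$ at $q(\vr v); c_j^i$, and monotonicity gives fireability of $\overline{c_j}$ there. The only genuine point is that nonzero coordinates make each counter move by at least $1$ per iteration. This lets the lower bound $i \ge C_0$ take care of the increasing coordinate, and the upper bound $i \le n_j - C_0$, together with the final configuration being nonnegative, take care of the decreasing one. This is exactly where the hypothesis that both coordinates are nonzero is used. No step seems hard beyond checking that the length bound $e$ still applies to $c_j$ after the preprocessing.
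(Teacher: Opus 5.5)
Your proof is correct and follows essentially the same route as the paper. The paper writes the boundary configurations as an arithmetic progression, bounds the increasing coordinate below via $i \geq C_0$ and the decreasing one via $n_j - i \geq C_0$ together with nonnegativity of the final configuration, and then implicitly appeals to the definition of $C_0$. You additionally make explicit the monotonicity of fireability and the fact that $\overline{c_j}$ still has length at most $e$, and you handle all sign patterns uniformly rather than only the $(-x,y)$ case.
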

\begin{proof}
The $n_j+1$ configurations at the boundaries between
the iterations of $c_j$ within $c_j^{n_j}$
form an arithmetic progression given by
$(x_i, y_i) = (x_0 - i x, y_0 + i y)$.
%For a coordinate at which the
%effect of the cycle is positive, say $x > 0$, we obtain $x_i \geq C_0$
%for $i \geq C_0$. On the other hand, for a coordinate with a negative effect,
%say $y < 0$, we have $y_i = y_{n_j} - (n_j - i)y$, with $y_{n_j} \geq 0$, and
%thus $y_i \geq n_j - i$. Hence for $i \leq n_j - C_0$ we have $y_i \geq C_0$.
%As such, for $C_0 \leq i \leq n_j - C_0$, both coordinates are above $(C_0, C_0)$,
%which suffices to obtain that $\overline{c_j}$ is fireable.
%It suffices to show that for appropriate $i$ we have $x_i, y_i \geq C_0$.
%For coordinates at which the effect of $c_j$ is positive, say $x > 0$,
%It suffices that $C_0 \leq i \leq n_j - C_0$, as then we have
%$x_i = x_0 - i x \leq x_0 - i \geq C_0$ and 
Observe that for $C_0 \leq i \leq n_j - C_0$
we have
\begin{gather*}
    y_i \,=\, y_0 + iy \,\geq\, y_0 + i \,\geq\, C_0, \\
    x_i \,=\, x_0 - ix \,=\, x_{n_j} + (n_j - i) x
    \,\geq\, x_{n_j} + n_j - i \,\geq\, C_0.
    \qedhere
\end{gather*}
%
%Similarly, for coordinates with negative effect, say $y < 0$,
%it's enough that $i \leq n_j - C_0$.
\end{proof}

Recall now that, having eliminated all minor cycles by inlining,
we must have $n_j > 2C_0$. Thus,
the range of values of $i$ admissible for \Cref{claim:fireable}
is non-empty. Setting $i := C_0 + 1$, we have that the transposed cycle
$\overline{c_j}$ is fireable after the first $C_0 + 1$ iterations
of $c_j$ within $\sigma$. Equivalently, shifting the point of view
one iteration earlier, the concatenation $c_j; \overline{c_j}$
is fireable after $c_j^{C_0}$.

Recall we have that $y > x$, as the cycle is
sum-increasing. Thus, the effect of $c_j; \overline{c_j}$, being $(y-x, y-x)$,
is positive on both coordinates, and $c_j; \overline{c_j}$ may be fired arbitrarily
many times in the middle of $c_j^{n_j}$, without interference with the remainder of $\sigma$.
% Explicitly "in the middle", not "at the midpoint", the latter might not even exist
% if n_j happens to be odd.

% Let $k$ be the least common multiplicity of $m_2 - m_1$ and $m_1$.  
Now set $u:=x(y-x)$.
By iterating $x$ times the cycle $c_j; \overline{c_j}$,
we get a cycle with effect $(u,u)$, and then appending
$c_j^{y-x}$ we get the desired vertical effect.
Indeed, the effect of
\[
\big(c_j; \overline{c_j}\big)^{x}; c_j^{y-x}
\]
is $(0, (x+y)(y-x))$, and this cycle is still fireable in the middle of
$c_j^{n_j}$.
We thus get the claim with $D_\sigma = (x+y)(y-x)$.

\begin{figure}[h]
\centering
\begin{tikzpicture}[
    >=stealth,
    scale=0.8,
    every node/.style={font=\small}
]
    \begin{scope}
        \fill[green!10] (0,0) rectangle (3,3);
        \fill[green!10] (0,0) rectangle (-3,-3);
        \fill[red!10] (0,0) rectangle (-3,3);
        \fill[black!15] (0,0) rectangle (3,-3);
        \fill[black!15] (0,0) -- (0,3) -- (-3,3) -- cycle;
        \fill[red!10] (0,0) -- (0,-3) -- (3,-3) -- cycle;
      % axes
      \draw[->] (-3,0) -- (3,0) node[right] {$x$};
      \draw[->] (0,-3) -- (0,3) node[above] {$y$};

      % rays corresponding to steepness
      \draw[dashed] (0,0) -- (1.75,2.0)
          node[pos=.92, right] {$\varphi_1$};

      \draw[dashed] (-1.2,-0.6) -- (2.4,1.2)
          node[pos=1.0, below] {$\varphi_2$};

      \draw[dashed] (0,0) -- (2.5,-1.3)
          node[pos=1.0, below] {$\varphi_3$};

      % cycle effects
      \draw[->, very thick] (0,0) -- (-1.3,-1.5)
          node[pos=.82, right] {$c_1$};

      \draw[->, very thick] (0,0) -- (1.8,0.9)
          node[pos=.82, below] {$c_2$};

      \draw[->, very thick] (0,0) -- (-1.8,0.95)
          node[pos=.82, below] {$c_3$};

      % angle / steepness indication
      % \node at (6.1,2.3) {$\varphi_1>\varphi_2>\varphi_3$};

      % labels
      % \node[below right] at (4.3,0.34) {horizontal effect};
      % \node[above left] at (-0.2,2.09) {vertical effect};

        % sign labels
        % \node[scale=0.8,below left] at (3,0) {$x_2>0$};
        % \node[scale=0.8,below right] at (-3,0) {$x_1,x_3<0$};

        \node[scale=0.8,above right] at (-3,-3) {$\varphi > 0$};
        \node[scale=0.8,below left] at (3,3) {$\varphi > 0$};
        \node[scale=0.8,above right] at (0,-3) {$\varphi < 0$};
        \node[scale=0.8,above right] at (-3,0) {$\varphi < 0$};
        \node[scale=0.7,below left,align=right] at (0,3) {not in \\ Case 3};
        \node[scale=0.7,below left,align=right] at (3,0) {not in \\ Case 3};
    \end{scope}

    \begin{scope}[xshift=8cm]
        \draw[->] (-3,0) -- (3,0) node[right] {$x$};
        \draw[->] (0,-3) -- (0,3) node[above] {$y$};

        \coordinate (a) at (2.5,1.5);
        \draw[->, thick, dotted, green!50!black] (a) -- (0, 2.0)
            node[left] {$\checkmark c_{j_1}$};
        \draw[->, thick, dotted, green!50!black] (a) -- (0, 1.0)
            node[left] {$\checkmark c_{j_2}$};
        \draw[->, thick, dotted, red!50!black] (a) -- (0, -1.5)
            node[left] {$\times c_{j_3}$};
        \draw[->, very thick] (0,0) -- (a)
            node[right] {$c_i$};
    \end{scope}
\end{tikzpicture}
\caption{
  The steepness comparison central to the proof strategy of Case 3.
  \figleft An illustration of the properties of steepness.
  The vectors represent the effects of three cycles
  $c_1$, $c_2$, and $c_3$, where $c_2$ is horizontally
  positive, and the other two are horizontally negative.
  The cycle $c_1$ is steeper than $c_2$, and $c_2$ is steeper than $c_3$.
  Cycles whose effect lies within the area shaded green (resp. red) have positive (resp. negative) steepness.
  The area shaded gray corresponds to sum-increasing cycles that get handled in Case 2,
  and thus cannot occur in Case 3.
  \figright A sum-increasing cycle $c_i$,
  with three candidates for a subsequent horizontally-negative cycle $c_j$ (cf. \Cref{claim:steep}),
  illustrated in terms of their combined effect.
  %(when combined with coefficients chosen to put said effect on the vertical axis).
  Observe how the vertical effect obtained this way is positive exactly
  when $c_i$ is steeper than $c_j$.
}
\label{fig:steep}
\end{figure}

\para{Case 3: All sum-increasing cycles have $(+,0)$ or $(+,+)$ effect}
Let us restrict our attention to those cycles $c_j$, whose effect 
$(x_j,y_j)$ satisfies $x_j\neq 0$.
Depending on whether $x_j>0$ or $x_j<0$, we call such cycles 
\emph{horizontally positive}, or \emph{horizontally negative} respectively.
Define the \emph{steepness} of such cycles $c_j$ as $y_j/x_j$, and denote it by $\varphi_j$.
Note that, while in general the steepness may be negative,
by assumption it will never be negative for a sum-increasing cycle
(see \Cref{fig:steep}, left).

We will now aim at increasing 
the repetition count of one of the sum-increasing
cycles, together with one other cycle, occurring later in the run, 
in order to obtain a vertical effect (see \Cref{fig:steep}, right).
That suitable cycles exist is stated by the following claim.

\begin{figure}[t]
\centering
\begin{tikzpicture}[
    >=stealth,
    scale=1.0,
    every node/.style={font=\small}
]

% ------------------------------------------------------------
% Run in configuration space
% ------------------------------------------------------------

% Axes
\draw[->] (0,0) -- (11.8,0) node[right] {$x$};
\draw[->] (0,0) -- (0,4.5) node[above] {$y$};

% Configurations along the run
\coordinate (P0) at (0.0,1.5);
\coordinate (P1) at ($ (P0) + (0.5,-0.6) $);
\coordinate (P1e) at ($ (P1) + (1.7,0.2) $);
\coordinate (P2) at ($ (P1e) + (0.5,0.8) $);
\coordinate (P2e) at ($ (P2) + (2.3,2.3) $);
\coordinate (P3) at ($ (P2e) + (1.0,0.5) $);
\coordinate (P3e) at ($ (P3) + (-0.8,-3) $);
\coordinate (P4) at ($ (P3e) + (1.0,0.2) $);
\coordinate (P4e) at ($ (P4) + (1.5,1.0) $);
\coordinate (P5) at ($ (P4e) + (0.5,1.0) $);
\coordinate (target) at (0,4);

% Ordinary path segments
\draw[dotted,->] (P0) -- (P1);
\draw[->,very thick,green!70!black] (P1) -- (P1e) node[pos=.55,above,black] {$c_1^{n_1}$};
\draw[dotted,->] (P1e) -- (P2);
\draw[->,very thick,green!70!black] (P2) -- (P2e) node[pos=.45,right,black] {$c_2^{n_2}$};
\draw[dotted,->] (P2e) -- (P3);
\draw[->,red!70!black] (P3) -- (P3e) node[pos=.55,right,black] {$c_3^{n_3}$};
\draw[dotted,->] (P3e) -- (P4);
\draw[->,green!70!black] (P4) -- (P4e) node[pos=.45,above,black] {$c_4^{n_4}$};
\draw[dotted,->] (P4e) -- (P5);

% ------------------------------------------------------------
% Configuration labels
% ------------------------------------------------------------

\fill (P0) circle (1.5pt);
\fill (P1) circle (1.5pt);
\fill (P2) circle (1.5pt);
\fill (P3) circle (1.5pt);
\fill (P1e) circle (1.5pt);
\fill (P2e) circle (1.5pt);
\fill (P3e) circle (1.5pt);
\fill (P4) circle (1.5pt);
\fill (P4e) circle (1.5pt);
\fill (P5) circle (1.5pt);
\fill (target) circle (1.5pt);

\node[scale=0.7,left] at (P0) {$(0, a)$};
\node[scale=0.7,left] at (target) {$(0, a')$};
\node[scale=0.7,below right]      at (P1) {$(X^\pre_1, Y^\pre_1)$};
\node[scale=0.7,below right]      at (P1e) {$(X^\post_1, Y^\post_1)$};
\node[scale=0.7,below right]      at (P2) {$(X^\pre_2, Y^\pre_2)$};
\node[scale=0.7,above left]      at (P2e) {$(X^\post_2, Y^\post_2)$};
\node[scale=0.7,right]      at (P3) {$(X^\pre_3, Y^\pre_3)$};
\node[scale=0.7,below] at (P3e) {$(X^\post_3, Y^\post_3)$};
\node[scale=0.7,below right]      at (P4) {$(X^\pre_4, Y^\pre_4)$};
\node[scale=0.7,below right] at (P4e) {$(X^\post_4, Y^\post_4)$};
\node[scale=0.7,right]      at (P5) {$(X^\pre_5, Y^\pre_5)$};

  % angle / steepness indication
  % \node at (11,5.0) {$\varphi_1<\varphi_2<\varphi_3$};

% invariant lines
\draw[dashed,thick,blue!70!black] ($ (0,2) + -0.2*(1.7,0.2) $) -- +($ 1.7*(1.7,0.2) $);
\draw[dashed,thick,blue!70!black] ($ (0,2.5) + -0.2*(1,1) $) -- +($ 2.7*(1,1) $);
\end{tikzpicture}

\caption{A run prefix that does not yet satisfy \Cref{claim:steep}.
Solid lines represent iterated cycles, and dotted lines represent 
one-off paths. Record-breaking cycles are drawn using thicker lines;
the color of the cycle indicates whether the cycle increases
or decreases the counter sum.
Parallel to each record-breaking cycle, dashed lines represent
the successive invariant lines, below which the configuration
will stay.
Note the target configuration $(0,a')$, above both invariant lines.
% The cycles $c_1$ and $c_2$ are 
% horizontally positive and sum-increasing, the latter steeper than
% the former, and $c_3$ is a horizontally negative (sum-decreasing) cycle
% steeper than both $c_1$ and $c_2$.
% Thus none of sum-increasing cycles is followed by a less steep
% horizontally negative cycle.
% Observe how this run gets further away from the Y axis as it progresses.
% Intuitively, runs violating \Cref{claim:steep} like this one are ruled out, because
% we're considering runs that end \emph{on} the Y axis,
% above the starting point of the run.
% To extend the run and have it reach such an endpoint, one would need to extend it
% with a horizontally-negative cycle that's either 
% less steep than $c_2$, or sum-increasing, with the latter option
% putting the run in the purview of Case 2.
}
\label{fig:steep-counterexample}
\end{figure}

\begin{claim} \label{claim:steep}
    There are some $i<j$ such that $c_i$ is a sum-increasing cycle, $c_j$ is
    horizontally negative, and $c_i$ is steeper than $c_j$ (\ie $\varphi_i > \varphi_j$).
\end{claim}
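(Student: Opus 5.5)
The plan is to argue by contradiction, using a potential function given by a moving linear functional, in the spirit of the ``invariant lines'' of \Cref{fig:steep-counterexample}. Suppose that no pair $i<j$ as in \Cref{claim:steep} exists. I will show that the counter values along $\sigma$ then stay below a line that meets the $y$-axis at height $a$ plus a bounded error. This contradicts $a'>a+C$, provided $C$ is enlarged once more by a constant depending only on $\V$.

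\textbf{The potential.} For each position in $\widetilde\sigma$, let $\varphi_{\mathrm{cur}}$ be the maximal steepness among the sum-increasing cycles $c_i$ occurring so far, with the iterations of a cycle counted from its first iteration on. If no sum-increasing cycle has occurred yet, set $\varphi_{\mathrm{cur}}:=-1$. Define the potential of a configuration $(x,y)$ as $L(x,y)=y-\varphi_{\mathrm{cur}}\,x$. Initially $L(0,a)=a$; at the target, $L(0,a')=a'$ whatever $\varphi_{\mathrm{cur}}$ is. By the Case~3 assumption, every sum-increasing cycle has $x_i>0$ and $\varphi_i\geq 0$.

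\textbf{Step 1: switching the record never increases $L$.} When a new record $\varphi_i>\varphi_{\mathrm{cur}}$ appears, $L$ can only decrease, because $x\geq 0$ in every configuration.

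\textbf{Step 2: cycles do not increase $L$.} Every cycle iteration changes $L$ by at most $0$ with respect to the current $\varphi_{\mathrm{cur}}$. I check this case by case.
\begin{itemize}
\item The record cycle itself has $L$-effect $y_i-\varphi_i x_i=0$.
\item A horizontally positive sum-increasing cycle $c_j$ has $\varphi_j\leq\varphi_{\mathrm{cur}}$, so its effect is $x_j(\varphi_j-\varphi_{\mathrm{cur}})\leq 0$.
\item A non-sum-increasing cycle with $x_j\ge 0$ has $y_j\le -x_j\le 0$. So its effect is $y_j-\varphi_{\mathrm{cur}}x_j\le 0$, using $\varphi_{\mathrm{cur}}\geq -1$. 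Cycles with $x_j=0$ are covered here, since by Cases 1 and 2 they are not sum-increasing.
\item A horizontally negative $c_j$ is not sum-increasing. If $\varphi_{\mathrm{cur}}=-1$, its effect is $x_j+y_j\le 0$. Otherwise, the negated claim gives $\varphi_j\ge\varphi_{\mathrm{cur}}$, hence $y_j=\varphi_j x_j\le\varphi_{\mathrm{cur}}x_j$ because $x_j<0$, and again the effect is $\le 0$.
\end{itemize}
The last case is exactly where the failure of \Cref{claim:steep} is used.

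\textbf{Step 3: bounding the one-off paths.} Only the one-off paths $t_0,\dots,t_\ell$ can increase $L$. A cycle of length at most $e$ with $x_i\geq 1$ has steepness at most $e\norm\V$, so $|\varphi_{\mathrm{cur}}|\le e\norm\V$. Each $t_k$ has length at most $e'$, and it may also contain a record switch, which only helps by Step~1. Hence each $t_k$ increases $L$ by at most $e'\norm\V(1+e\norm\V)$. Therefore
\[
a' \;=\; L(0,a') \;\le\; a + (\ell_{\max}+1)\,e'_{\max}\norm\V\,(1+e_{\max}\norm\V),
\]
where $e_{\max}$ is the polynomial bound on $e$. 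So it suffices to require additionally that $C$ be at least this bound, which depends only on $\V$, to contradict $a'>a+C$.

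\textbf{Main obstacle.} The delicate point is choosing the potential before the first record appears. The natural choice $\varphi_{\mathrm{cur}}=0$, i.e.\ $L=y$, fails, because a non-sum-increasing horizontally negative cycle may still increase $y$. Starting from the sum functional, $\varphi_{\mathrm{cur}}=-1$, resolves this, and the switch to a record $\varphi_i\geq 0>-1$ is still monotone by Step~1. I also need to double-check that records are updated at the moment a sum-increasing cycle begins, so that its own iterations have zero $L$-effect, and that the subsequent horizontally negative cycles are compared against a cycle with a strictly smaller index, as the claim requires $i<j$.
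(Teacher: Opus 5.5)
Your proof is correct and is essentially the paper's argument. The potential $L(x,y)=y-\varphi_{\mathrm{cur}}x$ is exactly the paper's invariant line of slope $\beta_i$, and your Steps~1--3 are its three inductive steps: raising the slope using $x\ge 0$, handling cycles via the negated claim, and bounding each one-off path by $\norm{t_i}(1+\beta_i)$. Starting from $\varphi_{\mathrm{cur}}=-1$ is a uniform way of phrasing the paper's separate sum bound $Y^\pre_k\le X^\pre_k+Y^\pre_k\le a+C_1$ before the first sum-increasing cycle, and your threshold $(\ell_{\max}+1)e'_{\max}\norm\V(1+e_{\max}\norm\V)$ already lies below the paper's $C=C_1(2+e'_{\max}\norm\V)$.
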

\begin{proof}
%Then, for any horizontally negative cycle $c_j$,
Call a sum-increasing cycle \emph{record-breaking} if it is
steeper than all sum-increasing cycles that came before it,
and suppose that the claim doesn't hold.
In outline, we will define a sequence of straight lines in the configuration space,
each parallel to the effect of the most recent record-breaking cycle,
such that the counter values will always stay below the current line
(see \Cref{fig:steep-counterexample}).
We will then show that the supposed target configuration of the run
is \emph{above} all the invariant lines, to derive a contradiction.

For a cycle $c_i$, denote by $\beta_i$ the steepness of
the most recent record-breaking cycle, \ie
\[
    \beta_i = \begin{cases}
        \max(\beta_{i-1}, \varphi_i) & \text{if $c_i$ sum-increasing} \\
        \beta_{i-1} & \text{otherwise}
    \end{cases}
\]
with $\beta_0 = 0$ as the initial value.
As one would expect, the value of $\beta_i$ will define the slope of each line.
Concretely specifying the $y$-intercept, however, requires a bit more work.

Define $(X^\pre_i, Y^\pre_i)$ and $(X^\post_i, Y^\post_i)$ to be the counter values
the run achieves immediately before and after $c_i^{n_i}$, respectively.
Let $k$ be the index of the first sum-increasing cycle of the run.
For $i \geq k$, define $S_i \leq C_1$ as the sum of the norms of
the effects of all one-off paths between $c_k$ and $c_i$, \ie
writing $\norm {t}$ for the sum of norms of transitions along a path $t$,
\[
    S_i = \sum_{j=k}^{i-1} \norm{t_j}
\]
We extend the definitions in a natural way to $i = \ell + 1$ where relevant:
let $(X^\pre_{\ell + 1}, Y^\pre_{\ell + 1})$ be the counter values
of the run's target configuration, and denote by $S_{\ell + 1}$
the sum of norms of effects of all one-off paths after $c_k$, up to and
including $t_\ell$.

% Note that, if \Cref{claim:steep} indeed doesn't hold, we have $\beta_j \leq \varphi_j$ for any
% horizontally negative cycle $c_j$.
% Indeed, for $c_j$ occurring after $c_{k_1}$, we would otherwise get
% $\varphi_i = \beta_j > \varphi_j$, which satisfies \Cref{claim:steep}.
% On the other hand, for $c_j$ before the first sum-increasing cycle,
% we only need to show $-1 \leq \varphi_j$. But horizontally negative
% cycles of steepness lower than $-1$ are sum-increasing, and for
% the current case we are assuming that all sum-increasing cycles
% are horizontally positive.
%
% On the other hand, for a horizontally \emph{positive} cycle $c_j$,
% we have $\varphi_j \leq \beta_j$, either by definition, for sum-increasing cycles,
% or trivially, as $\varphi_j \leq -1 \leq \beta_j$ for 
% horizontally positive cycles $c_j$ that are not sum-increasing.

By induction on $i \geq k$, we will now prove that the pair $(X^\pre_i,Y^\pre_i)$ of
counter values will always satisfy
\begin{equation} \label{eq:steep-ind}
    Y^\pre_i \, \leq \, Y^\pre_k \, + \, S_i (1 + \beta_i) \, + \, \beta_i X^\pre_i.
\end{equation}
Geometrically, this defines the dashed lines in \Cref{fig:steep-counterexample}.

At $i = k$, the inequality trivially holds.
Suppose then that the inequality holds at a specific $i \geq k$,
aiming to prove it for $i+1$.

\begin{figure}[t]
\centering
\begin{tikzpicture}[
    >=stealth,
    scale=0.9,
    every node/.style={font=\small}
]
    \begin{scope}
        \draw[->] (0,0) -- (3,0) node[right] {$x$};
        \draw[->] (0,0) -- (0,3) node[above] {$y$};

        \draw[dashed,thick,blue!70!black] ($ (0,1.5) + -0.2*(1,0.3) $) -- +($ 3.5*(1,0.3) $);

        \coordinate (conf) at (1.7,1.7);
        \fill (conf) circle (1.5pt);
        \draw[->, green!70!black] (conf) -- +($ 1.5*(1,0.3) $);
        \draw[->, green!70!black] (conf) -- +($ 0.6*(1.3,-2) $);
        \draw[->, red!70!black] (conf) -- +($ -1.5*(1,0.3) $);
        \draw[->, red!70!black] (conf) -- +($ 0.6*(-1,-2) $);
        \draw[->, dotted, thick] (conf) -- +($ 1.5*(-1,0.3) $);
    \end{scope}

    \begin{scope}[xshift=5cm]
        \draw[->] (0,0) -- (3,0) node[right] {$x$};
        \draw[->] (0,0) -- (0,3) node[above] {$y$};

        \coordinate (conf) at (1.5,1.5);
        \coordinate (slope) at (0.5,0.9);
        \coordinate (shift) at (-0.5,0);
        \draw[dashed,thick,blue!70!black] ($ (conf) + -1.5*(slope) $) -- +($ 2.5*(slope) $);
        \draw[dashed,thick,blue!70!black] ($ (conf) + (shift) + -0.5*(slope) $) -- +($ 2.5*(slope) $);
        \draw[->, dotted] (conf) -- +(shift);
        \draw[->, dotted] (conf) -- +(0,0.9);
        \fill (conf) circle (1.5pt);
    \end{scope}

    \begin{scope}[xshift=10cm]
        \coordinate (root) at (0,1);
        \coordinate (conf) at (1.5,0.8);
        \coordinate (slope1) at (1,0.3);
        \coordinate (slope2) at (1,0.7);

        \fill[black!10] (0,0) -- (root) -- ($ (root) + 3*(slope1) $) -- (3,0) -- cycle;

        \draw[->] (0,0) -- (3,0) node[right] {$x$};
        \draw[->] (0,0) -- (0,3) node[above] {$y$};

        \draw[dashed,thick,blue!70!black] ($ (root) + -0.4*(slope1) $) -- +($ 3.5*(slope1) $);
        \draw[dashed,thick,blue!70!black] ($ (root) + -0.4*(slope2) $) -- +($ 3.1*(slope2) $);
        \draw[->,dotted] ($ (root) + 3*(slope1) $) arc (16.7:35:3cm);
        % \fill (conf) circle (1.5pt);
    \end{scope}
\end{tikzpicture}
\caption{
    A geometric intuition for why the inductive invariant gets preserved.
    \figleft Horizontally positive cycles (green) stay below the invariant line,
    as by construction they are at most as steep as the invariant's slope;
    horizontally negative cycles either stay below the invariant line (red)
    or satisfy \Cref{claim:steep} (dotted black).
    \figcenter One-off paths may force the invariant line to be raised,
    but this can be bounded ahead of time. 
    % The greatest upwards
    % movement per unit effect is forced by a path %whose effect is pointing
    % going directly left, when the slope of the invariant line is high.
    \figright Because the configuration is guaranteed to stay at $x \geq 0$,
    we may freely make the invariant line steeper as long as the $y$-intercept
    doesn't get lowered in the process.
}
\label{fig:induction}
\end{figure}
We shall follow the structure of the run, aiming first to prove
that after executing the effect of $c_i$, the configuration
$(X^\post_i, Y^\post_i)$ stays below the same line, \ie
\begin{equation} \label{eq:steep-ind-post}
    Y^\post_i \, \leq \, Y^\pre_k \, + \, S_i (1 + \beta_i) \, + \, \beta_i X^\post_i.
\end{equation}
See \Cref{fig:induction} (left) for an illustration of this step.
Symbolically, writing $X^\post_i = X^\pre_i + n_i x_i$ and $Y^\post_i = Y^\pre_i + n_i y_i$, we see that
our aim is to prove $y_i \leq \beta_i x_i$.
The first case, easily handled, is when $x_i = 0$. Indeed, we handle cycles of vertical effect
$(0, +)$ in Case 1, so if we encounter such a cycle here, then $y_i \leq 0$ must hold.

In the interesting case where $x_i \neq 0$, we apply the definition
of $\varphi_i$ to express the desired inequality as $\varphi_i x_i \leq \beta_i x_i$,
and consider the sign of $x_i$.
For a horizontally positive cycle, \ie $x_i > 0$, we need $\varphi_i \leq \beta_i$.
If $c_i$ is sum-increasing, this holds by definition;
otherwise note $\varphi_i \leq 0 \leq \beta_i$.

Finally, let $c_i$ be a horizontally negative cycle. In that situation, we need
$\varphi_i \geq \beta_i$.
Let $j$ be the most recent record-breaking cycle before $i$, which must
exist, as $i \geq k$ (and $i = k$ is ruled out by $c_i$ being horizontally negative).
Now note that if we had $\varphi_i < \beta_i = \varphi_j$, cycles $c_j$ and $c_i$
would satisfy \Cref{claim:steep}. As we are assuming that the claim doesn't hold,
we indeed have $\varphi_i \geq \beta_i$, thereby establishing \eqref{eq:steep-ind-post}.
% \mei{$i$ and $j$ ended up having the opposite meaning to the statement of the claim,
% maybe we should swap them.}
% \sla{no need to swap}

Our next focus shall be handling the effect of $t_i$ (\Cref{fig:induction}, center), proving that
\begin{equation} \label{eq:steep-ind-pre1}
    Y^\pre_{i+1} \, \leq \, Y^\pre_k \, + \, S_{i+1} (1 + \beta_i) \, + \, \beta_i X^\pre_{i+1}.
\end{equation}
With $(\Delta x, \Delta y)$ denoting the effect of the one-off path $t_i$,
we have $X^\pre_{i+1} = X^\post_i + \Delta x$, and analogously,
$Y^\pre_{i+1} = Y^\post_i + \Delta y$. Comparing with \eqref{eq:steep-ind-post}, we
see that it'll suffice to show
\[
    \Delta y \, \leq \, \norm{t_i} (1 + \beta_i) \, + \, \beta_i \Delta x
\]
(recall that $S_{i+1} = S_i + \norm{t_i}$). As both $|\Delta x|$ and $|\Delta y|$
are bounded by $\norm{t_i}$, we have
\begin{align*}
    \Delta y \, &\leq \, \norm{t_i}, \\
    - \Delta x \, &\leq \, \norm{t_i}.
\end{align*}
Multiplying the latter by $\beta_i$ and adding the two inequalities together,
we obtain
\[
    \Delta y - \beta_i \Delta x \, \leq \norm{t_i} (1 + \beta_i),
\]
as required.

Having now established \eqref{eq:steep-ind-pre1}, it
remains to raise the slope of the invariant line to $\beta_{i+1}$
(compare \Cref{fig:induction}, right).
We note that
$\beta_i \leq \beta_{i+1}$, and moreover both
$S_{i+1}$ and $X^\pre_{i+1}$ are nonnegative, so we have
\[
    \begin{split}
        Y^\pre_{i+1} \, &\leq \, Y^\pre_k \, + \, S_{i+1} (1 + \beta_i) \, + \, \beta_i X^\pre_{i+1}
    \\
        &\leq \, Y^\pre_k \, + \, S_{i+1} (1 + \beta_{i+1}) \, + \, \beta_{i+1} X^\pre_{i+1}.
    \end{split}
\]
This ends the induction, and so we get \eqref{eq:steep-ind} up to and including $i = \ell + 1$.
As $(X^\pre_{\ell + 1}, Y^\pre_{\ell + 1}) = (0, a')$ is the final configuration, we have:
\[
    a' \, \leq \, Y^\pre_k \, + \, S_{\ell + 1} (1 + \beta_{\ell + 1}).
\]

Let us now focus on establishing an upper bound for each expression occurring
on the right-hand side. Firstly, we have $S_{\ell + 1} \leq C_1$, which should
be clear from the definition of the latter.
Next, note that for any cycle $c_i$ we have $\varphi_i = y_i / x_i \leq |y_i| \leq e' \norm\V$,
and therefore $\beta_{\ell + 1} \leq e' \norm\V$.

Finally, for $Y^\pre_k$, we note that $Y^\pre_k \leq X^\pre_k + Y^\pre_k$, and analyze
the latter instead. Recall that $(X^\pre_k, Y^\pre_k)$ are the counter values immediately
before the first sum-increasing cycle of the run, so any increase above the initial
counter sum of $a$ is fully attributable to the effects of the one-off paths before $c_k$.
Thus, we get $Y^\pre_k \leq a + C_1$. In summary, we obtain
\[
    a' \, \leq \, a \, + \, C_1 \, + \, C_1 (1 + e' \norm \V) \,=\, a \, + \, C_1 (2 + e' \norm \V).
\]
As the assumption of the lemma is that $a' - a > C$, we can obtain a contradiction
by setting
\[
    C := C_1 (2 + e'_\mathrm{max} \norm \V).
    \qedhere
\]
\end{proof}

Take any cycles $c_i$, $c_j$, where $i<j$, as given by \Cref{claim:steep}. 
Their horizontal effects
satisfy $x_i>0$ and $x_j<0$, and hence
%    positive integers $I, J$, 
the sum of the effects of $c_i^{-x_j}$ and $c_j^{x_i}$ is vertical and equal 
$(0,h)$ 
for $h = -x_j y_i + x_i y_j > 0$.
Expanding $c_i^{n_i}$ with $c_i^{-x_j}$ and $c_j^{n_j}$ with $c_j^{x_i}$ 
yields a run ending in $q'(0,a'+D_\sigma)$, where $D_\sigma = h$.
Moreover, this modification will never decrease any intermediate value
of the counters, as compared to the original run $\sigma$.
The proof of Case 3 is thus completed.
%
% The constant $C$ must be large enough so that, even after forgetting of
% the total effects of all transitions $t_0, \ldots, t_\ell$, the iterations of
% cycles $c_1, \ldots, c_\ell$ increase the sum by the amount large enough so that
% one of sum-increasing cycles is iterated at least $2 C_0$ times (are required
% by \Cref{claim:fireable}).
% As $C_0$ is at most $e\cdot \norm \V$ and $\ell$ is polynomial in $\mysize \V$,
% a polynomial value of $C$, \wrt $\mysize \V$, 
% is sufficient to satisfy the above requirement.
% \sla{complete the calculations}

\medskip

To summarize, in all the three cases 
the sufficiently large value of $C$ is $C := C_1 (2 + e'_\mathrm{max}\norm\V)$.
The proof of \Cref{lem:targetpump} is thus completed.
\end{proof}

\section{Undecidability for $\cycg 3$-\cm{s}} \label{sec:undecid}

We now turn to $\cycg 3$-counter machines, and prove that their reachability
problem is undecidable, in contrast with symmetric 2-\cm{s}.
We proceed by reducing from the reachability problem of general 2-\cm{s},
which is undecidable \cite{Minsky67}.
Our reduction readily generalizes to a reduction from general $(n-1)$-\cm{s}
to $\cycg{n}$-\cm{s}, showing that reachability is undecidable
in $\cycg{n}$-\cm{s}, 
% have undecidable reachability problem 
for $n \geq 3$. For concreteness, however, we focus
on the case of $n = 3$.

\begin{proof}[Proof of \Cref{thm:undecid}]
Recall that a $\cycg 3$-\cm is a 3-\cm such that
%  satisfying the condition that,
for any transition labeled by $\updinstr{x,y,z}$, the transitions
$\updinstr{y,z,x}$ and $\updinstr{z,x,y}$ are also present
between the same pair of states. Note however, that the $\cycg 3$-\cm
we construct here also satisfies a stronger condition:
the $\xztinstr$ transitions are symmetric as well, effectively testing
that an \emph{unspecified} counter is zero. We write this as
$\ztinstr{\ast}$.

For any 2-\cm $\mach = (Q, T)$, we shall construct a $\cycg 3$-\cm
$\mach' = (Q', T')$ that simulates it,
where $Q'$ is a copy of $Q$ with auxiliary states adjoined.

For a configuration $q(a, b)$ of $\mach$, the corresponding configuration
of $\mach'$ shall have the form $q(3a+2, 3b+2, 0)$,
with two \emph{main counters} keeping track of the values
of the counters of $\mach$,
and one \emph{auxiliary counter}, which is always zero outside of
auxiliary states.
Thanks to the presence of the auxiliary counter,
we will be able to ``anchor'' each transition as it executes,
in such a way that any run making use of an unwanted cyclic rotation
of a transition
will get rapidly cut short by a failing zero-test.

To simulate a transition of the form $(q, \updinstr{x, y}, q') \in T$,
we include in $T'$ the transitions in use by the following fragment of a run:
\[ \begin{split}
    q&(3a + 2, && 3b + 2, && 0)
    \xstep{\updinstr{3x - 1, 3y - 1, 0}} \\
    &(3(a + x) + 1, && 3(b + y) + 1, && 0)
    \xstep{\ztinstr{\ast}} \\
    &(3(a + x) + 1, && 3(b + y) + 1, && 0)
    \xstep{\updinstr{1, 1, 0}} \\
    &(3(a + x) + 2, && 3(b + y) + 2, && 0)
    \xstep{\ztinstr{\ast}} \\
    q'&(3(a + x) + 2, && 3(b + y) + 2, && 0),
\end{split} \]
as well as their cyclic rotations.
We now need to verify that the cyclic shifts of these transitions
do not cause any issues.

Let us first consider the $\updinstr{3x-1, 3y-1, 0}$ instruction.
As the update vector contains only one zero, executing any cyclic rotation will
disturb the auxiliary counter, making it non-zero.
Moreover, because the main counters have a remainder of $2 \bmod 3$
before the transition, they can only have a remainder of $1$ or $2$ after
any cyclic rotation executes. In particular, none of the main counters may
become zero.
Hence, should any unwanted cyclic rotation of the first update instruction execute,
the zero-test that follows will fail.

The reasoning for the $\updinstr{1, 1, 0}$ instruction is similar:
any cyclic rotation will increase the auxiliary counter to 1,
and neither of the main counters may decrease to 0,
as the vector is non-negative on all coordinates.

We now turn to simulating the zero-test transitions of $\mach$.
For concreteness, let us focus on the case where the first counter
is being zero-tested;
a zero-test of the second counter is handled analogously.

To simulate a transition $(q, \ztinstr{1}, q') \in T$ of $\mach$,
$\mach'$ performs the following steps:
\[ \begin{split}
    q&(3a + 2, && 3b + 2, && 0)
    \xstep{\updinstr{-2, -1, +2}} \\
    &(3a, && 3b + 1, && 2)
    \xstep{\ztinstr{\ast}} \\
    &(0, && 3b + 1, && 2)
    \xstep{\updinstr{+2, +1, -2}} \\
    &(2, && 3b + 2, && 0)
    \xstep{\ztinstr{\ast}} \\
    q'&(2, && 3b + 2, && 0).
\end{split} \]
As the instruction $\updinstr{-2, -1, +2}$ has only one non-negative number,
any of its other rotations will attempt to make the auxiliary counter negative,
and thus will not be able to execute. Then, the first $\ztinstr{\ast}$
is performed. As the counters have remainders $(0, 1, 2) \bmod 3$ respectively,
this zero-test will only succeed when $3a = 0$, exactly as required
to simulate a $\ztinstr{1}$ instruction.

To finish simulating the zero-test, $\updinstr{+2, +1, -2}$ undoes
the modifications of the counters. As we now know that the first counter is
zero, and all numbers in the update instruction are non-zero, we can infer that
no rotation can leave the first counter at zero. Moreover, the second counter
has a remainder $1 \bmod 3$, therefore the sole negative number in the update
instruction, having value $-2$, cannot
make that counter be zero either. Thus, for the final zero-test to pass,
it is the auxiliary counter that must become zero, and this can only happen
if the $-2$ is appropriately aligned, which rules out all unwanted rotations.
%  other than the intended one.
\end{proof}

%% file: variants.tex
\section{Variants} \label{sec:variants}

We now discuss the consequences of imposing the following 
restrictions on symmetric \cm{s}:
\begin{itemize}
\item the set of $\xztinstr$ transitions is also \emph{symmetric}, 
that is invariant under permutations of counters; 
\item all $\xupdinstr$ transitions $(q,\updinstr{\vr w},q')$ are 
\emph{single-counter}, that is $\vr w$  is nonzero on at most one coordinate.
\end{itemize}
\para{Symmetric zero-tests}

For simplicity, let's focus on 2-\cm{s} first.
Symmetric $\xztinstr$ transitions, effectively testing
that an \emph{unspecified} counter is zero,
written $\ztinstr{\ast}$,
can surprisingly simulate unrestricted ones.
To this aim, a configuration $q(v_1,v_2)$ of a symmetric 2-\cm would 
be simulated by $q(2v_1,2v_2+1)$, thus doubling the counter values and 
distinguishing them by parity.
Consequently, a transition $(q,\updinstr{w_1,w_2},q')$ is simulated
a transition $(q,\updinstr{2w_1,2w_2},q')$.
Therefore the $\xupdinstr$ transitions, even symmetric ones, preserve
parity of both counters.
Due to parity distinction, a transition
$(q,\ztinstr{\ast},q')$ checking if an unspecified counter is 0,
in fact checks if counter 1 is 0.
On the other hand,
simulation of transition $(q,\ztinstr 2, q')$
is done in three steps,
using two auxiliary intermediate states $p_t,p'_t$:
\begin{align} \label{eq:sim3}
(q,\updinstr{1,-1},p_t) \qquad
(p_t, \ztinstr{\ast}, p'_t) \qquad
(p'_t, \updinstr{-1,1},q').
\end{align}
Observe that the first and the last transition swap parity of
counters, and likewise do their swapped versions.
Therefore a successful $\ztinstr{\ast}$ is only possible on
counter 2, which rules out possibility of use of swapped versions
of the first and last transition.

\smallskip

The above considerations generalize 
to $G$-\cm{s}, for 
an arbitrary number $d>2$ of counters, and arbitrary permutation group
$G\leq\symg d$.
To this aim we distinguish between the counters by remainders modulo 
$d$, simulate a configuration $q(a_1, \ldots, a_d)$ by 
\[q(d a_1, d a_2 + 1, \ldots, d a_d + d-1),\]
and simulate a transition $(q,\updinstr{\vr v}, q')$ by
$(q,\updinstr{d\vr v}, q')$.
Furthermore, to correctly simulate a zero-test of counter $j+1>1$, 
say a transition $(q,\ztinstr {j+1},q')$,
we replace the vectors $(1,-1)$ and $(-1,1)$ in the first
and last transition in \eqref{eq:sim3} by
\[
(j,\ldots,j,-j, j, \ldots, j)
\qquad\text{ and } \qquad
(-j,\ldots,-j,j, -j, \ldots, -j),
\]
respectively, where the unique value of negative (left) or positive
(right) sign appears on the $(j+1)$th coordinate.
% \mei{I don't think it should be $d$, I expected $j-1$ and $-(j-1)$
% respectively, to match the special case worked out for 2-\cm{s}.}

\para{Symmetric zero-tests and single-counter updates}
We do not know how much weaker is the fragment with only 
single-counter $\xupdinstr$ transitions (but unrestricted
$\xztinstr$ transitions).
However, imposing both restrictions results in a degeneracy,
namely the model of $d$-\cm{s}, $d\geq 2$,
is simulated by (and equivalent to) 1-\vass,
at least as long as zero source and target vectors are considered.
Indeed, 
whenever a symmetric \cm has a run 
$q(0,\ldots, 0)\step^* q'(0,\ldots, 0)$, it also has a
\emph{simple} run, in which all counter updates are performed on
counter 1, say, while all zero-tests are (trivially) performed on counter 2,
say.
The zero-tests are thus void, and simple runs are essentially
runs of a 1-\vass.

%% file: complexity.tex
\section{Complexity bound for \Cref{thm:main}} \label{sec:complexity}

% \sla{\bf this section is very sketchy and containts a significant gap.
% After discussions with Lukasz, 
% it seems we can bound doubly-exponentially all the steps except for
% the exhaustive search at the end of the proof of \Cref{lem:compbackfor}...
% }

In this section we provide an upper complexity bound for our decision procedure
outlined in \Cref{sec:outline}.
We will measure the size of input instances of the reachability problem 
as the size of the \emph{unary}
representation of a 2-\cm $\mach=(Q,T)$, the source $s=q(\vr w)$ 
and the target $t=q'(\vr w')$, namely
\[
% \mysize \mach = 
n \, = \,
\mysize Q + \mysize{T_\text{zero-test}} \, + \, 
2 \cdot \norm \V \cdot \mysize {T_\text{update}}
\, + \, \norm{\vr w} \, + \, \norm{\vr w'},
\]
% \mei{wait, do we want to include $w$ and $w'$ within
% $\mysize \mach$? Doesn't feel right...}
% \sla{i think we should. I've removed the notation $\mysize \mach$, let's 
% simply use $n$ instead}
where $T_\text{zero-test}$ and $T_\text{update}$ are the sets of zero-test
and update transitions of $\mach$, respectively, and 
$\V=(Q,T_\text{update})$ is the underlying 2-\vass.
% Note that $\mysize \mach = \mysize \V$, namely we do not distinguish
% between the size of $\mach$ and of the underlying 
% 2-\vass $\V$.
Our objective is to provide an elementary upper bound
in terms of $n$, rather than
optimizing the number of exponents, 
and we will be content with a 5-exponential bound.

\para{Bounds for \Cref{lem:targetpump}}

We estimate the values of constants $C$ and 
$D$.
The constant 
% \[
% C\ = \ C_1 (1 + e\norm\V), 
% \qquad \text{where } \quad
% C_1 \ = \ (2 C_0 e_\mathrm{max} + 1)(\ell_\mathrm{max} + 1) 2\norm\V,
% \] 
% \lukasz{
% I think it should be
\[
    C := C_1 (2 + e'_\mathrm{max}\norm\V), 
    \qquad \text{where } \quad
    C_1 = 2\norm\V e'_\mathrm{max} (\ell_\mathrm{max} + 1),
\]
% x}
is polynomial in $n$.
The norm of the effect of every cycle of length at most $e_\text{max}$ 
is bounded by $e_\text{max}\norm{\V}$,
where $e_\mathrm{max}$ is the maximal length $e$ (polynomial in $n$),
and hence
all values $D_\sigma$ appearing in all the three cases
in the proof of \Cref{lem:targetpump} 
are bounded by $2(e_\text{max} \norm{\V})^2$, thus
polynomially in $n$.
Taking as $D$ the least common multiple of all $D_\sigma$ is enough for the
claim of the lemma.
In summary:

\begin{claim} \label{claim:bounds}
  The constants $C$ and $D$ of \Cref{lem:targetpump} are bounded
  polynomially and exponentially in $n$, respectively.
\end{claim}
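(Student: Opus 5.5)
The plan is to trace the constants through the proof of \Cref{lem:targetpump} and bound each in turn.

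For $C$, start from the linear-path-scheme decomposition of \cite[Thm.~1]{Blondin15}, which gives $\ell_\mathrm{max}$ and $e_\mathrm{max}$ polynomial in $\mysize\V \le n$. Then bound the remaining quantities in order.
\begin{itemize}
\item $C_0 \le e_\mathrm{max}\norm\V$, since a cycle of length at most $e_\mathrm{max}$ never drops a counter by more than $e_\mathrm{max}\norm\V$ below its starting value. Hence $C_0$ is polynomial.
\item $e'_\mathrm{max} = e_\mathrm{max}(2C_0\ell_\mathrm{max}+\ell_\mathrm{max}+1)$ is polynomial.
\item $C_1 = 2\norm\V e'_\mathrm{max}(\ell_\mathrm{max}+1)$ is polynomial.
\item $C = C_1(2+e'_\mathrm{max}\norm\V)$ is polynomial.
\end{itemize}
The sum-increasing cycles used in Cases 1--3 are original LPS cycles, not the inlined one-off paths. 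So their effects have norm at most $e_\mathrm{max}\norm\V$, even though the steepness bound in \Cref{claim:steep} was stated with $e'$.

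For $D$, bound each $D_\sigma$ by a polynomial $P(n)$, case by case.
\begin{itemize}
\item Case 1: $D_\sigma = y \le e_\mathrm{max}\norm\V$.
\item Case 2: $D_\sigma = (x+y)(y-x) \le (2e_\mathrm{max}\norm\V)^2$.
\item Case 3: $D_\sigma = h = -x_jy_i + x_iy_j \le 2(e_\mathrm{max}\norm\V)^2$, again using that $c_i, c_j$ are original short cycles.
\end{itemize}
Since every $D_\sigma$ lies in $\{1,\dots,P(n)\}$, we may take $D = \mathrm{lcm}\{1,\dots,P(n)\}$ rather than the lcm of the actually occurring values. This choice is independent of $\sigma$ and makes $D_\sigma \mid D$ automatic. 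Each occurrence of pumping by $D_\sigma$ is then iterated $D/D_\sigma$ times, which remains fireable because every added block is fireable at the same point regardless of the count: the argument uses $(y-x,y-x)$ positivity in Case 2 and monotonicity in Case 3. The standard bound $\mathrm{lcm}\{1,\dots,m\} \le 4^m$ then gives $D \le 4^{P(n)}$, exponential in $n$.

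The reversed 2-\vass used for source pumping has the same size, and taking the max of the two $C$ values and the lcm of the two $D$ values preserves both bounds.

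The main obstacle is the bookkeeping on cycle lengths: confirming that the cycles entering $D_\sigma$ have length at most $e_\mathrm{max}$ rather than $e'$. This should be immediate from the construction, since inlining only alters the paths $t_i$. Even if only the $e'$ bound held, the $D_\sigma$ would still be polynomial, so the claim is unaffected.
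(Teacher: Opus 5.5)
Your proposal is correct and follows the paper's argument. $C = C_1(2+e'_{\max}\norm\V)$ is polynomial because $\ell_{\max}$, $e_{\max}$, $C_0$ and $e'_{\max}$ are. Each $D_\sigma$ is at most $2(e_{\max}\norm\V)^2$ because the pumped cycles are original LPS cycles of length at most $e_{\max}$. Your choice $D=\mathrm{lcm}\{1,\dots,P(n)\}\le 4^{P(n)}$ just makes explicit, and independent of $\sigma$, the exponential bound that the paper leaves implicit when it takes the lcm of all occurring $D_\sigma$.
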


\newcommand{\poly}[1]{\text{poly}(#1)}
\para{Terminology}
We use the $k$-exponential function $\expf k$, defined as
$\expf 0 (m) = m$, and $\expf {k+1} (m) = 2^{\expf k (m)}$.
%  (m) = 2^{2^{2^m}}$.
Whenever a base-period representation of a semilinear set $X$ is clear
from the context, 
% \sla{reprezentacji moze byc oczywiscie wiele, mam nadzieje ze to jest
% zrozumiale}, 
we write
$\norm{X}$ for the maximum of norms of bases
and periods of this representation.
In the sequel we say that a semilinear set $X$ is 
\emph{$k$-exponentially semilinear} if given $\mach$,
one can compute
a base-period representation of $X$
of norm bounded by $\norm X \leq \expf k(\poly n)$.
% in time $\norm X \leq \expf k(\poly n)$, and 
%and all computed bases and periods
%are at most (triply) exponential in $n$.
% Furthermore,
% a mapping $S \mapsto f(S)$ transforming a semilinear set $S$
% into a semilinear set $f(S)$ we call 
% \emph{triply exponentially bounded} if
%   given $\mach$ and a base-period representation of 
%   $S\subseteq Q\times\N$, some 
%   representation of $f(S)$ can be computed in time 
%   ${\norm S}^{O(1)}\cdot \expf 3(\poly n)$,
%   % polynomial in $\norm S$ and triply exponential in $n$, 
%   and 
%   $\norm{f(S)} \leq {\norm S}^{O(1)}\cdot \expf 3(\poly n)$.

%  some base-period description of $f(S)$ is computable in time
%  $m$ times (triply) exponential in $n$, and the norms of 
%  all the computed bases 
%  are larger than the largest norm of a base of $S$ by at most
%  (triply) exponential multiplicative factor in $n$.

\para{Bounds for \Cref{lem:0-+}}

According to the linear-path-scheme decomposition of 
\cite[Thm.~1]{Blondin15}, whenever 
a configuration $t$ is reachable from $s$
in the underlying 2-\vass $\V$ of $\mach$,
$t$ is also reachable from $s$ by a run 
\begin{align} \label{eq:lps2}
\sigma \ \in \ 
t_0 \ c_1^{*} \ t_1 \ c_2^{*} \ \ldots \ c_\ell^{*} \ t_\ell,
\end{align}
where 
$\ell$ is polynomial in $n$, and
$t_0, \ldots, t_\ell$ are paths and $c_1, \ldots, c_\ell$ are cycles of
length polynomial in $n$ (cf.~\eqref{eq:lps}).
As shown in \cite{Blondin15},
once $\ell$, $t_0, \ldots, t_\ell$ and $c_1, \ldots, c_\ell$ are fixed, 
the reachability relation along all runs of the form \eqref{eq:lps2} 
is the union of (the projections of) the solution sets of instances $I$ 
of the integer linear programming problem of size polynomial in $n$, 
and hence effectively semilinear.
Moreover, numbers appearing in $I$ are bounded exponentially in
$n$, and hence by the bound of \cite{Pottier91}
(see also \cite[Prop.~4]{taming}), the norms of bases and periods
of the reachability relation along all runs of the form \eqref{eq:lps2}
are bounded exponentially in $n$.
Summing over all choices of $\ell$, $t_0, \ldots, t_\ell$ and 
$c_1, \ldots, c_\ell$, we conclude that
the reachability relation $\step^*$ of $\V$ is exponentially semilinear.
The same applies to the relation $\reachrel\ \subseteq(Q\times\N)^2$,
which is essentially the intersection of $\step^*$ with a linear set
fixing some coordinates to 0.
Finally, we transfer the same effective bounds to the three subrelations
$\stepp 0, \stepp +, \stepp -$ by noticing that each of them is
the intersection of $\reachrel$ with a semilinear set
whose bases and periods are of norm at most $C+1$, the constant of
\Cref{lem:targetpump}
(which is polynomial in $n$),
and intersection of linear sets in fixed dimension (here 2) 
increases the norms of bases and periods at most polynomially
(see \cite[Thm.~6]{taming}):
\begin{claim} \label{claim:cap}
Semilinear sets $r_1,r_2\subseteq (Q\times\N)^2$ satisfy
$\norm{r_1 \cap r_2} \leq (O(1)+\norm{r_1} \cdot \norm{r_2})^{O(1)}$.
\end{claim}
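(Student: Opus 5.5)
Since a semilinear subset of $(Q\times\N)^2$ is by definition a family of semilinear subsets of $\N^2$, one for each pair of states $q,q'\in Q$, I would fix such a pair and bound the norm of each component of the intersection separately. Writing the two components as finite unions of linear sets, intersection distributes over union:
\[
\bigcup_{i} L_i \ \cap\ \bigcup_{j} L'_j \ = \ \bigcup_{i,j} (L_i\cap L'_j).
\]
It is therefore enough to bound the norm of $L\cap L'$ for two linear sets $L=b+P\N^{k}$ and $L'=b'+P'\N^{k'}$ in $\N^2$. Here $P$ and $P'$ are $2\times k$ and $2\times k'$ matrices whose columns are the periods, all entries are at most $\norm{r_1}$ and $\norm{r_2}$ respectively, and $b,b'$ satisfy the same bounds.

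The first observation I would make is that the number of periods is itself controlled by the norm. Repeated periods can be dropped without changing the set, and there are at most $(\norm{r_1}+1)^2$ distinct vectors in $\N^2$ of norm at most $\norm{r_1}$. Hence $k\le(\norm{r_1}+1)^2$ and $k'\le(\norm{r_2}+1)^2$. This is exactly where the fixed dimension $2$ is used: the number of variables in the system below becomes polynomial in the norms.

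Next I would express the intersection through integer linear algebra. We have
\[
L\cap L' \ = \ \setof{b+P\vr x}{(\vr x,\vr y)\in\N^{k+k'},\ P\vr x-P'\vr y=b'-b},
\]
which is a system of only two equations in $k+k'$ variables. By Pottier's bound (as in \cite[Prop.~4]{taming}), its solution set has the form $M+H\N^{*}$. Here $M$ is the set of minimal solutions and $H$ is the Hilbert basis of the homogeneous system $P\vr x=P'\vr y$. The norms of elements of $M$ and $H$ are bounded by roughly $\bigl(1+(k+k')\cdot\max(\norm{r_1},\norm{r_2})+\norm{b'-b}\bigr)^{O(1)}$, where the exponent depends only on the number of equations, namely $2$. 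Mapping each solution via $(\vr x,\vr y)\mapsto b+P\vr x$ then yields a base–period representation of $L\cap L'$, with bases $b+P\vr m$ for $\vr m\in M$ and periods $P\vr h$ for $\vr h\in H$. Each such vector has norm at most $\norm{r_1}\bigl(1+k\cdot\norm{(\vr m,\vr h)}\bigr)$.

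Finally I would combine the estimates. Substituting the bounds on $k$ and $k'$, every quantity is polynomial in $\norm{r_1}$ and $\norm{r_2}$. This gives
\[
\norm{r_1\cap r_2}\le (O(1)+\norm{r_1}\cdot\norm{r_2})^{O(1)},
\]
after absorbing $\max(\norm{r_1},\norm{r_2})\le O(1)+\norm{r_1}\norm{r_2}$; the additive $O(1)$ handles norms $0$ and $1$. Taking the union over all pairs $(i,j)$ and all state pairs does not increase the norm. The main obstacle is getting the exact form of the Pottier-type bound for inhomogeneous systems right, in particular checking that its exponent depends only on the number of equations and not on the number of variables. If that dependence is worse, I would fall back on citing \cite[Thm.~6]{taming} directly for intersections of linear sets in fixed dimension, which states precisely this polynomial bound.
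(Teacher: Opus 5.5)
Your proof is correct, but it takes a different route from the paper. The paper gives no argument of its own and simply appeals to \cite[Thm.~6]{taming} for the polynomial growth of norms under intersection in fixed dimension, which is exactly your fallback. You instead derive the bound from Pottier's estimate \cite{Pottier91} (see \cite[Prop.~4]{taming}). In doing so you make visible the two places where dimension $2$ is used. First, there are at most $(\norm{r_1}+1)^2$ distinct periods, which keeps the number of unknowns polynomial. Second, the system $P\vr x-P'\vr y=b'-b$ has only two equations, which fixes the exponent.

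The worry you flag at the end does not materialise, so the fallback citation is unnecessary. Pottier's bound has the form $(1+\norm{(A\mid -c)}_{1,\infty})^{r}$, where $r\le 2$ is the rank and $\norm{\cdot}_{1,\infty}$ is the largest $\ell_1$-norm of a row. The $k+k'$ variables therefore enter only linearly inside the base. Your route buys a self-contained argument that shows where polynomiality comes from; the paper's buys brevity.

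One small slip: the inequality $\max(\norm{r_1},\norm{r_2})\le O(1)+\norm{r_1}\norm{r_2}$ fails when one norm is $0$ and the other is large. The case is harmless, but it should be argued directly rather than absorbed into the additive constant. When $\norm{r_1}=0$, your own estimate $\norm{r_1}(1+k\cdot\ldots)$ is already $0$. When $\norm{r_2}=0$, every base you output also equals $b'+P'\vr y$ and every period equals $P'\vr y$ for the corresponding solution, so all of them vanish.
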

We have thus shown:

\begin{claim} \label{claim:sl}
  The relations $\stepp 0, \stepp +, \stepp -$ are exponentially semilinear.
%   and the norms of bases and periods of their semilinear descriptions
%   are bounded exponentially in $n$.
\end{claim}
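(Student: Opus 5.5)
The claim is that the relations $\stepp 0, \stepp +, \stepp -$ are exponentially semilinear. I would establish it in three layers: first for $\step^*$ of the underlying 2-\vass $\V$, then for $\reachrel$, and finally for each of the three subrelations, tracking norms with \Cref{claim:cap}.

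\textbf{Layer 1: $\step^*$ of $\V$.} By the linear-path-scheme theorem of \cite[Thm.~1]{Blondin15}, every reachable pair is witnessed by a run of the shape \eqref{eq:lps2}. Here $\ell$, the lengths of the paths $t_i$, and the lengths of the cycles $c_j$ are all polynomial in $n$. So there are at most exponentially many choices of the skeleton $(t_0,c_1,\dots,c_\ell,t_\ell)$.

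For each fixed skeleton I would write the reachable pairs $(\vr v,\vr v')$ as the projection of the solution set of an integer linear system. The unknowns are the iteration counts $n_1,\dots,n_\ell$. The constraints require $\vr v'$ to equal $\vr v$ plus the total effect. They also require fireability, which only needs checking at the start and end of each $c_j^{n_j}$ (by linearity of the counter values along the iterations), together with the minimal intermediate values inside each one-off path and each single cycle iteration. The system has polynomially many constraints, and its coefficients are bounded by $e\norm\V$, so they are polynomial in $n$.

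Pottier's bound \cite{Pottier91} (see \cite[Prop.~4]{taming}) then gives minimal solutions and Hilbert-basis periods of norm at most exponential in $n$. Projection does not increase norms. A finite union keeps the maximum of the norms, since the representation is just the union of the base-period descriptions. Hence $\step^*$ is exponentially semilinear.

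\textbf{Layer 2: $\reachrel$.} Here I expect the main obstacle. $\reachrel$ additionally forbids intermediate zero-testing states, which is not an intersection with a semilinear set over configurations. I would handle this by applying Layer 1 directly to the 2-\vass obtained from $\V$ by deleting all outgoing transitions of zero-testing states other than the source state $q$, and all incoming transitions other than those into the target $q'$. A cleaner alternative is to take a copy of $\V$ in which the zero-testing states only act as entry and exit points. This modified \vass has size polynomial in $n$, so the Layer 1 bound still applies.

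I would then intersect with the linear set forcing the appropriate coordinates to be zero (the one determined by $q,q'$) and project to $(Q\times\N)^2$. That linear set has norm $1$, so \Cref{claim:cap} keeps the norm exponential.

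\textbf{Layer 3: the three subrelations.} Each of $\stepp 0,\stepp +,\stepp -$ is $\reachrel$ intersected with a semilinear set in dimension $2$, namely $a'>a+C$, $a>a'+C$, or $|a-a'|\le C$. These sets have bases and periods of norm at most $C+1$, which is polynomial in $n$ by \Cref{claim:bounds}. One more application of \Cref{claim:cap} yields a bound of the form $(\expf1(\poly n)\cdot\poly n)^{O(1)}=\expf1(\poly n)$.

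Computability is immediate at every step, since all constructions are effective. The only delicate point is making sure that the zero-testing-state restriction in Layer 2 is encoded structurally in the \vass, rather than as a set intersection.
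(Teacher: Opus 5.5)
Your proposal is correct and follows the paper's own argument: the LPS decomposition of \cite{Blondin15}, per-skeleton integer programs bounded via \cite{Pottier91}, a union over skeletons, and then \Cref{claim:cap} with sets of norm at most $C+1$. Your one addition is to spell out how intermediate zero-testing states are excluded from $\reachrel$, which the paper leaves implicit by calling it ``essentially'' an intersection of $\step^*$ with a linear set; of your two encodings, use the entry/exit copy, because merely deleting transitions still allows a mid-run visit to $q$ when $q=q'$.
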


\para{Bound for \Cref{lem:0star}}

By the above effective exponential bounds on bases and periods of 
$\stepp 0$,
we conclude that sets $X^c_{q,q'}$ %\eqref{eq:per1},
 \eqref{eq:per2} 
in the proof of \Cref{lem:0star} have 
base-period representations with at most exponential periods and bases.
The period $N^c_{q,q'}$ and threshold $M^c_{q,q'}$ 
of ultimately periodic presentation of $X^c_{q,q'}$ 
are at most doubly exponential, as they are bounded by 
the least common multiplicity of  bases and periods
of its base-period representation.
So is the least common multiple 
$N$ of all periods of $N^c_{q,q'}$,
across all choices of $c, q$ and $q'$, which is doubly exponential.
% By \Cref{claim:sl}, the bases of $X^c_{q,q'}$ are bounded by $N$ as well.
In consequence, the 1-\cm $\mach'$
constructed in the proof of \Cref{lem:0star} is at most
doubly exponentially large.
This, combined with the exponential blow-up of Parikh's theorem
\cite[Thm.~11]{KopczynskiTo:LICS2010},
yields a triple-exponential bound on the norms of bases and periods of 
the relation $\stepp 0^*$.
As both construction of $\mach'$ and Parikh's theorem are effective,
we have thus shown:

\begin{claim} \label{claim:sl2}
  The relation $\stepp{0}^*$ is 3-exponentially semilinear.
%   and the norms of bases and periods of its semilinear description
%   are bounded triply-exponentially in $n$.
\end{claim}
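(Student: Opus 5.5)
We trace the construction in the proof of \Cref{lem:0star} and bound each ingredient in terms of $n$, using \Cref{claim:sl} as the starting point.

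\textbf{Step 1: the sets $X^c_{q,q'}$.} By \Cref{claim:sl}, $\stepp 0$ has a base-period representation of norm $\expf 1(\poly n)$. Each set $X^c_{q,q'}$ is obtained by three operations:
\begin{itemize}
\item intersecting $\stepp 0$ with the linear set $\Setof{(q(n),q'(n'))}{n'=n+c}$, whose norm is at most $C+1$ and hence polynomial by \Cref{claim:bounds};
\item projecting onto one coordinate;
\item ranging over the polynomially many choices of $c\in\setfromto{-C}C$ and of $q,q'$.
\end{itemize}
\Cref{claim:cap} controls the intersection, keeping the norm exponential. Projection does not increase the norm of bases or periods.

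\textbf{Step 2: converting to ultimately periodic form.} A one-dimensional semilinear set with bases and periods bounded by $B$ is ultimately periodic with period equal to the lcm of its periods, at most $B^{O(B)}$. Its threshold is at most $B$ plus a Frobenius-type bound, about $B^2$ times the lcm. Hence $N^c_{q,q'}$ and $M^c_{q,q'}$ are doubly exponential. The global modulus $N$ is the lcm of polynomially many doubly exponential numbers, so it remains $\expf 2(\poly n)$.

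\textbf{Step 3: size of $\mach'$.} The 1-\cm $\mach'$ has $\mysize Q\cdot N$ main states. The auxiliary gadgets test $n\ge M^c_{q,q'}$ or $n=m$ for $m<M^c_{q,q'}$. If updates are unary-expanded, each gadget needs $O(M)$ auxiliary states; there are at most $M$ such $m$ per test, so the gadgets contribute $O(M^2)$ states. Adding the $2N$ states of \Cref{fig:parikh}, the labelled 1-\cm, and thus a pushdown automaton for $L_{q,q'}$, has size $\expf 2(\poly n)$.

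\textbf{Step 4: Parikh image.} By the exponential-size Parikh bound of \cite[Thm.~11]{KopczynskiTo:LICS2010}, the Parikh image of $L_{q,q'}$ has bases and periods of norm at most $2^{\poly{\expf 2(\poly n)}}$, which is $\expf 3(\poly n)$. This Parikh image is exactly $\stepp 0^*$ restricted to the pair $(q,q')$, and the union over all $q,q'$ preserves the norm bound. The construction is effective throughout, which gives the claim.

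\textbf{Main obstacle.} The delicate step is Step 2. The threshold and period of the ultimately periodic form must stay doubly and not triply exponential, so I would check that the lcm of exponentially many exponential periods is $2^{\expf 1\cdot\poly n}$. A second concern is that the Parikh bound is stated for grammar or automaton size. The one-counter pushdown automaton must therefore be given with a doubly exponential description, with the stack alphabet being a single symbol.
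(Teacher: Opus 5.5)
Your proposal is correct and follows essentially the same route as the paper. The paper also derives exponential bases and periods for the sets $X^c_{q,q'}$ from \Cref{claim:sl}, takes the lcm to get a doubly exponential period $N$ and threshold, concludes that the 1-\cm $\mach'$ is doubly exponentially large, and applies the exponential blow-up of Parikh's theorem to reach a triply exponential bound. Your extra accounting (the Frobenius-type threshold estimate and the unary-expanded gadgets) just fills in details the paper leaves implicit.
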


% \para{Composition of semilinar relations}

% Before we continue the analysis we first
% have to prove that fact 
% that that composition of 
% semilinear relations increase their norms 
% polynomially.

% \begin{lemma}\label{lem:composition}
%     There is a number $d \in \N$ such that
%     for semilinear relations $r_1, r_2 \subseteq (Q\times\N)^2$
%     we have $\norm{r_1 \circ r_2} \leq O((\norm{r_1} \cdot \norm{r_2})^d)$.
% \end{lemma}
% \begin{proof}
%     We denote $r_3 = r_1 \circ r_2$.
%     \Mywlog assume $Q = \{0, \ldots, h\}$.
%     For $i = 1,2,3$ we define $r_i' \subseteq \N^2$ as
%     \[
%         r_i' := \setof{(ah + q, a'h+q') \in \N^2  }{  ((q,a), (q',a')) \in r_i }.
%     \]
%     Observe that $\norm{r_i'} = h \cdot \norm{r_i}$
%     and $r_3' = r_1' \circ r_2'$.
%     We define also 
%     \[
%         r_1'' = \setof{(a,b,i) \in \N^3}{(a,b) \in r_1', i \in \N}, \qquad    
%         r_2'' = \setof{(i,a,b) \in \N^3}{(a,b) \in r_2', i \in \N}, 
%     \]
%     and observe that $r_3'$ is obtained from $r_1'' \cap r_2''$
%     by projecting away the middle coordinate.
%     Furthermore,
%     $\norm{r_1''} = \norm{r_1'}$, $\norm{r_2''} = \norm{r_2'}$.
%     We conclude from \cite[Thm.~6]{taming} that there is a polynomial 
%     $p$ independent of sets $r_1,r_2$, such that 
%     $\norm{r_3'} \leq p(\norm{r_1''}, \norm{r_2''})$.
%     Hence, the lemma follows.
% \end{proof}

\para{Bound for \Cref{lem:compbackfor}}

Composition of two semilinear relations in fixed dimension, 
very much like intersection,
increases the norms of bases and periods at most polynomially:

\begin{claim} \label{claim:circ}
Semilinear relations $r_1,r_2\subseteq (Q\times\N)^2$ satisfy
$\norm{r_1 \circ r_2} \leq (O(1)+\norm{r_1} \cdot \norm{r_2})^{O(1)}$.
\end{claim}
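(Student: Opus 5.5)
We reduce \Cref{claim:circ} to \Cref{claim:cap} together with a projection, following the usual encoding of composition as intersection followed by projection.

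First we eliminate the control states. Assume w.l.o.g. $Q=\{0,\ldots,h-1\}$ with $h=\mysize Q$. Each relation $r_i$ is encoded as a subset of $\N^2$ via
\[
r_i' \ = \ \setof{(ah+q,\ a'h+q')}{((q,a),(q',a'))\in r_i}.
\]
A linear set with base $\vr b$ and periods $\vr p_j$ in one $(q,q')$-component maps to the linear set with base $(b_1h+q, b_2h+q')$ and periods $h\vr p_j$. Hence $\norm{r_i'}\leq h\norm{r_i}+h$, which is polynomial because $h\leq n$. Composition commutes with the encoding: $(r_1\circ r_2)' = r_1'\circ r_2'$. This holds because the middle coordinate $a'h+q'$ determines both $q'$ and $a'$. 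Since $\norm{(r_1\circ r_2)}$ is bounded by $\norm{(r_1\circ r_2)'}$ plus a factor of at most $h$, it suffices to prove the claim in $\N^2$ with no control states.

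Next we cylindrify into $\N^3$. Let
\[
r_1''=\setof{(a,b,c)}{(a,b)\in r_1',\ c\in\N},\qquad
r_2''=\setof{(c,a,b)}{(a,b)\in r_2',\ c\in\N}.
\]
Each linear component of $r_1''$ is obtained by embedding the base and periods and adding the unit period $(0,0,1)$; $r_2''$ is treated analogously. Therefore $\norm{r_i''}\leq\max(1,\norm{r_i'})$. The pairs $(a,c)$ with $(a,b,c)\in r_1''\cap r_2''$ for some $b$ form exactly $r_1'\circ r_2'$. We apply the intersection bound for semilinear sets in fixed dimension, here $3$, from \cite[Thm.~6]{taming}; this is the same result underlying \Cref{claim:cap}. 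It gives a representation of $r_1''\cap r_2''$ with norm at most $(O(1)+\norm{r_1''}\cdot\norm{r_2''})^{O(1)}$.

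Finally, projecting away the middle coordinate maps each linear set onto the linear set formed by the projected base and periods, so norms do not increase. Chaining the three polynomial estimates gives $\norm{r_1\circ r_2}\leq (O(1)+\norm{r_1}\cdot\norm{r_2})^{O(1)}$. The factors of $h$ are absorbed because the claim is applied with $\norm{r_i}$ at least exponential in $n$. If one wants the bound independent of $h$, the same argument can be run with a separate $(q,q',q'')$-indexed intersection for each triple of states, which avoids the encoding entirely.

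The main obstacle is checking that the cited intersection theorem really yields a bound polynomial in the norms in fixed dimension, and not one depending on the number of periods. Any number of periods can be reduced to a bounded number per linear set, by Carathéodory-style arguments, at the cost of splitting into more linear sets; the norm is unaffected. This is what \cite[Thm.~6]{taming} provides, and it is the only nontrivial ingredient.
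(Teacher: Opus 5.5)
Your argument is correct and is essentially the paper's: the paper also writes $r_1\circ r_2$ as the projection onto the first and last coordinates of $(r_1\times\N)\cap(\N\times r_2)$ and applies the fixed-dimension intersection bound \cite[Thm.~6]{taming}, and your per-triple-of-states variant is the natural way to make its ``ignoring $Q$'' precise (it also removes both the $h$-dependence and the loosely justified decoding step of the $(ah+q,a'h+q')$ encoding). One side remark needs correcting, although nothing in your proof depends on it: reducing to linearly independent periods by integer Carath\'eodory arguments does enlarge bases, only polynomially in fixed dimension (e.g.\ writing $\{5a+7b \mid a,b\in\N\}$, of norm $7$, as a union of linear sets with the single period $5$ needs a base of at least $28$).
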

Indeed, ignoring $Q$ for simplicity, \Cref{claim:circ} 
follows immediately from \cite[Thm.~6]{taming}, similarly to 
\Cref{claim:cap}, as
the composition $r_1 \circ r_2$ is the projection to the first and last coordinate
of 
\[
(r_1 \times \N) \ \cap \ (\N\times r_2).
\]
Likewise we get the bound for images and inverse images of relations:
\begin{claim} \label{claim:im}
Semilinear relations $r\subseteq (Q\times\N)^2$ and semilinear sets
$X\subseteq Q\times\N$ satisfy
$\norm{r \circ X}, \norm{X \circ r} \leq 
(O(1)+\norm{r} \cdot \norm{X})^{O(1)}$.
\end{claim}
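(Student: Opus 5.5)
The plan is to reduce \Cref{claim:im} to \Cref{claim:circ}, following the same pattern the paper uses to derive \Cref{claim:circ} from \cite[Thm.~6]{taming}. I treat only the image $X\circ r$; the inverse image $r\circ X$ is symmetric, obtained by swapping the roles of the two coordinates of $r$.

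First, I view the set $X\subseteq Q\times\N$ as a relation, as the paper already does notationally. Concretely, I set
\[
\hat X \ = \ \setof{\bigl(q_0(0),\,q(a)\bigr)}{q(a)\in X}\ \subseteq (Q\times\N)^2
\]
for a fixed state $q_0\in Q$. Then $X\circ r$ is exactly the projection of $\hat X\circ r$ to its second component.

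Next, I check that the base-period representation of $\hat X$ has norm $\norm{\hat X}=\norm X$. Each linear component $b+\sum_i k_ip_i$ of $X$ (for a fixed state $q$) becomes the linear set with base $(0,b)$ and periods $(0,p_i)$, so no norm increases. Similarly, projecting away the constant first component of a base-period representation of $\hat X\circ r$ never increases norms. Applying \Cref{claim:circ} to $\hat X$ and $r$ then gives
\[
\norm{X\circ r}\ \leq\ \norm{\hat X\circ r}\ \leq\ (O(1)+\norm{\hat X}\cdot\norm r)^{O(1)}\ =\ (O(1)+\norm X\cdot\norm r)^{O(1)}.
\]
For $r\circ X$, I use $\setof{(q(a),q_0(0))}{q(a)\in X}$ instead and compose on the right.

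An alternative, closer to how \Cref{claim:circ} was justified, is to write $X\circ r$ directly as the projection to the last coordinate of $(X\times\N)\cap r$ and invoke the intersection bound \cite[Thm.~6]{taming}, as in \Cref{claim:cap}. The $\N$ factor has base $0$ and period $1$, so it does not affect the norm.

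The only point needing care is the handling of $Q$. \Cref{claim:circ} ignores $Q$, and so does this reduction: in the intersection version, the states of $X$ must match the source states of $r$, which is a finite case split over pairs of states. That split affects only the number of linear components, not their norms, so I do not expect it to be a real obstacle.
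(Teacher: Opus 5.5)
Your proposal is correct and matches the paper, which justifies this claim only by ``likewise'' after \Cref{claim:circ}: the image is the projection of $(X\times\N)\cap r$ (and the inverse image the projection of $r\cap(\N\times X)$), and then \cite[Thm.~6]{taming} applies, which is exactly your alternative route. Your primary route, which embeds $X$ as the relation $\hat X$ with the dummy source $q_0(0)$ and invokes \Cref{claim:circ}, is an equally valid repackaging of the same argument, since the embedding preserves norms and projection never increases them.
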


% Using Claims \ref{claim:sl}--\ref{claim:circ} we conclude
% that the relation
%   $(\stepp - \circ \, \stepp{0}^*)$ is triply exponentially semilinear,
% which allows us to deduce further bounds on bases and periods
% of the (necessarily $D$-saturated) inverse image of the relation.

% \begin{claim}\label{claim:sl3}
% The mapping $S \mapsto (\stepp - \circ \, \stepp{0}^*) \,\circ\, S$
%   is triply exponentially bounded.
% \end{claim}

% For later use we state a bound on the bases and periods 
% of the semilinear image of $(\stepp - \circ \, \stepp{0}^*)$:

% \begin{claim}\label{claim:sl3a}
%     The mapping
%   $S\mapsto S \,\circ\,(\stepp - \circ \, \stepp{0}^*)$
%   is triply exponentially bounded.
% \end{claim}

Concerning the first part of \Cref{lem:compbackfor},
we bound the number of iterations of the transformation
\begin{align}\label{eq:tra}
S \quad \mapsto \quad S \ \cup \ 
    ({\stepp -} \, \circ \, {\stepp{0}^*}) \, \circ \,  S
\end{align}
needed to compute the fixpoint $\Phi_3 \circ \{t\}$, starting from
$X_0 = ({\stepp -} \,\circ\ {\stepp{0}^\ast})  \circ \{t\}$.
Due to Claims \ref{claim:sl2} and \ref{claim:im}, 
the bases of the $D$-saturated 
set $X_0$ are bounded triply exponentially in $n$.
Then every base of every next $D$-saturated set
\[
    X_{i+1} \ := \ X_{i} \ \cup \ 
    ({\stepp -} \,\circ\, {\stepp{0}^\ast}) \, \circ\,  X_{i},
\]
as $X_i \subseteq X_{i+1}$,
is either smaller than some base of $X_i$, or is 
\emph{newly created}, namely
 $\minf{X_{i+1}}(q,r) = m < \infty$ while $\minf{X_i}(q,r)=\infty$
 for some $q\in Q$ and $r\in\setfromto 0 {D-1}$.
 A newly created base may appear at most $\mysize Q \cdot D$ 
 times (exponentially many times) during the whole iteration.
 Again due to Claims \ref{claim:sl2} and \ref{claim:im},
%  \begin{claim}\label{claim:sl3m}
% The mapping $S \mapsto (\stepp - \circ \, \stepp{0}^*)^m \,\circ\, S$
%   is triply exponentially bounded.
% \end{claim}
 if a new base appears in $X_{i+1}$ then
 \[
    \norm{X_{i+1}} \leq \norm{X_i}^{O(1)} \cdot \expf 3(\poly n),
 \]
%  \lukasz{Probably it would be possible to prove that $X_{i+1}$ is only multiplied 
%  by some triply exponential factor, without the exponent $d$.
%  However, at the moment I don't see how, and the complexity is the same anyway.}
 otherwise 
 $\norm{X_{i+1}} \leq \norm{X_i}$.
%  \sla{bylo $\norm{X_{i+1}} \leq \max\{\norm{X_i}, D\}$.}
 %larger than a base of $X_i$ by
 %at most triply-exponential multiplicative factor, due to 
 %\Cref{claim:sl3}.
In consequence, for every $X_i$ we 
have 
\begin{align} \label{eq:3e}
\norm{X_i} \leq \bigl(\norm{X_0} \cdot 
\expf 3(\poly n)\bigr)^{2^{O(\mysize Q \cdot D)}},
\end{align}
% \sla{to nie jest w formacie triply expoentially bounded}
% for some polynomial $p'$,
which yields a triply exponential bound on $\norm{X_i}$.
Therefore, the total number of iterations of \eqref{eq:tra} 
before 
a fixpoint is reached is at most triply exponential as well.
% Multiplying a triple exponential value exponentially many times
% yields still a triple exponential quantity,  
%therefore all newly created bases are bounded triply exponentially.
%In consequence, the total number of iterations of \eqref{eq:tra} 
%before 
%a fixpoint is reached is at most triply exponential as well.
Our reasoning was not specific to the starting set $\{t\}$,
and we have actually shown:

\begin{claim}\label{claim:sl4}
  Given a $k$-exponentially semilinear set $S$, $k\geq 3$, 
  the set $\Phi_3 \circ S$ is 
  $k$-exponentially semilinear and computable in
  at most $k$-exponentially many iterations of \eqref{eq:tra}.
\end{claim}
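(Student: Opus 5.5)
We redo the analysis of the first part of the proof of \Cref{lem:compbackfor}, now starting from an arbitrary $S$ instead of $\{t\}$, and tracking norms through the iteration \eqref{eq:tra}.

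\textbf{Initial set.} By \Cref{claim:sl,claim:sl2,claim:circ}, the relation ${\stepp -}\circ{\stepp{0}^*}$ is $3$-exponentially semilinear; composing an exponentially and a $3$-exponentially semilinear relation raises norms only polynomially. By \Cref{claim:im}, $X_0 = ({\stepp -}\circ{\stepp{0}^*})\circ S$ then satisfies $\norm{X_0}\leq (O(1)+\norm S\cdot \expf 3(\poly n))^{O(1)}$. Since $k\geq 3$, this is at most $\expf k(\poly n)$. Source pumpability \eqref{eq:spump} makes $X_0$ $D$-\saturated, so $\minf{X_0}$ is a well-defined description. Its bases are bounded by $\norm{X_0}$ times a factor of $D$, which is exponential by \Cref{claim:bounds}.

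\textbf{Iteration.} We then follow the dichotomy already used above. In passing from $X_i$ to $X_{i+1}$, every value of $\minf{\cdot}$ either weakly decreases or is newly created (changes from $\infty$ to finite). New creations occur at most $\mysize Q\cdot D$ times, which is exponentially many.

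When no new base appears, $\norm{X_{i+1}}\leq\norm{X_i}$. Here it is essential to measure the norm through the canonical $D$-\saturated representation $\minf{\cdot}$ with period $D$, rather than through whatever representation \Cref{claim:im} happens to output. Otherwise norms could creep up even in non-creating steps.

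When a new base appears, \Cref{claim:im} gives $\norm{X_{i+1}}\leq \norm{X_i}^{O(1)}\cdot\expf 3(\poly n)$. Repeated polynomial blow-ups over $2^{\poly n}$ creation events yield
\[
\norm{X_i}\leq \bigl(\norm{X_0}\cdot \expf 3(\poly n)\bigr)^{2^{O(\mysize Q\cdot D)}}.
\]
Since $D\leq \expf 1(\poly n)$, the outer exponent is doubly exponential. Taking logarithms, $\log\norm{X_i}\leq \expf 2(\poly n)\cdot \log\expf k(\poly n)$, which is at most $\expf{k-1}(\poly n)$ for $k\geq 3$. Hence $\norm{X_i}\leq\expf k(\poly n)$. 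This is exactly where $k\geq 3$ is needed: a doubly exponential exponent is absorbed into a $k$-exponential quantity only when $k\geq 3$.

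\textbf{Number of iterations.} Each non-fixpoint step strictly decreases some entry of $\minf{\cdot}$ or creates one. Each entry is at most $\expf k(\poly n)$, and there are at most $\mysize Q\cdot D$ entries. So the number of steps is at most $\mysize Q\cdot D\cdot(\expf k(\poly n)+1)$, which is $k$-exponential. The fixpoint is $\Phi_3\circ S = S\cup X_j$, and adding $S$ does not increase the norm beyond $\expf k(\poly n)$.

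The main obstacle I expect is the norm bookkeeping in the middle step. One must check that passing to the canonical $\minf{}$ form with period $D$ after each iteration costs only polynomial factors. One must also verify that the iterated-exponent estimate indeed collapses to $k$-exponential.
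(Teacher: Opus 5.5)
Your proposal is correct and takes essentially the same route as the paper. You bound $X_0$ via Claims~\ref{claim:sl2} and~\ref{claim:im}. You note that the bases of the $D$-\saturated iterates only decrease or are newly created, at most $\mysize Q\cdot D$ times. You derive $\norm{X_i}\leq\bigl(\norm{X_0}\cdot\expf 3(\mathrm{poly}(n))\bigr)^{2^{O(\mysize Q\cdot D)}}$ and count iterations through the $\minf{\cdot}$ representation. Measuring norms in the canonical $\minf{\cdot}$ representation, and your logarithm calculation showing the doubly exponential exponent is absorbed when $k\geq 3$, make explicit what the paper leaves implicit.
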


% \smallskip

Concerning the second part of \Cref{lem:compbackfor},
we need to bound the number of iterations 
of the transformation
\begin{align}\label{eq:tra2}
S \quad \mapsto \quad S \ \cup \ f(S),
\qquad \text{ where } \quad
f(S) \ = \ S \ 
    \circ \ \bigl({\stepp{0}^*} \ \circ \ \Phi_3 \ \circ \ {\stepp +}\bigr),
\end{align}
needed to compute the
fixpoint $\{s\} \,\circ\, \Phi_1$, starting from
$X_0 =  f\bigl(\{s\}\bigr)$. 
We should also provide an upper bound on computation of $X_0$,
and computation of every iteration
$
X_{i+1}  :=  X_i  \,\cup\, f\bigl(X_i\bigr).
$

\medskip

We start by analyzing the computation of $X_0 = f(\{s\})$.
By \Cref{claim:sl} the set $(\stepp + \circ \ X_{q,r})$
is exponentially semilinear.
Then by \Cref{claim:sl4} 
applied to $k=3$ and 
$S= (\stepp + \circ \ X_{q,r})$, the set 
$S'=\Phi_3 \,\circ\, ({\stepp +}\,\circ \,X_{q,r}) = 
(\Phi_3 \, \circ \, {\stepp +})\,\circ \,X_{q,r}$
is computable in at most triply exponentially many iterations
of the transformation \eqref{eq:tra}, say $m$ iterations.
Then intersection $Y$ of $S'$
with the set $(\{s\} \, \circ \, {\stepp{0}^*})$, 
\[
    Y \ = \ (\{s\} \, \circ \, {\stepp{0}^*}) \ \cap \ 
\bigl((\Phi_3 \, \circ \, {\stepp +})\, \circ \,X_{q,r}\bigr),
\]
has
(relying on Claims \ref{claim:sl2} and \ref{claim:cap})
bases and periods of at most triply exponential norm. 
Assume, \mywlog, that $m=2^\ell$ for some doubly exponential $\ell$, 
an let $R = ({\stepp -} \, \circ \, {\stepp{0}^*})^{\leq m}$.
The relation $R$ is computable by $\ell$-fold self-composition
$T\mapsto T \cup T^2$ starting from $T=({\stepp -} \, \circ \, {\stepp{0}^*}) \cup \text{Id}$,
each self-composition increasing the norm 
polynomially, which implies:
\begin{claim} \label{claim:sl5}
  The relation $R$ is 4-exponentially semilinear.
%   and the norms of bases and periods of its semilinear description
%   are bounded triply-exponentially in $n$.
\end{claim}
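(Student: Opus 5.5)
We argue by induction on the number of self-compositions, tracking the norm of the base-period representation. We start from the relation $T_0 = ({\stepp -} \circ {\stepp{0}^*}) \cup \text{Id}$. By \Cref{claim:sl}, $\stepp -$ is exponentially semilinear. By \Cref{claim:sl2}, $\stepp{0}^*$ is 3-exponentially semilinear. Hence \Cref{claim:circ} gives
\[
\norm{{\stepp -} \circ {\stepp{0}^*}} \leq \bigl(O(1) + \expf 1(\poly n)\cdot \expf 3(\poly n)\bigr)^{O(1)} \leq \expf 3(\poly n).
\]
The identity relation on $Q\times\N$ is linear with norm $1$, and adding it as a union only adds base-period pairs. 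So $T_0$ is 3-exponentially semilinear. Moreover, $T_k := T_{k-1}\cup T_{k-1}\circ T_{k-1}$ equals $({\stepp -} \circ {\stepp{0}^*})^{\leq 2^k}$, by a straightforward induction using that $T_0$ contains the identity. With $m = 2^\ell$ we therefore get $R = T_\ell$.

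Next comes the norm recurrence. \Cref{claim:circ} gives $\norm{T_{k-1}\circ T_{k-1}} \leq (c_1 + \norm{T_{k-1}}^2)^{c_2}$ for constants $c_1,c_2$ independent of $k$ and of $\mach$. The union with $T_{k-1}$ does not increase the norm. Hence $\norm{T_k} \leq \norm{T_{k-1}}^{c}$ for some absolute constant $c$, valid once $\norm{T_{k-1}}\geq 2$, which we may assume. Unfolding gives
\[
\norm{T_\ell} \leq \norm{T_0}^{c^\ell}.
\]
Write $\norm{T_0} \leq 2^{\expf 2(p(n))}$ and $\ell \leq \expf 2(p'(n))$. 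Then
\[
\log \norm{T_\ell} \leq c^{\ell}\cdot \expf 2(p(n)) = 2^{\ell \log c + 2^{p(n)}} \leq \expf 3(\poly n),
\]
so $\norm R \leq \expf 4(\poly n)$. Effectiveness follows because each composition and union is computable from base-period representations, again by \cite[Thm.~6]{taming}, and only $\ell$ such steps are performed.

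The main obstacle is the bookkeeping in \Cref{claim:circ}. The exponent $O(1)$ there must genuinely be uniform, independent of the number of linear components and of $k$. Otherwise the iterated bound $\norm{T_0}^{c^\ell}$ would break, since $c$ would itself depend on $\ell$. The size of the state set $Q$ must also enter only polynomially. I would check this by encoding $Q$ into the counter as in the proof idea for \Cref{claim:circ}: map $(q,a)\mapsto a|Q|+q$, which multiplies norms by $|Q|$. Then I would apply the fixed-dimension bound of \cite[Thm.~6]{taming} in dimension $3$, whose polynomial degree depends only on the dimension. The number of linear components may grow, but it affects only running time and not the norm, so the 4-exponential norm bound suffices for the claim.
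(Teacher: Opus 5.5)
Your proposal is correct and follows the paper's own argument: $\ell$-fold self-composition $T\mapsto T\cup T\circ T$ starting from $({\stepp -}\circ{\stepp{0}^*})\cup\mathrm{Id}$, with \Cref{claim:circ} giving a polynomial norm increase at each step. You also write out the arithmetic that the paper leaves implicit, and it is right: $\norm{T_0}^{c^\ell}$ with $\norm{T_0}\le\expf 3(\poly n)$ and doubly exponential $\ell$ stays below $\expf 4(\poly n)$, provided the exponent in \Cref{claim:circ} is uniform, as you note.
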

We have $\Phi_3 \circ ({\stepp +}\circ X_{q,r}) =
R \circ ({\stepp +}\circ X_{q,r})$.
Now, a key observation is that once $Y$ is nonempty,
there is a pair $(y,y')\in Y\times ({\stepp +} \circ X_{q,r})$
such that
$ (y,y') \in R$,
which implies that the set
\[ Y \ \circ \ R \ \circ \ {\stepp +}
\]
has necessarily nonempty intersection with $X_{q,r}$, and moreover
the intersection contains, due to Claims \ref{claim:sl}, \ref{claim:circ}, 
\ref{claim:im} and \ref{claim:sl5}, 
a number
which is at most 4-exponential in $n$.
% \sla{nie, wzrost jest wiekszy niz triply exponential; trzeba wzmocnić 
% \Cref{claim:sl3a}}
This provides a 4-exponential upper bound on the number of iterations 
of the exhaustive search 
in the proof of \Cref{lem:compbackfor}, and thus also the same
% , relying on \Cref{claim:sl4}, the 
bound on $\norm{X_0}$.
Testing candidates in the exhaustive search requires only
4-exponentially many iterations of \eqref{eq:tra}, due to
\Cref{claim:sl4} (applied to $k=4$).

When computing the iteration $S \mapsto S \cup f(S)$,
we only apply \Cref{claim:sl4} to sets of the form
${\stepp +} \circ X_{q,r}$, which do not themselves depend
on the input set $S$. As such,
the bound $m$ on the number of iterations does not depend
on the input set $S$ either,
and in the same way as above
we obtain the same 4-exponential bound on the norm of every
subsequent iterate $X_{i+1}=X_i \cup f(X_i)$.

Finally, the 4-exponential bound on the number of iterations 
\eqref{eq:tra2} before a fixpoint is reached in shown similarly
as in the first part.
Similarly as in \eqref{eq:3e} we have
\[
\norm{X_i} \leq \bigl(\norm{X_0} \cdot 
\expf 4(\poly n)\bigr)^{2^{O(\mysize Q \cdot D)}},
\]
which yields a 4-exponential bound on $\norm{X_i}$.
This completes the complexity analysis of the second part of 
\Cref{lem:compbackfor}.

\medskip

% Applying the pushdown automata saturation procedure of \cite{BEM97} 
% to $\mach'$, we conclude that
% the bases and period of $\stepp 0 \circ S$ and $S \circ \stepp 0$
% can be larger than 
% the corresponding quantities of $S$ by at a value at most 
% doubly exponential in $n$.
% Also, composition with $\stepp -$ or $\stepp +$ cannot increase
% bases or periods by larger quantity, due to the exponential bounds
% on bases and periods of semilinear descriptions given by combination
% of \cite{Blondin15} and \cite{Pottier91}.
% We apply these observations to bound the bases of $D$-\saturated sets 
% $X_i$ arising in the iterative computation in the proof of 
% \Cref{lem:compbackfor}.
% In every iteration, either a base is lowered, or a new base appears,
% namely $\minf{X_{i+1}}(q,r) = m < \infty$ while $\minf{X_i}(q,r)=\infty$.
% Whenever this happens, due to the above observations
% the new value $m$ is larger
% than the maximal base in $X_i$ by at most doubly exponential value.
% As this happens at most $\mysize Q D$ times, 
% the largest base ever appearing in any of $X_i$ is at most
% doubly exponential.

\medskip

To summarize, we have obtained the bounds on the norms of 
representations of computed semilinear sets, and on the numbers
of iterations performed in the course of computation.
The bounds are 4-exponential in $n$, the size of unary representation,
and hence at 5-exponential in the size of binary representation.
Combined with an exponential blowup incurred by a necessity to generate
all bases and periods, this yields
the elementary running time of our decision procedure.
As the only known lower bound is \pspace-hardness
(in binary representation),
narrowing the apparent complexity gap is left as future research.